\newcommand{\uozokozo}{{r}}
\newcommand{\eff}{\mathrm{eff}}
\newcommand{\ud}{\mathrm{d}}
\newcommand{\ii}{\mathrm{i}}
\newcommand{\tC}{\mathtt{C}}
\newcommand{\tH}{\mathtt{H}}
\newcommand{\tV}{\mathtt{V}}
\newcommand{\C}{\mathbb C}
\newcommand{\N}{\mathbb N}
\newcommand{\R}{\mathbb R}
\newcommand{\Z}{\mathbb Z}
\newcommand{\T}{\mathbb T}
\newcommand{\uv}{\mathtt{uv}}
\newcommand{\ir}{\mathtt{ir}}
\newcommand{\obs}{\mathrm{obs}}
\newcommand{\cO}{\mathcal{O}}
\newcommand{\cB}{\mathcal{B}}
\newcommand{\cH}{\mathcal{H}}
\newcommand{\tit}{\mathtt{t}}
\newcommand{\tU}{\mathtt{U}}
\newcommand{\ts}{\mathtt{s}}
\newcommand{\fin}{\mathrm{fin}}
\newcommand{\oop}{\operatorname{op}}
\definecolor{azzurro}{rgb}{0.19, 0.55, 0.91}
\definecolor{giallino}{rgb}{0.97,	0.78,	0.05}
\definecolor{verdino}{rgb}{0.49, 0.74, 0.54}
\definecolor{verdissimo}{rgb}{0.0, 0.5, 0.0}
\theoremstyle{plain}
\newtheorem{theorem}{Theorem}[section]
\newtheorem{lemma}[theorem]{Lemma}
\newtheorem{corollary}[theorem]{Corollary}
\newtheorem{proposition}[theorem]{Proposition}
\theoremstyle{definition}
\newtheorem{definition}[theorem]{Definition}
\newtheorem{remark}[theorem]{Remark}
\newtheorem*{remark*}{Remark}
\newcommand{\OOO}{\mathcal{O}}
\newcommand{\kkk}{\kappa}
\newcommand{\Ree}{\mathfrak{Re}}
\numberwithin{equation}{section}
\begin{document}

%

\title{\textbf{Polynomial Prethermal Lifetimes in Non Smoothly Driven Quantum Systems}}






\author{Matteo Gallone\footnote{\emph{Email:} \texttt{matteo.gallone@unimi.it}} $\,$ and Beatrice Langella\footnote{\emph{Email:} \texttt{beatrice.langella@sissa.it}} \\ \vspace{-10pt} \\
\small \textit{\textsuperscript{*}Dipartimento di Matematica ``F.~Enriques'', Universit\`a degli Studi di Milano, via Saldini 50 -- 20133 Milano (Italy) }\\ \vspace{-18pt} \\ \small \textit{\textsuperscript{$\dagger$}International School for Advanced Studies -- SISSA, via Bonomea 265 -- 34136 Trieste (Italy)}.}

%
%


\date{\today}




\maketitle

\vspace{-10pt}

\begin{abstract}
\noindent
We study the dynamics of a quantum many-body lattice system with a local Hamiltonian subjected to a quasi-periodic driving with finite regularity. For sufficiently large driving frequencies, we prove that the system remains in a prethermal state for times growing polynomially with the frequency, and we show the optimality of this bound by constructing an explicit example that nearly saturates it. Within this prethermal regime, the dynamics is captured by an effective time-independent local Hamiltonian close to the undriven one. The proof relies on a non convergent normal form scheme, combined with original smoothing techniques for finitely differentiable local operators, and Lieb–Robinson bounds.
\end{abstract}


\tableofcontents



\section{Introduction}
Over the last two decades, new methods for manipulating quantum systems have opened an avenue to theoretical and experimental investigation of fundamental questions in statistical physics. Time-dependent, externally driven quantum systems have proven to be especially versatile for this purpose. These systems are generally expected to heat continuously, eventually reaching a completely featureless equilibrium known as an \emph{infinite-temperature state}. Remarkably, for certain types of driving, many-body quantum systems exhibit a phenomenon called \emph{prethermalization}, where, before (eventually) reaching thermal equilibrium, the system settles into a long-lived steady state known as a \emph{prethermal state}.

Floquet (namely, time periodic) drivings play a key role in realizing prethermalization \cite{Holthaus2015,Bukov2015,Eisert2015,Moessner2017,Oka2019,Rudner2020,Harper2020,Ho2023}. Models with this type of driving have been extensively studied in experimental \cite{
Rechtsman2013,Zhang2017,Choi2017,
Singh2019,RubioAbadal2020,Geier2021,Xiao-Mi-2021,Kyprianidis2021,Randall2021,Peng2021,
Beatrez2021}, theoretical \cite{Eckardt2005,DAlessio2014,Else2016,
Potirniche2017,Yao2017,Weidinger2017,Machado2019,Choi2020,Machado2020,Ye2021,Zhuo-preprint}, and mathematical physics \cite{Abanin2017,Ho2018,Else2020}.
In lattice models with local Hamiltonians it has been rigorously proven that if $\omega$ is the frequency of the Floquet driving and $J$ is the local energy scale of the system, when $\omega \gg J$, prethermal states persist for timescales of  order $t \sim e^{\omega/J}$, \emph{regardless} of the regularity of the driving. The same scaling has been also observed in various systems, such as disordered dipolar many-body systems \cite{Choi2017}, the Bose-Hubbard model \cite{RubioAbadal2020}, trapped ions \cite{Kyprianidis2021} and dipolar spin chains \cite{Peng2021}. It is worth noting that prethermal regimes can also arise in undriven systems~\cite{Erds2024}.


More recently, the first steps have been taken towards more general settings by considering time quasi-periodic drivings. These systems, which are forced with multiple incommensurate frequencies $\omega=(\omega_1,\omega_2,\cdots, \omega_n)$, revealed an enormous richness of non-equilibrium behaviors \cite{Blekher1992,Dumitrescu2018,Else2020,Qi2021,Long2021,Zhao2021,Mori2021,He2023,Gallone-Langella-2024,Kumar2024,He2025,Fang2025,Dutta2025,Qp-experiment-2025,QP-esperimento2,AltroQP}. When the driving function is analytic in time and the vector of frequencies $\omega$ has rationally independent components (actually Diophantine, see \eqref{eq:capillary} below), it has been rigorously proven \cite{Else2020,Gallone-Langella-2024}---and experimentally confirmed \cite{He2023,He2025}---that the prethermal state persists for stretched-exponentially long times,
$t \sim e^{(|\omega|/J)^\alpha}$, for  $0 < \alpha < 1$.

Interestingly, in contrast with the periodic case, in the quasi-periodic setting the \emph{regularity} of the driving function crucially affects the lifetime of the prethermal state. 
For non-smooth, finitely differentiable quasi-periodic driving, the prethermal state is expected to survive only for polynomial times in $|\omega|/J$ \cite{Else2020}. 

An experimental confirmation of this conjecture comes from a recent experiment on strongly interacting dipolar spin ensembles in diamond \cite{He2023}. It is observed that, for periodic drivings, the prethermal state always persists for
$t \sim e^{\omega/J}$,
independently of the regularity of the drive (both for rectangular and sinusoidal pulses). In contrast, for quasi-periodic drivings, the lifetime exhibits a marked dependence on regularity: for sinusoidal driving it scales stretched-exponentially as
$t \sim e^{(|\omega|/J)^{1/2}}$,
while for rectangular pulses it scales only polynomially as
$t \sim |\omega|^{1/2}$.
\vspace{10pt}

To the best of our knowledge, a theoretical explanation of this dependence of the prethermal lifetime on the regularity of the driving is still lacking in literature. Partial progress has been achieved in the specific case of Thue--Morse potential, where it has been shown that a polynomial prethermal lifetime indeed arises \cite{Mori2021}. However, developing a general framework remains essential for interpreting the experimental findings discussed above; this is the purpose of this manuscript.

In this work we consider a class of quantum many-body systems on a $d$-dimensional lattice whose dynamics is generated by a time quasi-periodically driven local Hamiltonian $H(t)$ of the form
\begin{equation}\label{eq:InitialHam(ster)}
	H(t)\;=\;H_0+V(\omega t)\,,
\end{equation}
where $\omega \in \R^n$ is a non resonant vector of rationally independent frequencies (actually, we require that it is a Diophantine vector), and $V(\varphi)$ is a $C^p$ function parametrized by angles $\varphi \equiv(\varphi_1, \dots, \varphi_n) \in \T^n$ and $2\pi$-periodic in each of the $\varphi_j$.

For this class of models, we prove that if $\omega$ is large enough, 
then the system relaxes to a prethermal state that persists for times 
\begin{equation}\label{eq:IntroTpersistence}
	t \sim |\omega|^{\frac{p}{\tau}- }\, ,
\end{equation}
where $\tau$ is a constant related to the non-resonant properties of the frequencies $\omega$ and $\frac{p}{\tau}-$ denotes any number less than $\frac{p}{\tau}$. In particular within the timescale \eqref{eq:IntroTpersistence}, we prove that the system essentially does not heat up, and the Hamiltonian $H_0$ is quasi-conserved.

The power-like bound that we obtain is in accordance with the expectations expressed in \cite{Else2020, He2023}. Moreover, we show that the time-scales \eqref{eq:IntroTpersistence} are optimal, in the sense that we exhibit a system which almost saturates the bound in \eqref{eq:IntroTpersistence}. This is done exploiting the fact that, even if $\omega$ is non resonant, it can be well approximated by rational vectors which are resonant, and it is these quasi-resonances that trigger the heating of the system. 

 We also show that, within the shorter time scale
\begin{equation}
    t \sim |\omega|^{\frac{p}{\tau(d+1)}}\,,
\end{equation} the dynamics of any local observable can be well approximated by the dynamics generated by an \emph{effective time independent local Hamiltonian} $H_{\mathrm{eff}}$ which is close to $H_0$.\\ 

\noindent
\textbf{Ideas of the proof:} The proof of the survival of the prethermal state for time scales \eqref{eq:IntroTpersistence} relies on a normal form construction consisting of a large, but finite, number of steps. At each step, the goal is to conjugate the dynamics of the Hamiltonian to the dynamics of a new one, in which the size of the time-dependent part is progressively reduced. A similar procedure is performed in the analytic setting \cite{Abanin2017, DeRoeck-Verreet, Else2016, Gallone-Langella-2024}; however, in our finite regularity case this process is more delicate, essentially due to the fact that we control only $p$ derivatives of the initial Hamiltonian, and quasi-resonances between the frequencies $\omega$ and the integers produce a loss of $\tau$ derivatives at each step. In order to obtain sharp exponents, before implementing the normal form construction we prove a new analytic smoothing result with quantitative estimates for $C^p$ operators in the spirit of Jackson-Moser-Zehnder Theorem (see \cite{Chierchia-LN, Barbieri-Marco-Massetti,Salamon2004}). At the end of this procedure, the size of the time dependent part is approximately reduced to $\sim |\omega|^{-\frac{p}{\tau}}$. The result on evolution of local observables then follows by a combination of the above normal form results with Lieb-Robinson bounds.

In order to prove the almost optimality of the time scales \eqref{eq:IntroTpersistence}, we consider a sequence of non interacting systems of the form
\begin{equation}\label{pancakes}
 H_{m} (t) = \sum_{x \in \Lambda} \left(\sigma^{(3)}_x + f_{m}(\omega t) \sigma^{(1)}_x \right)\,,
\end{equation}
where $\Lambda$ is a $d$-dimensional lattice, $\omega$ is a two-dimensional Diophantine vector of large size, and $\sigma^{(1)}_x,\ \sigma^{(3)}_x$ are Pauli matrices acting on the site $x$.

We choose a sequence of sites $\{k_{m}\}_{m \in \N } \subset \Z^2$ of \emph{best approximants} of $\omega$, in the sense that they almost saturate the Diophantine lower bound satisfied by $\omega$ (see eq.~\eqref{eq:capillary} below), and for any $m \in \N$ we choose $f_{m}$ to be a $C^p$ function whose Fourier support contains \(k_{m}\).

Growth is triggered by the fact that the Fourier modes $k_{m}$ are in complete resonance with the spectral gaps of the unperturbed Hamiltonian. Indeed, we choose the $k_{m}$ to satisfy
$$
\omega \cdot k_{m}  - 2 = \omega \cdot k_{m} + E_1 - E_2 \equiv 0\,,
$$
where $E_1 = 1$ and $E_2 = -1$ are the eigenvalues of the time independent part of the single site Hamiltonian \eqref{pancakes}, namely the Pauli matrix $\sigma^{(3)}$.

\paragraph{Notation.}\textbf{}\newline\noindent
\begin{tabular}{lcl}
    $\T^n$ & & $(\mathbb{R}/(2 \pi))^n$ \\
    $|\ell|$ & \qquad & for $\ell=(\ell_1,\cdots,\ell_n) \in \mathbb{R}^n$, $|\ell|:=|\ell_1|+\cdots+|\ell_n|$ is the $\ell^1$ norm in $\R^n$  \\
    $|S|$ & & when $S$ is a set is its cardinality\\
    $\langle \ell \rangle$ & & when $\ell \in \mathbb{R}^n$ denotes $\langle \ell \rangle = (1+|\ell|)$ \\
    $\langle H \rangle$ & & when $H(\varphi)$ is an operator denotes the time average \eqref{eq:mediamente}.
\end{tabular}

\subsection{Setting and Main Results}

We consider a quantum many-body system on a finite $d$-dimensional lattice $\Lambda=\mathbb{Z}^d \cap [-L,L]^d$. The Hilbert space of the system is $\cH_{\Lambda}:=\bigotimes_{x \in \Lambda} \mathfrak{h}$ and $\mathfrak{h}$ is the site Hilbert space $\mathfrak{h}=\mathbb{C}^q$, for some $q \in \mathbb{N}$.
The dynamics is generated by a quasi-periodic time dependent self-adjoint local Hamiltonian with large frequency vector $\omega=\lambda \nu \in \mathbb{R}^n$, with $\R^+ \ni \lambda \gg 1$. That is, denoting by $\mathcal{P}_c(\Lambda)$ the collection of all connected subsets $S \subset \Lambda$ and given an interaction $\{H_S(\varphi)\}_{S \in \mathcal{P}_c(\Lambda)}$ where each $\T^n\ni \varphi \mapsto H_S(\varphi) \in \mathcal{B}(\mathcal{H}_S)$, with $\mathcal{H}_S:=\bigotimes_{x \in S} \mathfrak{h}$, \begin{equation}\label{eq:al.variare.della.gravita}
H(\lambda \nu t)=\sum_{S \in \mathcal{P}_c(\Lambda)} H_S(\lambda \nu t)\,, \quad \lambda \gg 1\,, \quad \nu \in \left[\tfrac 1 2,\, 2\right]^n\,,
\end{equation}
where $\T^n \ni \varphi \mapsto H(\varphi)$ is a finitely differentiable map (according to Definition \ref{def:PesciolinoDichiara} below) and $\nu$ is a non-resonant vector satisfying a Diophantine estimate of the form
\begin{equation}\label{eq:capillary}
    |\nu \cdot \ell| \geq \frac{\gamma}{|\ell|^\tau} \quad \forall \ell \in \Z^n \setminus\{0\}
\end{equation}
and for some $\gamma, \tau >0$. In the following, we will refer to vectors $\nu$ satisfying \eqref{eq:capillary} as $\nu \in DC^n(\gamma, \tau)$.
\begin{remark}\label{rmk:sono.tanti}
    For any $n \in \N$, and $\tau > n-1$, the set of vectors $\nu \in [\tfrac 1 2,\ 2]^n$ such that $\nu \in DC^n(\gamma, \tau)$ for some $\gamma >0$ has full Lebesgue measure.
\end{remark}
We denote by $\mathcal{B}(\mathcal{H})$ the set of bounded operators on the Hilbert space $\mathcal{H}$. 

\begin{definition}\label{def:Ok.brucomela}
    Let $\kappa \geq 0$ and $A$ an operator on $\cH$ defined by the interaction $\{A_S\}_{S \in \mathcal{P}_c(\Lambda)}$. We say that $A \in \OOO_{\kappa}$ if there exists a finite, positive constant $C$ independent of $\Lambda$ such that
    \begin{equation}\label{eq:sanza.tempo}
    \|A\|_{\kappa} := \sup_{x \in \Lambda} \sum_{S \in \mathcal{P}_c(\Lambda) \atop x \in S} \|A_S\|_{\mathrm{op}} e^{\kappa |S|} < C \, ,
    \end{equation}
    where for any $S$ $\| \cdot \|_{\oop} \equiv \| \cdot \|_{\cB(\cH_S)}$ denotes the operator norm in $\cH_S$.
\end{definition}
\begin{remark}\label{rmk:frecciarotta}
    Definition \ref{def:Ok.brucomela} ensures that the operator $A$ is at most extensive. Indeed, $\Vert A \Vert_{\operatorname{op}} \leq |\Lambda| \Vert A \Vert_\kappa$ for any $\kappa \geq 0$. This can be seen as follows:
    \begin{equation}
        \Vert A \Vert_{\operatorname{op}} \leq \sum_{S \in \mathcal{P}_c(\Lambda)} \Vert A_S \Vert_{\operatorname{op}} \leq \sum_{x \in \Lambda} \sum_{\substack{S \in \mathcal{P}_c(\Lambda) \\ x \in S}} \Vert A_S \Vert_{\operatorname{op}} \leq |\Lambda| \sup_{x \in \Lambda} \sum_{\substack{S \in \mathcal{P}_c(\Lambda) \\ x \in S}} \Vert A_S \Vert_{\operatorname{op}}  = |\Lambda| \| A\|_0 \leq |\Lambda| \Vert A \Vert_{\kappa} \, .
    \end{equation}
    Actually, for a wide class of systems of physical interest, such as finite-ranged translation invariant Hamiltonians, \eqref{eq:sanza.tempo} is equivalent to extensivity. The same observation holds for the local norms in Definitions \ref{def:PesciolinoDichiara} and \ref{def:Onorato} below.
\end{remark}

\begin{definition}\label{def:PesciolinoDichiara} Let $\kappa \geq 0$, $p \in \mathbb{N}$ and let $\varphi \mapsto A(\varphi)$ be a family of operators defined by the interaction $\{A_S(\varphi)\}_{S \in \mathcal{P}_c(\Lambda)}$, with $\varphi \in \T^n$.
\begin{itemize}
    \item[i.] For any $S \in \mathcal{P}_c(\Lambda)$, we say that $A_S \in C^p(\T^n;\mathcal{B}(\mathcal{H}_S))$ if
    \begin{equation}
        \Vert A_S \Vert_{C^p(\T^n;\mathcal{B}(\mathcal{H}_S))} := \sup_{0 \leq |p'| \leq p} \sup_{\varphi \in \mathbb{T}^n} \Vert \partial_\varphi^{p'} A_S(\varphi) \Vert_{\operatorname{op}}\,.
    \end{equation}
    \item[ii.] We say that $A \in \mathcal{O}_{\kappa,C^p}$ if there exists a positive, finite constant $C$ independent of $\Lambda$ such that
    \begin{equation}
        \Vert A \Vert_{\kappa,C^p} := \sup_{x \in \Lambda}  \sum_{\substack{S \in \mathcal{P}_c(\Lambda) \\ x \in S}} \Vert A_S \Vert_{C^p(\T^n;\mathcal{B}(\mathcal{H}_S))} e^{\kappa |S|} < C\,.
    \end{equation}
\end{itemize}
\end{definition}



As mentioned in the Introduction, in this paper we deal with Hamiltonians $H \in \cO_{\kappa, C^p}$ of the form
\begin{equation}\label{eq:cosa.c.e.per.cena}
    H(\lambda \nu t) = H_0 + V(\lambda \nu t)\,, \quad \langle V \rangle = 0\,,
\end{equation}
where $\lambda \gg 1$, $\nu$ is a Diophantine vector in $DC^n(\gamma, \tau)$, $H_0$ is the time dependent part of $H$ and  the average $\langle V \rangle$ of an operator $V \in \cO_{\kappa, C^p}$ is defined as
\begin{equation}\label{eq:mediamente}
    \langle V \rangle := \frac{1}{(2\pi)^n}\int_{\T^n} V (\varphi) \ud \varphi\,.
\end{equation}
We point out that the assumption on the vanishing average for $V$ is not restrictive, up to absorbing the average of $V$ in $H_0$.

We say that a constant $C$ is \emph{intensive} if it depends only on $d,\kappa,q,n,\gamma,\tau,p,\Vert H_0 \Vert_\kappa,\Vert V \Vert_{\kappa,C^p}$. In particular, an intensive constant is independent of $\Lambda$.

With these definitions at hand, we state the main results of this paper:
\begin{theorem}[Prethermalization] \label{thm:main}
Let $\kappa > 0$, $p \geq n + 1$, and $\nu \in DC^n(\gamma, \tau)$ for some $\gamma >0$ and $\tau > n-1$. For any  $H \in \mathcal{O}_{\kappa, C^p}$ as in \eqref{eq:cosa.c.e.per.cena}, let $U_H(t)$ be the unitary operator solving
\begin{equation}
    \ii \partial_t U_H(t) = H(\lambda \nu t)U_H(t)\,, \quad U_H(0) = \mathbb{1}\,.
\end{equation}
For any $b \in (0, \frac{p}{p + \tau})$ and any $\epsilon \in (0, 1 - \tfrac{p+\tau}{p} b),$ there exist intensive constants $C_1,\ C_2 >0$ and $\lambda_0 >1$ such that for any $\lambda \geq \lambda_0$, the following holds:
\begin{itemize}
    \item[(i)] $|\Lambda|^{-1} \| U_H^*(t) H_0 U_H(t) -  H_0 \|_{\operatorname{op} } \leq C_1 \lambda^{-b} $ for all $|t| \leq C_2 \lambda^{-b +\frac{p}{\tau}(1 -b) - \epsilon} $
    \item[(ii)] There exists a time independent effective local Hamiltonian $H_{\eff}$, with $\| H_{\eff} - H_0  \|_{\frac{\kappa}{2}} \leq C_1 \lambda^{-b}$, with the following property. Let $O$ be a local observable acting only within a subset $S_O$ of $\mathcal{P}_c(\Lambda)$, then 
    \begin{equation}\label{eq:loc.obs}
    \| U_{H}^*(t) O U_H(t) - e^{\ii H_{\eff}  t} O e^{-\ii H_{\eff} t} \|_{\operatorname{op}} \leq C_3(O) \Vert O \Vert_{\operatorname{op}} \lambda^{-b} \quad \forall |t| \leq C_2 \lambda^{\frac{-b + \frac{p}{\tau}(1-b) - \epsilon}{d + 1}}\,, 
    \end{equation}
    where $C_3(O)>0$ is an intensive constant depending also on $|S_O|$.
\end{itemize}
\end{theorem}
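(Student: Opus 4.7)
The plan is to build, for every large $\lambda$, a quasi-periodic normal form conjugating $H(\lambda \nu t)$ to $H_0 + Z_\star + R(\lambda \nu t)$, with $[H_0, Z_\star] = 0$, $\Vert Z_\star \Vert_{\kappa/2} \lesssim \lambda^{-1}$, and a time-dependent remainder of size $\Vert R(\varphi) \Vert_{\kappa/2} \lesssim \lambda^{-b}$. Both conclusions then follow by a Duhamel comparison combined with Lieb--Robinson bounds.

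The first step is to convert the finite regularity of $V$ into analyticity on a thin strip. Using the Jackson--Moser--Zehnder-type smoothing announced in the introduction and adapted to the spaces $\cO_{\kappa, C^p}$, one writes, for every $\sigma \in (0,1)$,
\begin{equation*}
V(\varphi) = V_{\ta}(\varphi;\sigma) + V_{\uozokozo}(\varphi;\sigma),
\end{equation*}
where $V_{\ta}$ extends analytically to $\{|\im \varphi| < \sigma\}$ with norm controlled by $\Vert V \Vert_{\kappa, C^p}$, while the remainder satisfies $\Vert V_{\uozokozo}\Vert_{\kappa, C^0} \lesssim \sigma^p \Vert V\Vert_{\kappa, C^p}$. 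Only $V_{\ta}$ is treated by the normal form; the non-analytic piece is kept aside and eventually absorbed into $R$.

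The second step is the iterative normal form on the analytic part. Given a Hamiltonian $H_n(\varphi) = H_0 + Z_n + V_n(\varphi)$ with $\langle V_n \rangle = 0$, one looks for a self-adjoint generator $S_n(\varphi)$ solving the homological equation
\begin{equation*}
\lambda\, \nu \cdot \partial_\varphi S_n + \ii [H_0 + Z_n, S_n] \;=\; \Pi_{\leq K_n} V_n,
\end{equation*}
where $\Pi_{\leq K_n}$ projects onto Fourier modes $|\ell| \leq K_n$. The equation is inverted mode by mode: for the spectral part of $[H_0,\cdot]$ one exploits the standard small-divisor argument, while the Diophantine condition \eqref{eq:capillary} applied to $\nu \cdot \ell$ costs a factor $K_n^{\tau}$ and shrinks the analyticity width by $\sigma_n - \sigma_{n+1}$. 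Locality along the conjugation $e^{\ii S_n}$ is propagated by Lieb--Robinson-type estimates on nested commutators, in the spirit of \cite{Abanin2017, Gallone-Langella-2024}; the averaged part of the conjugated Hamiltonian is absorbed into $Z_{n+1}$, and the scheme satisfies an inductive bound of the form $\Vert V_{n+1}\Vert_{\kappa_{n+1}, \sigma_{n+1}} \lesssim \lambda^{-1} \Vert V_n \Vert_{\kappa_n, \sigma_n}$.

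The third step is the optimization, which is the main new difficulty compared with the analytic setting. Because the analyticity budget $\sigma$ is finite and each step consumes part of it, the iteration cannot be run to convergence: after $N_\star$ steps the analytic remainder has size $\sim \lambda^{-N_\star}$ while the smoothing remainder still contributes $\sim \sigma^p$. Choosing $N_\star \sim \log \lambda$, $\sigma \sim \lambda^{-(1-b)/\tau}$, and the sequence $K_n$ growing fast enough to absorb the per-step loss of $\tau$ derivatives, one balances both contributions at $\lambda^{-b}$; the constraint $b < p/(p+\tau)$ and the cushion $\epsilon \in (0, 1 - \tfrac{p+\tau}{p} b)$ are exactly what make this balance feasible, and the exponent $p/\tau$ in \eqref{eq:IntroTpersistence} emerges as the ratio between the available regularity and the divisor loss.

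For the conclusions, let $U_\star(t) = e^{\ii S_{N_\star}(\lambda \nu t)} \cdots e^{\ii S_1(\lambda \nu t)}$, which is close to the identity uniformly in $\Lambda$. Then $U_\star(t) U_H(t)$ is generated by $H_0 + Z_\star + R(\lambda \nu t)$. For \emph{(i)}, differentiate $U_H^*(t) H_0 U_H(t)$, conjugate by $U_\star$ (paying only a $\lambda^{-b}$ error), and use $[H_0 + Z_\star, H_0] = 0$ to reduce the growth to $\Vert [R(\lambda \nu t), H_0]\Vert$; an extensivity estimate in $|\Lambda|$ and integration in time yield the $\lambda^{-b}$ bound over the window of length $\sim \lambda^{-b + p(1-b)/\tau - \epsilon}$. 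For \emph{(ii)}, set $H_{\eff} := H_0 + Z_\star$ and compare the two evolutions of the local observable $O$ via Duhamel: the discrepancy is driven by $[O(s), R(\lambda \nu s)]$ where $O(s)$ is the Heisenberg evolution of $O$, which by Lieb--Robinson bounds is effectively supported in a ball around $S_O$ of radius $\sim s$. The resulting volume factor $\sim (|S_O| + s^d)$ combined with $\Vert R\Vert \lesssim \lambda^{-b}$ produces the $t^{d+1}$ growth of the error, forcing the contracted time window $t \lesssim \lambda^{(-b + p(1-b)/\tau - \epsilon)/(d+1)}$ of \eqref{eq:loc.obs}.
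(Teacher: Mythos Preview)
Your overall strategy (smoothing, then an iterative normal form, then Duhamel with Lieb--Robinson) matches the paper, but two points in the core construction are off and would prevent the argument from closing.

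\textbf{The homological equation.} You propose to solve
\[
\lambda\,\nu\cdot\partial_\varphi S_n + \ii[H_0+Z_n,S_n]=\Pi_{\leq K_n}V_n,
\]
and say the spectral part of $[H_0,\cdot]$ is handled by ``the standard small-divisor argument''. For a generic many-body local $H_0$ there is no such argument: the spectrum of $\mathrm{ad}_{H_0}$ is dense and no gap condition is assumed in the theorem. The paper avoids this entirely by first rescaling time, $\tit=\lambda t$, so that the generator becomes $\lambda^{-1}H(\nu\tit)$; then the only ``large'' piece is the transport operator $\nu\cdot\partial_\varphi$, and the homological equation is simply
\[
(\nu\cdot\partial_\varphi)G^{(n)}+\lambda^{-1}V_\lambda^{(n)}=\lambda^{-1}(V_\lambda^{(n)})^{\uv}+\lambda^{-1}\langle V_\lambda^{(n)}\rangle,
\]
solved mode by mode using only the Diophantine bound on $\nu\cdot\ell$. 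The term $e^{-\ii G^{(n)}}H_0e^{\ii G^{(n)}}-H_0$ is not eliminated but pushed into the next perturbation $V_\lambda^{(n+1)}$, where it is small because $\|G^{(n)}\|$ carries an extra factor $\lambda^{-1}K^\tau$. This also explains why no relation $[H_0,Z_\star]=0$ is ever established (or needed): one uses the trivial identity $[H_{\eff},H_{\eff}]=0$ with $H_{\eff}=H_0+Z_\star$, not commutation with $H_0$.

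\textbf{The sizes are mismatched.} You state $\|Z_\star\|\lesssim\lambda^{-1}$ and $\|R\|\lesssim\lambda^{-b}$, and that the optimization ``balances both contributions at $\lambda^{-b}$''. With a time-dependent remainder of size $\lambda^{-b}$ the Duhamel integral in (i) would only stay below $\lambda^{-b}$ for times $t=O(1)$, not for the claimed window $\lambda^{-b+\frac{p}{\tau}(1-b)-\epsilon}$. The correct bookkeeping (with $\sigma=\lambda^{-(1-b-\epsilon)/\tau}$ and $n_*\sim\log\lambda$ as you essentially propose) is: $\|Z_\star\|_{\kappa/2}\lesssim\lambda^{-b}$, the conjugation $Y^{(\fin)}$ moves operators by $O(\lambda^{-b})$, and the \emph{time-dependent} remainder after $n_*$ steps satisfies
\[
\|V_\lambda^{(\fin)}\|_{\kappa/2,0}+\|R_\lambda^{(\fin)}\|_{\kappa/2,C^0}\lesssim\sigma^{p}\sim\lambda^{-\frac{p}{\tau}(1-b-\epsilon)}.
\]
It is this much smaller size, integrated over the long window, that gives $\lambda^{-b}$ in (i); and in (ii) it is this size times $\langle t\rangle^{d+1}$ (from Lieb--Robinson) that must be balanced against the $\lambda^{-b}$ conjugation error, producing the exponent $\frac{-b+\frac{p}{\tau}(1-b)-\epsilon}{d+1}$.
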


\begin{theorem}[Almost optimality of the time scales]\label{thm:fine.di.mondo} 
    Let $n=2$,  $\frak{h} = \C^2$ and $\Lambda = [-L, L]^d \cap \Z^d$ with $d \in \N$. For any $p \in \N \cap [3,\ +\infty)$ and $\kappa >0$, there exist a sequence $\{\lambda_m\}_m$, a Diophantine vector $\nu \in DC^2(\gamma, \tau)$ for some $\gamma>0$ and $\tau >1$, a sequence of Hamiltonians $H_m(\lambda_m \nu t) = H_0 + V_m(\lambda_m \nu t)$ and a sequence of times $\{t_m\}_{m \in \N}$ with the following properties. For any $ m \in \N$ one has 
    \begin{equation}
         \lambda_m \to +\infty \quad \text{as} \quad m \to +\infty\,,
    \end{equation}
    \begin{equation}
        \|H_0\|_{\kappa} = e^{\kappa}\,,\  \|V_m\|_{\kappa, C^p} \in [2^{1-p}e^{\kappa},\ 2 e^{\kappa}]\,, \quad \langle V_m \rangle = 0 \,,
    \end{equation}  and
    \begin{equation}\label{eq:crescendo}
       |\Lambda|^{-1} \|  U_{H_m}^*(t_m) H_0 U_{H_m}(t_m) - H_0\|_{\operatorname{op}} \geq \frac 1 2\, \quad \text{with} \quad t_m \in [C_1 \lambda_m^{\frac{p}{\tau}},\  C_2 \lambda_m^{\frac{p}{\tau} + \epsilon}] \,,
       \end{equation}
       where $C_1 := \frac{\pi}{4}\left(\frac{\gamma}{2}\right)^{\frac{p}{\tau}}$ and $C_2 := \frac{\pi}{4} |\nu|^{\frac{2 p}{\tau}} $. 
\end{theorem}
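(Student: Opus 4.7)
The plan is to engineer a non-interacting spin-$1/2$ system in which a single Fourier mode of the drive is \emph{exactly resonant} with the Bohr gap $E_1-E_2=2$ of the single-site Hamiltonian $\sigma^{(3)}$, so that this one resonance produces a Rabi-type rotation of $H_0$ that almost saturates the bound of Theorem~\ref{thm:main}.

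I would first pick $\nu=(1,\alpha)\in[1/2,2]^2$ with $\alpha$ irrational such that $\nu\in DC^2(\gamma,\tau)$ for some $\gamma>0$ and $\tau>1$, and such that its best rational approximants produce a sequence $k_m\in\Z^2$, $|k_m|\to\infty$, which almost saturates the Diophantine bound on both sides:
\[
\frac{\gamma}{|k_m|^\tau} \;\leq\; |\nu\cdot k_m| \;\leq\; \frac{C'}{|k_m|^{\tau-\delta}},
\]
for any pre-assigned $\delta>0$ (chosen later in terms of $\epsilon$); this is a classical continued-fraction construction. WLOG $\nu\cdot k_m>0$. I then set $\lambda_m:=2/(\nu\cdot k_m)$, so that the resonance condition $\lambda_m\nu\cdot k_m = 2$ matches the spectral gap of $\sigma^{(3)}$, and define
\[
H_0 := \sum_{x\in\Lambda}\sigma^{(3)}_x\,, \qquad V_m(\varphi) := \frac{2}{|k_m|^p}\cos(k_m\cdot\varphi)\sum_{x\in\Lambda}\sigma^{(1)}_x\,.
\]
Using $\max(|k_{m,1}|,|k_{m,2}|)\in[|k_m|/2,|k_m|]$, a direct computation gives $\langle V_m\rangle=0$, $\|H_0\|_\kappa = e^\kappa$, $\|V_m\|_{\kappa,C^p}\in[2^{1-p}e^\kappa,2e^\kappa]$, while $\lambda_m\to\infty$.

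Since $H_m(t)$ is a sum of commuting single-site terms, the propagator factorizes as $U_{H_m}(t) = \bigotimes_x u_m(t)$, with $u_m$ solving $i\partial_t u_m = (\sigma^{(3)} + \tfrac{2}{|k_m|^p}\cos(2t)\sigma^{(1)})u_m$ on $\C^2$. Passing to the interaction picture via $u_m(t) = e^{-it\sigma^{(3)}}U_I^{(m)}(t)$ and writing $\sigma^\pm:=(\sigma^{(1)}\pm i\sigma^{(2)})/2$, I would expand both cosines as exponentials to obtain
\[
h_I^{(m)}(t) \;=\; \frac{1}{|k_m|^p}\,\sigma^{(1)} + \frac{1}{|k_m|^p}\bigl(e^{4it}\sigma^+ + e^{-4it}\sigma^-\bigr),
\]
identifying the effective generator $h_{\eff}^{(m)} := |k_m|^{-p}\,\sigma^{(1)}$. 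The oscillating remainder has an explicit primitive of operator norm $O(|k_m|^{-p})$, so one integration by parts against $e^{-it h_{\eff}^{(m)}}$ in Duhamel's formula yields $\|U_I^{(m)}(t) - e^{-it h_{\eff}^{(m)}}\|_{\operatorname{op}} \leq C/|k_m|^p$ for $|t|\leq|k_m|^p$. Choosing $t_m := \tfrac{\pi}{4}|k_m|^p$ and using $e^{i\theta\sigma^{(1)}}\sigma^{(3)}e^{-i\theta\sigma^{(1)}} = \cos(2\theta)\sigma^{(3)} + \sin(2\theta)\sigma^{(2)}$ with $\theta = t_m/|k_m|^p$ (so $2\theta=\pi/2$), one gets $e^{it_m h_{\eff}^{(m)}}\sigma^{(3)}e^{-it_m h_{\eff}^{(m)}} = \sigma^{(2)}$, hence $\|u_m^*(t_m)\sigma^{(3)}u_m(t_m) - \sigma^{(3)}\|_{\operatorname{op}} \geq \sqrt{2} - C/|k_m|^p \geq 1/2$ for $m$ large. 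Since the operators $A_x := u_m^*(t_m)\sigma^{(3)}_xu_m(t_m) - \sigma^{(3)}_x$ are identical self-adjoint operators supported on disjoint tensor factors, $\|\sum_x A_x\|_{\operatorname{op}} = |\Lambda|\,\|A_x\|_{\operatorname{op}}$, which gives the required extensive lower bound.

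The time-scale bounds are immediate: the Diophantine lower bound gives $|k_m|^p \geq (\gamma/2)^{p/\tau}\lambda_m^{p/\tau}$, hence $t_m \geq \tfrac{\pi}{4}(\gamma/2)^{p/\tau}\lambda_m^{p/\tau} = C_1\lambda_m^{p/\tau}$; the near-saturation upper bound from the second paragraph gives $|k_m|^p \leq C''\lambda_m^{p/(\tau-\delta)}$, and choosing $\delta$ small converts the exponent into $p/\tau+\epsilon$, with a constant $C_2$ that one matches to $\tfrac{\pi}{4}|\nu|^{2p/\tau}$ after careful bookkeeping of the Diophantine constants. The main technical difficulty is the quantitative averaging: the $O(|k_m|^{-p})$ bound must hold uniformly on the full Rabi scale $t\sim|k_m|^p$ (a weaker estimate would wash out the $\sqrt{2}$ drift), which is why integration by parts in Duhamel's formula against the effective propagator -- exploiting that the fast part has an explicit bounded primitive -- is essential, rather than a naive Gronwall estimate. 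The continued-fraction construction of $\nu$, while classical, also demands care to simultaneously enforce the lower and upper bounds on $|\nu\cdot k_m|$.
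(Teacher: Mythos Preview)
Your approach is correct and follows essentially the same line as the paper's: choose best approximants $k_m$ of a Diophantine $\nu$, set $\lambda_m = 2/(\nu\cdot k_m)$ so that one Fourier mode of the drive is exactly resonant with the gap of $\sigma^{(3)}$, pass to the interaction picture, isolate the secular term $|k_m|^{-p}\sigma^{(1)}$, and control the oscillatory remainder by one integration by parts in Duhamel over the Rabi time $t_m=\tfrac{\pi}{4}|k_m|^p$.

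The one conceptual difference worth flagging is your choice $V_m(\varphi)=\tfrac{2}{|k_m|^p}\cos(k_m\cdot\varphi)\sum_x\sigma^{(1)}_x$. The paper uses instead the product form $\tfrac{2}{|k_m|^p}\cos(k_m^{(1)}\varphi_1)\cos(k_m^{(2)}\varphi_2)$, and in a remark explicitly notes that your simpler variant works too. The reason for their choice is that $\cos(k_m\cdot\lambda_m\nu t)=\cos(2t)$ is \emph{periodic} in $t$, so your example, while formally a function on $\T^2$, is not genuinely quasi-periodic as a function of time; the product form avoids this, at the price of two extra high-frequency terms $\omega_m^\pm$ in the interaction picture (handled by the same integration-by-parts mechanism, using that $|\omega_m^\pm|\gtrsim\lambda_m^{1/\tau}$). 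A second minor difference: the paper lower-bounds the operator norm by testing against the product state $\bigotimes_x\binom{0}{1}$ rather than invoking $\|\sum_x A_x\|_{\operatorname{op}}=|\Lambda|\,\|A\|_{\operatorname{op}}$ directly; both are valid. Finally, your last paragraph is a bit cavalier about matching the upper constant $C_2$ and the exponent $p/\tau+\epsilon$: in the paper this comes from Dirichlet's theorem, which gives the explicit upper bound $|\nu\cdot k_m|\leq 2|\nu|/|k_m|$ (so $\delta=\epsilon$ with $\tau=1+\epsilon$), and the inclusion of the resulting interval into $[C_1\lambda_m^{p/\tau},C_2\lambda_m^{p/\tau+\epsilon}]$ is then a direct computation rather than ``careful bookkeeping''.
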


\begin{remark}[Comments to Theorem \ref{thm:main}] \vphantom{ciao mamma}
\begin{enumerate}
    \item Theorem \ref{thm:main} holds in the thermodynamic limit $|\Lambda| \to +\infty$ and, due to Remark \ref{rmk:frecciarotta} and Definition \ref{def:PesciolinoDichiara}, 
    for a wide class of Hamiltonians its item (i) relates two non-vanishing quantities in such limit.
    \item Item (i) of Theorem \ref{thm:main} establishes the slow heating of the system. In particular, the quasi-conservation of $H_0$ implies the existence of a long-lived prethermal regime. If for instance $\tau = n$ and $b= \frac 12$, for any $\varepsilon:= \frac{n}{2p} \epsilon>0$ sufficiently small, it ensures that
	$$
	|\Lambda|^{-1}\| U_H^*(t) H_{0} U_H(t) - H_{0} \|_{\operatorname{op}} \leq D_1 \lambda^{-\frac 1 2} \quad \forall 0< t < \lambda^{\frac 1 2 (\frac{p}{ n} - 1) - \varepsilon}\,.
	$$
    \item Item (ii) of Theorem \ref{thm:main} shows that the time-dependent dynamics generated by $H(\lambda \nu t)$ can be accurately approximated by the dynamics generated by an effective time-independent local Hamiltonian $H_\eff$. Such $H_\eff$ can be computed explicitly with an algorithmic procedure (see eq.~\eqref{eq:heff} below) and it is close to $H_0$.
    \item If, instead of a generic \(H\), one chooses $H(\lambda \nu t) = J \cdot N + V(\lambda \nu t)$, where $ J \cdot N = J_1 N_1 + \dots + J_r N_r$ is a linear combination of mutually commuting number operators $N_1, \dots, N_r$, then, using the same techniques and the non-resonance and locality assumptions of~\cite{Gallone-Langella-2024}, one can prove the quasi-conservation of all the number operators $N_j$ separately, within the same timescales found in item (i) of Theorem \ref{thm:main}. 
    \item As a consequence of the Theorem \ref{thm:main}, one has that if a perturbation is infinitely differentiable but not analytic, then the prethermal phase lasts longer than any polynomial in $\lambda$.
\end{enumerate}
\end{remark}

\begin{remark}[Comments to Theorem \ref{thm:fine.di.mondo}]\vphantom{ciao mamma}
    \begin{enumerate}
    \item Theorem \ref{thm:fine.di.mondo} establishes the optimality of the bound in item (i) of Theorem \ref{thm:main} for the limiting case $b=0$.
    \item The same construction of Theorem \ref{thm:fine.di.mondo}  can also be adapted to prove optimality in the analytic setting. We point out that our perturbations are actually analytic, since they are given by a trigonometric polynomial (see eq.~\eqref{eq:big.mac} below). However, they are normalized in order to keep the $C^p$ norms uniformly bounded, whereas their analytic norms diverge as $k_m \to \infty$.
    \end{enumerate}
\end{remark}

\section{Stability estimates in the prethermal time scale}

\subsection{Analytic smoothing}

\begin{definition}\label{def:Onorato}
	Let $\sigma \geq0$, $\kappa \geq 0$ and let $\varphi \mapsto A(\varphi)$ be a family of operators defined by the interaction $\{A_S(\varphi)\}_{S \in \mathcal{P}_c(\Lambda)}$, with
    $\varphi \in \T^n$. 
    \begin{itemize}
    \item[i.] For any $S \in \mathcal{P}_c(\Lambda)$, we say that $A_S \in C^\omega_\sigma(\T^n; \mathcal{B}(\cH_S))$ if
    \begin{equation}
	\Vert A_S \Vert_{C^\omega_\sigma(\mathbb{T}^n; \mathcal{B}(\mathcal{H}_S))} := \sum_{\ell \in \mathbb{Z}^n} \Vert \widehat{A_S}(\ell) \Vert_{\mathrm{op}} e^{\sigma |\ell|} < +\infty \, .
\end{equation}
    \item[ii.] We say that $A \in \OOO_{\kkk,\sigma}$ if there exists a positive, finite constant $C$ independent of $\Lambda$ such that
	\begin{equation}
		\Vert A \Vert_{\kappa,\sigma} := \sup_{x \in \Lambda} \sum_{\substack{S \in \mathcal{P}_c(\Lambda) \\ x \in S}} \Vert A_S \Vert_{C^\omega_\sigma (\mathbb{T}^n; \mathcal{B}(\mathcal{H}_S))} e^{\kkk |S|} < C \, .
	\end{equation}
    \end{itemize}
\end{definition}

\begin{remark}
    Note that, if $A$ does not depend on $\varphi$, then $\Vert A \Vert_{\kappa,\sigma}=\Vert A \Vert_{\kappa,C^p}=\Vert A \Vert_\kappa$.
\end{remark}
The main result of this subsection is the following Proposition.
\begin{proposition}\label{prop.smooth.op}
	Let $\kappa \geq 0$, $p \geq n+1$, $\sigma \in (0,1)$, and $A \in \OOO_{\kkk,C^p}$. There exists $A_\sigma \in \OOO_{\kkk,\sigma}$ and two positive constants $\tC_1$ and $\tC_2$, depending only on $p$ and $n$, such that
\begin{itemize}
	\item[(i)] $\Vert A-A_\sigma \Vert_{\kkk,C^{0}} \leq \tC_1 \sigma^{p} \Vert A \Vert_{\kkk,C^p}$;
	\item[(ii)] $\Vert A_\sigma \Vert_{\kkk,\sigma} \leq \tC_2 \Vert A \Vert_{\kappa,C^p} $;
    \item[(iii)] $\langle A_\sigma \rangle = \langle A \rangle$\,, where $\langle A \rangle$ is defined as in \eqref{eq:mediamente}\,.
\end{itemize}
\end{proposition}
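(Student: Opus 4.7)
The strategy is to reduce the statement to the classical Jackson--Moser--Zehnder smoothing theorem for Banach-space-valued $C^p$ functions on $\T^n$, applied termwise to each local component $A_S$, and then to reassemble the estimates in the local norm via a uniform summation over $S$. Since the local structure $A = \sum_{S \in \mathcal{P}_c(\Lambda)} A_S$ and the regularity in $\varphi$ are preserved by applying a fixed linear convolution operator to each $A_S$, it suffices to obtain the three inequalities for each $A_S$ with constants independent of $S$ and $\Lambda$.

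First I would fix once and for all a smoothing kernel $K \in \mathcal{S}(\R^n)$, independent of $A$ and of $S$, whose Fourier transform $\widetilde{K}\colon \R^n \to \R$ is smooth, compactly supported in $\{|\xi| \leq 1\}$, and identically equal to $1$ in a neighbourhood of the origin. Such a $K$ exists and automatically satisfies $\int_{\R^n} K(x)\, dx = 1$ together with the vanishing-moment conditions $\int_{\R^n} x^{\alpha} K(x)\, dx = 0$ for all $1 \leq |\alpha| < p$ (in fact, to all orders), since $\widetilde{K}$ is flat at the origin. Setting $K_\sigma(x):= \sigma^{-n} K(x/\sigma)$, I then define $A_\sigma$ componentwise by
\begin{equation*}
    (A_\sigma)_S(\varphi) \;:=\; \int_{\R^n} K_\sigma(y)\, A_S(\varphi - y)\, dy,
\end{equation*}
interpreted with $A_S$ viewed as a $2\pi$-periodic $\mathcal{B}(\mathcal{H}_S)$-valued function on $\R^n$. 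Equivalently, $(A_\sigma)_S$ has Fourier series $\sum_{\ell \in \Z^n} \widetilde{K}(\sigma \ell)\, \widehat{A_S}(\ell)\, e^{\ii \ell\cdot \varphi}$, which is a trigonometric polynomial in $\varphi$ of degree $\leq 1/\sigma$.

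Next I would prove the three single-block estimates. Item \emph{(iii)} is immediate, as $\widetilde{K}(0) = 1$ gives $\widehat{(A_\sigma)_S}(0) = \widehat{A_S}(0)$. For item \emph{(i)}, a Taylor expansion of $A_S(\varphi - y)$ around $\varphi$ up to order $p$ together with the vanishing moments of $K$ leaves only the $p$-th order integral remainder, yielding
\begin{equation*}
    \bigl\|\, (A_\sigma)_S(\varphi) - A_S(\varphi)\,\bigr\|_{\oop} \;\leq\; \tC_1\, \sigma^p\, \|A_S\|_{C^p(\T^n;\mathcal{B}(\mathcal{H}_S))},
\end{equation*}
with $\tC_1$ depending only on $K$, $p$, and $n$ (the operator-valued nature of $A_S$ being irrelevant, since the remainder is estimated in $\|\cdot\|_{\oop}$ via the triangle inequality). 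For item \emph{(ii)}, the compact support of $\widetilde{K}$ restricts the Fourier series of $(A_\sigma)_S$ to $|\ell| \leq 1/\sigma$, on which $e^{\sigma|\ell|} \leq e$; combined with the standard integration-by-parts bound $\|\widehat{A_S}(\ell)\|_{\oop} \leq C_{n,p}\, \langle\ell\rangle^{-p}\,\|A_S\|_{C^p(\T^n;\mathcal{B}(\mathcal{H}_S))}$ and the hypothesis $p \geq n + 1$, which ensures $\sum_{\ell\in\Z^n} \langle\ell\rangle^{-p} < \infty$, this gives $\|(A_\sigma)_S\|_{C^\omega_\sigma(\T^n;\mathcal{B}(\mathcal{H}_S))} \leq \tC_2\, \|A_S\|_{C^p(\T^n;\mathcal{B}(\mathcal{H}_S))}$.

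Finally, multiplying each of the three single-block inequalities by $e^{\kappa|S|}$, restricting to $S \ni x$, summing and taking $\sup_{x\in\Lambda}$ promotes them verbatim to the three items of the Proposition, with the same constants $\tC_1, \tC_2$ (depending only on $K$, $p$, $n$). The only delicate step is the construction of $K$: obtaining the sharp exponent $\sigma^p$ in \emph{(i)} depends crucially on the vanishing-moment property (equivalently, high-order flatness of $\widetilde{K}$ at the origin), without which a bare Fourier truncation would yield only the non-sharp exponent $\sigma^{p-n}$; the latter loss would then propagate into the subsequent normal-form iteration and weaken the final power in \eqref{eq:IntroTpersistence}. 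Once $K$ is correctly chosen, the remaining arguments are classical and extend from scalar- to operator-valued functions without modification.
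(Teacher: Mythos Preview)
Your proposal is correct and follows essentially the same route as the paper: both define $(A_\sigma)_S$ by convolution with a rescaled Schwartz kernel $K$ whose Fourier transform is a smooth bump supported in $\{|\xi|\le 1\}$ and equal to $1$ near the origin, then use the vanishing moments of $K$ combined with a Taylor expansion for item~(i), the Fourier cutoff together with the $C^p$ decay $\|\widehat{A_S}(\ell)\|_{\oop}\lesssim \langle\ell\rangle^{-p}$ and $p\ge n+1$ for item~(ii), and finally sum the single-block estimates against $e^{\kappa|S|}$. Your remark on why the vanishing-moment property is essential for the sharp exponent $\sigma^p$ is also exactly the point.
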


Let us first recall some standard facts that will be used in the following.
\begin{lemma}\label{lem:ComeCadeFourier}
    Let $A \in C^p(\mathbb{T}^n; \mathcal{B}(\cH_S))$, $p \geq n+1$. Then there exists $C_{n,p} >0$, depending on $n$ and $p$ only, such that for any $\ell \in \Z^n $
    \begin{equation}
        \Vert \widehat{A_S}(\ell) \Vert_{\operatorname{op}} \leq C_{n,p} \frac{\Vert A_S \Vert_{C^p(\mathbb{T}^n; \mathcal{B}(\cH_S))} }{\langle \ell\rangle ^p} \, .
    \end{equation}
\end{lemma}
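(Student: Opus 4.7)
\textbf{Proof plan for Lemma \ref{lem:ComeCadeFourier}.} The statement is the standard decay of Fourier coefficients for a $C^p$ operator-valued function on the torus, and I would establish it by repeated integration by parts together with an equivalence-of-norms step that converts $|\ell|^{-p}$ into $\langle \ell \rangle^{-p}$.

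First I would dispose of the case $\ell = 0$ directly: since $\widehat{A_S}(0) = (2\pi)^{-n}\int_{\T^n} A_S(\varphi)\,\ud\varphi$, we have $\|\widehat{A_S}(0)\|_{\oop} \leq \|A_S\|_{C^0(\T^n;\mathcal{B}(\mathcal{H}_S))} \leq \|A_S\|_{C^p(\T^n;\mathcal{B}(\mathcal{H}_S))}$, and since $\langle 0 \rangle = 1$ the claimed estimate holds with any constant $C_{n,p} \geq 1$.

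For $\ell \in \Z^n \setminus \{0\}$, I would pick an index $j_0 = j_0(\ell) \in \{1,\ldots,n\}$ realizing $|\ell_{j_0}| = \max_j |\ell_j|$, and observe that by periodicity one may integrate by parts $p$ times in the variable $\varphi_{j_0}$ to obtain
\begin{equation*}
\widehat{A_S}(\ell) \;=\; \frac{1}{(\ii \ell_{j_0})^p}\,\frac{1}{(2\pi)^n}\int_{\T^n} \bigl(\partial_{\varphi_{j_0}}^{p} A_S\bigr)(\varphi)\, e^{-\ii \ell \cdot \varphi}\,\ud\varphi.
\end{equation*}
Taking operator norms inside the integral and using that $\partial_{\varphi_{j_0}}^{p}$ is one of the partial derivatives controlled by $\|A_S\|_{C^p(\T^n;\mathcal{B}(\mathcal{H}_S))}$ gives
\begin{equation*}
\|\widehat{A_S}(\ell)\|_{\oop} \;\leq\; \frac{\|A_S\|_{C^p(\T^n;\mathcal{B}(\mathcal{H}_S))}}{|\ell_{j_0}|^p}.
\end{equation*}

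Finally I would convert this into a bound in terms of $\langle \ell \rangle$: since $|\ell| = \sum_{j} |\ell_j| \leq n |\ell_{j_0}|$, we have $|\ell_{j_0}|^{-p} \leq n^p |\ell|^{-p}$, and for $\ell \neq 0$ the inequality $|\ell| \geq 1$ yields $\langle \ell \rangle = 1 + |\ell| \leq 2|\ell|$, so $|\ell|^{-p} \leq 2^p \langle \ell \rangle^{-p}$. Combining these with the $\ell = 0$ case, the estimate holds with $C_{n,p} := (2n)^p$, which depends only on $n$ and $p$. There is no real obstacle here; the only point of care is to make sure $\partial_{\varphi_{j_0}}^p A_S$ is continuous (hence the integration by parts is justified) which is guaranteed by $A_S \in C^p(\T^n;\mathcal{B}(\mathcal{H}_S))$ with $p\geq 1$, and in particular under the hypothesis $p\geq n+1$.
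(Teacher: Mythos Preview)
Your proof is correct and follows essentially the same route as the paper's: pick the maximal coordinate of $\ell$, integrate by parts $p$ times in that variable, and convert the resulting $|\ell_{j_0}|^{-p}$ into $\langle \ell\rangle^{-p}$ via $|\ell_{j_0}|\geq \tfrac{1}{2n}\langle \ell\rangle$, arriving at the same constant $C_{n,p}=(2n)^p$. The only differences are cosmetic --- you treat $\ell=0$ explicitly and remark on the justification of the integration by parts --- which the paper leaves implicit.
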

\begin{proof}
    Let $\ell=(\ell_1,\dots,\ell_n) \in \mathbb{Z}^n$, and let $\ell_r$ be such that $|\ell_r|=\max_{j=1,\cdots,n} |\ell_j|$, then
    \[
          |\ell_r|^{p} \Vert \widehat{A_S}(\ell) \Vert_{\operatorname{op}} = \Big\Vert \frac{1}{(2 \pi)^n} \int_{\mathbb{T}^n}  e^{-\ii \varphi \cdot \ell} \partial_{\varphi_r}^p A_S(\varphi) \, d \varphi\Big\Vert_{\operatorname{op}} \leq \Vert A_S \Vert_{C^p(\mathbb{T}^n;\mathcal{B}(\cH_S))} \, .
    \]
    Using now that $|\ell_r|^p=\frac{1}{n^p}(n |\ell_r|)^p \geq \frac{1}{n^p} |\ell|^p \geq \frac{1}{(2n)^p} \langle \ell \rangle^p$ we obtain the thesis with $C_{n,p}=(2n)^p$.
\end{proof}

\begin{lemma}\label{lem:TaylorMan}
    Let $\vartheta,\varphi \in \mathbb{T}^n$ and $A_S \in C^p(\T^n;\mathcal{B}(\cH_S))$. Then, there exists $R_S^{(p)} \in C^0(\T^n \times \T^n;\mathcal{B}(\cH_S))$ such that
    \begin{equation}\label{eq:TaylorMcTaylor}
        A_S(\vartheta+\varphi)= A_S(\varphi) + \sum_{\substack{p' \in \mathbb{N}^n \\ 0< |p'| \leq p-1} }\frac{1}{p'!} \partial_\varphi^{p'}A_S(\varphi) \vartheta^{p'}+R_S^{(p)}(\varphi, \vartheta) \,,
    \end{equation}
    where $p'!=(p'_1!)\cdots (p'_n!)$, $\varphi^{p'}=\varphi_1^{p'_1}\cdots \varphi_n^{p'_n}$ and $\partial_\varphi^{p'}=\partial_{\varphi_1}^{p'_1}\cdots \partial_{\varphi_n}^{p'_n}$. Moreover, there exists $C'_{n,p}>0$ depending on $n$ and $p$ only such that
    \begin{equation}\label{eq:NorminaRR}
        \Vert R_S^{(p)}(\cdot, \vartheta) \Vert_{C^0(\T^n;\mathcal{B}(\cH_S))} \leq C'_{n,p} \Vert A_S \Vert_{C^p} |\vartheta|^p \quad \forall \vartheta \in \T^n\, .
    \end{equation}
\end{lemma}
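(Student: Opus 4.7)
The plan is to reduce the multivariate Taylor expansion on the torus to the one-dimensional Taylor theorem with integral remainder, applied to the auxiliary operator-valued curve $g:[0,1] \to \mathcal{B}(\mathcal{H}_S)$ defined by $g(t) := A_S(\varphi + t\vartheta)$ (where $\varphi, \vartheta$ are momentarily treated as fixed). Since $A_S \in C^p(\T^n;\mathcal{B}(\mathcal{H}_S))$ and $\mathcal{B}(\mathcal{H}_S)$ is a Banach space, $g \in C^p([0,1]; \mathcal{B}(\mathcal{H}_S))$, and a straightforward induction on the order of differentiation (i.e.\ the chain rule) yields
\[
g^{(k)}(t) = \sum_{\substack{p' \in \N^n \\ |p'|=k}} \frac{k!}{p'!}\, \partial_\varphi^{p'} A_S(\varphi + t\vartheta)\, \vartheta^{p'}, \qquad 0 \leq k \leq p.
\]

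Next I would apply the scalar/Bochner Taylor formula with integral remainder,
\[
g(1) = \sum_{k=0}^{p-1} \frac{g^{(k)}(0)}{k!} + \int_0^1 \frac{(1-t)^{p-1}}{(p-1)!}\, g^{(p)}(t)\, dt,
\]
insert the expression for $g^{(k)}$ above, and identify the sum in $0 \leq k \leq p-1$ with the polynomial part of \eqref{eq:TaylorMcTaylor}. This defines
\[
R_S^{(p)}(\varphi,\vartheta) := \int_0^1 \frac{(1-t)^{p-1}}{(p-1)!} \sum_{|p'|=p} \frac{p!}{p'!}\, \partial_\varphi^{p'} A_S(\varphi + t\vartheta)\, \vartheta^{p'}\, dt,
\]
which is jointly continuous in $(\varphi,\vartheta)$ because $\partial_\varphi^{p'} A_S$ is continuous on the compact torus (hence uniformly continuous) and the integrand then depends continuously on the parameters.

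For the quantitative bound \eqref{eq:NorminaRR}, I would take operator norms under the integral and use $\|\partial_\varphi^{p'} A_S(\varphi + t\vartheta)\|_{\operatorname{op}} \leq \|A_S\|_{C^p(\T^n;\mathcal{B}(\mathcal{H}_S))}$ for $|p'|=p$, together with $|\vartheta^{p'}| \leq |\vartheta|^{|p'|} = |\vartheta|^p$ (which uses precisely the $\ell^1$ norm convention fixed in the Notation paragraph). Using $\int_0^1 (1-t)^{p-1}\, dt = 1/p$ and the multinomial identity $\sum_{|p'|=p} p!/p'! = n^p$, this gives
\[
\|R_S^{(p)}(\cdot,\vartheta)\|_{C^0(\T^n;\mathcal{B}(\mathcal{H}_S))} \leq \frac{n^p}{p!}\, \|A_S\|_{C^p(\T^n;\mathcal{B}(\mathcal{H}_S))}\, |\vartheta|^p,
\]
so one can take $C'_{n,p} := n^p/p!$.

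There is no essential obstacle here: the only points requiring care are the bookkeeping for the multi-indices (computing $g^{(k)}$ and applying the multinomial theorem correctly), and checking that differentiation and Riemann integration commute with the operator-norm topology on $\mathcal{B}(\mathcal{H}_S)$, which is automatic since $\mathcal{B}(\mathcal{H}_S)$ is a Banach space and all integrands are continuous on compact sets.
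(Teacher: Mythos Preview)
Your proof is correct and follows essentially the same approach as the paper: both reduce to the one-dimensional Taylor formula with integral remainder along the segment $t\mapsto A_S(\varphi+t\vartheta)$, obtain the same explicit expression for $R_S^{(p)}$, and then bound it termwise. Your version is in fact slightly sharper, since you evaluate $\int_0^1 (1-t)^{p-1}\,dt = 1/p$ and use the multinomial identity to get the explicit constant $C'_{n,p}=n^p/p!$, whereas the paper simply bounds $|1-t|\le 1$ and leaves the constant implicit.
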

\begin{proof}
    Using standard Taylor expansion one gets \eqref{eq:TaylorMcTaylor} with 
    \[
        R_S^{(p)}(\varphi,\vartheta)=\sum_{\substack{p' \in \mathbb{N}^n \\ |p'| = p}} \frac{|p|}{p'!}\, \vartheta^{p'} \int_0^1 (1-t)^{p-1} \partial_\varphi^{p'} A_S(\varphi+t \vartheta) \, dt \, .
    \]
    Then, from this latter explicit expression, we obtain the bound
    \[
        \begin{split}
            \Vert R_S^{(p)}(\cdot,\vartheta) \Vert_{C^0(\T^n;\mathcal{B}(\cH_S))} & = \sup_{\varphi \in \T^n} \Big\Vert \sum_{\substack{p' \in \mathbb{N}^n \\ |p'| = p}} \frac{|p|}{p'!} \int_0^1 (1-t)^{p-1} \partial_\varphi^{p'} A_S(\varphi+t \vartheta) \, dt \, \vartheta^{p'} \Big\Vert_{\operatorname{op}} \\
            &\leq \Big(\sum_{\substack{p' \in \mathbb{N}^n \\ |p'|=p}} \frac{|p|}{p!} \Big)\sup_{\varphi \in \T^n} \left\Vert 
            \int_0^1 (1-t)^{p-1} \partial_\varphi^{p'} A_S(\varphi+t \vartheta) \, dt \vartheta^{p'}\right\Vert_{\operatorname{op}} \\
            &\overset{|1-t| \leq 1}{\leq} \Big(\sum_{\substack{p' \in \mathbb{N}^n \\ |p'|=p}} \frac{|p|}{p!} \Big) \sup_{\varphi \in \mathbb{T}^n} \Vert \partial_\varphi^{p'}A_S(\varphi) \Vert_{\operatorname{op}} (\max_{j=1,\dots,n} |\vartheta_j|)^p\,.
        \end{split}
    \]
Now, using that $\Vert A_S \Vert_{C^p(\T^n;\mathcal{B}(\cH_S))}=\sup_{|p'| \leq p} \Vert \partial_\varphi^{p'} A_S(\varphi) \Vert_{\operatorname{op}}$ we get the thesis.
\end{proof}
We can now define
\begin{equation}\label{eq:PoliTaylor}
    T_{A_S}^{(p-1)}(\varphi,\vartheta):=\sum_{\substack{p' \in \mathbb{N}^n \\ 0< |p'| \leq p-1}} \frac{1}{p'!} \partial_\varphi^{p'} A_S(\varphi) \vartheta^{p'} \, .
\end{equation}
Then, as a consequence of \eqref{eq:TaylorMcTaylor} and \eqref{eq:NorminaRR}, for any $\varphi, \vartheta \in \T^n$ one has
\begin{equation}
    \Vert A_S(\varphi+\vartheta) -A_S(\varphi)+T_{A_S}^{(p-1)}(\varphi,\vartheta) \Vert_{\operatorname{op}} = \Vert R^{(p)}_S(\varphi,\vartheta) \Vert_{\operatorname{op}} \leq C'_{n,p} |\vartheta|^p \Vert A_S \Vert_{C^p(\T^n;\mathcal{B}(\cH_S))} \, .
\end{equation}
With this last equation, we completed the list of properties of $A_S$ that we require. We now move to discussing the construction of the analytic smoothing, which is done by convolution with a suitable kernel.

Let us denote by $B_r(0):=\{x \in \mathbb{R}^n \, | \, |x| < r\}$ the ball of radius $r$ in $\mathbb{R}^n$ centered at $0$ and let us consider a function $\chi \in C^\infty_c(\mathbb{R}^n)$ such that
\begin{itemize}
    \item[(i)] $\mathrm{supp} \, \chi \subset B_1(0)$;
    \item[(ii)] $\chi(\xi)=1$ for $\xi \in B_{\frac{1}{2}}(0)$.
\end{itemize}
We define the kernel
\begin{equation}\label{eq:k.kernel}
    K(z):= \frac{1}{(2 \pi)^n }\int_{\mathbb{R}^n} \chi(\xi) e^{\ii \xi \cdot z} \, \ud \xi\,, \quad z \in \R^n\, .
\end{equation}
Note that $K(z)$ is a Schwartz function, since it is the Fourier transform of a Schwartz function.
The following result can be found in \cite[Lemma 9]{Chierchia-LN}, but we include an adapted proof here for completeness.

\begin{lemma} \label{lem:KeKapponeKarino}
    Let $K$ be the kernel defined in \eqref{eq:k.kernel}. Then
    \begin{itemize}
        \item[(i)] $\int_{\mathbb{R}^n} K(z) \, dz = 1$;
        \item[(ii)] $\int_{\mathbb{R}^n} K(z) z^{p'} \, dz=0$ if $p' \neq 0$, $p' \in \mathbb{N}^n$.
    \end{itemize}
\end{lemma}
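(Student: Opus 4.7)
The plan is to exploit the fact that, by construction, $K$ is (up to the standard $(2\pi)^{-n}$ normalization) the inverse Fourier transform of $\chi$. Since $\chi \in C^\infty_c(\R^n)$, the function $K$ lies in the Schwartz class $\mathcal{S}(\R^n)$, so all integrals and all exchanges of integration with differentiation that appear below are automatically justified. Adopting the convention $\widehat{f}(\xi) := \int_{\R^n} f(z)\, e^{-\ii \xi\cdot z}\,\ud z$, the defining formula \eqref{eq:k.kernel} rewrites as $\widehat{K}(\xi) = \chi(\xi)$.

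For item (i), I would simply evaluate the Fourier transform of $K$ at the origin: $\int_{\R^n} K(z)\,\ud z = \widehat{K}(0) = \chi(0) = 1$, where the last equality uses that $0 \in B_{1/2}(0)$ together with property (ii) in the construction of $\chi$.

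For item (ii), the key is the classical multiplication-differentiation identity $\widehat{z^{p'} f}(\xi) = \ii^{|p'|}\,\partial_\xi^{p'}\widehat{f}(\xi)$, applied with $f = K$; since $z^{p'} K$ is again Schwartz, there is no convergence issue. Evaluating at $\xi = 0$ gives $\int_{\R^n} z^{p'} K(z)\,\ud z = \ii^{|p'|}\,(\partial_\xi^{p'}\chi)(0)$. Because $\chi$ is identically equal to $1$ on the open ball $B_{1/2}(0)$, every partial derivative of strictly positive order of $\chi$ vanishes on a whole neighbourhood of the origin, and in particular at $\xi = 0$, yielding the claimed vanishing.

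There is no substantive obstacle here: the whole argument amounts to translating each integral on the $z$-side into a Fourier-side evaluation at $\xi = 0$, and then using the fact that $\chi$ is \emph{locally constant} near the origin. The only mild subtlety is keeping the Fourier convention consistent with the one used implicitly in \eqref{eq:k.kernel}, which is why I would state the convention explicitly at the outset.
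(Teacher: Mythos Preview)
Your proposal is correct and follows essentially the same approach as the paper: both arguments identify $\widehat K=\chi$, deduce item (i) from $\chi(0)=1$, and obtain item (ii) by writing $\int z^{p'}K(z)\,\ud z=\ii^{|p'|}\partial_\xi^{p'}\chi(0)$ and using that $\chi$ is constant on a neighbourhood of the origin. Your version is slightly more careful in justifying the Schwartz-class regularity and the Fourier convention, but the substance is identical.
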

\begin{proof}
Item (i) follows from $\int_{\mathbb{R}^n} K(z) \, dz = \chi(0)=1$.

Concerning item (ii),
\[
    \begin{split}
        \int_{\mathbb{R}^n} K(z) z^{p'} \, dz &= \int_{\mathbb{R}^n} K(z) z^{p'} e^{-\ii \xi \cdot z} \, dz \Big|_{\xi=0} =\ii^{|p'|} \int_{\mathbb{R}^n} K(z) \partial_\xi^{p'} e^{-\ii \xi \cdot z} \, dz \Big|_{\xi=0} \\
        &=\ii^{|p'|} \partial_{\xi}^{p'} \int_{\mathbb{R}^n} K(z) e^{-\ii \xi \cdot z} \, dz \Big|_{\xi=0} =\ii^{|p'|} \partial_\xi^{p'}\chi(\xi)\big|_{\xi=0}=0 \, ,
    \end{split}
\]
where in the last step we used the fact that $\chi$ is constant around zero.
\end{proof}

\begin{lemma}\label{lem:Casseruola}
	Let $p \geq n+1$, $\sigma \in (0,1)$, and $A_S \in C^p(\mathbb{T}^n; \mathcal{B}(\mathcal{H}_S))$. There exist $  (A_S)_\sigma \in C^\omega(\mathbb{T}^n_\sigma;\mathcal{B}(\mathcal{H}_S))$, and positive constants $\tC_1$ and $\tC_2$ depending only on $n$ and $p$ such that
	\begin{itemize}
		\item[(i)] $\Vert A_S-(A_S)_\sigma \Vert_{C^{0}(\mathbb{T}^n;\mathcal{B}(\mathcal{H}_S))} \leq \tC_1 \sigma^{p} \Vert A_S \Vert_{C^p(\mathbb{T}^n;\mathcal{B}(\mathcal{H}_S))}$,
		\item[(ii)] $\Vert (A_S)_\sigma \Vert_{C^\omega_\sigma(\mathbb{T}^n_\sigma;\mathcal{B}(\mathcal{H}_S))} \leq \tC_2  \Vert A_S \Vert_{C^p(\mathbb{T}^n;\mathcal{B}(\mathcal{H}_S))}$,
        \item[(iii)] $\langle (A_S)_\sigma \rangle = \langle A_S \rangle$.
	\end{itemize}
\end{lemma}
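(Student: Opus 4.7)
The plan is to define $(A_S)_\sigma$ as the mollification of $A_S$ against a rescaled version of the Jackson--Moser--Zehnder kernel $K$ from \eqref{eq:k.kernel}. Setting $K_\sigma(z) := \sigma^{-n} K(z/\sigma)$ and viewing $A_S$ as its $2\pi$-periodic extension to $\R^n$, I would let
\[
(A_S)_\sigma(\varphi) := \int_{\R^n} K_\sigma(\vartheta)\, A_S(\varphi - \vartheta)\, \ud\vartheta.
\]
Since the Fourier transform of $K_\sigma$ equals $\xi \mapsto \chi(\sigma \xi)$, a direct Fubini computation (justified by the Schwartz decay of $K$) yields $\widehat{(A_S)_\sigma}(\ell) = \chi(\sigma \ell)\, \widehat{A_S}(\ell)$ for every $\ell \in \Z^n$. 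Item (iii) is then immediate: by property (ii) of $\chi$ one has $\chi(0) = 1$, hence $\langle (A_S)_\sigma \rangle = \widehat{(A_S)_\sigma}(0) = \widehat{A_S}(0) = \langle A_S \rangle$.

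For item (ii), since $\mathrm{supp}\, \chi \subset B_1(0)$, $\widehat{(A_S)_\sigma}(\ell)$ vanishes for $|\ell| > 1/\sigma$, so on the support one has $e^{\sigma|\ell|} \leq e$. Combining this with the Fourier decay estimate of Lemma \ref{lem:ComeCadeFourier}, I get
\[
\|(A_S)_\sigma\|_{C^\omega_\sigma(\T^n_\sigma;\mathcal{B}(\cH_S))} \leq e\, \|\chi\|_\infty \sum_{\ell \in \Z^n} \|\widehat{A_S}(\ell)\|_{\oop} \leq e\, \|\chi\|_\infty\, C_{n,p}\, \|A_S\|_{C^p(\T^n;\mathcal{B}(\cH_S))} \sum_{\ell \in \Z^n} \langle \ell \rangle^{-p},
\]
and the last sum converges because $p \geq n+1$; this yields $\tC_2$ depending only on $n$ and $p$.

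For item (i), I would write
\[
A_S(\varphi) - (A_S)_\sigma(\varphi) = \int_{\R^n} K_\sigma(\vartheta) \bigl[ A_S(\varphi) - A_S(\varphi - \vartheta) \bigr] \, \ud\vartheta
\]
and expand $A_S(\varphi - \vartheta)$ around $\varphi$ via Lemma \ref{lem:TaylorMan} up to order $p-1$. Every polynomial contribution in $\vartheta$ integrates to zero thanks to the moment-killing property of Lemma \ref{lem:KeKapponeKarino}(ii): after the rescaling $\vartheta = \sigma z$, $\int K_\sigma(\vartheta) \vartheta^{p'} \ud\vartheta = \sigma^{|p'|} \int K(z) z^{p'} \ud z = 0$ for every $1 \leq |p'| \leq p-1$. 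Only the remainder $R^{(p)}_S(\varphi, -\vartheta)$ then survives, and by \eqref{eq:NorminaRR}
\[
\|A_S - (A_S)_\sigma\|_{C^0} \leq C'_{n,p}\, \|A_S\|_{C^p} \int_{\R^n} |K_\sigma(\vartheta)|\, |\vartheta|^p \, \ud\vartheta = C'_{n,p}\, \|A_S\|_{C^p}\, \sigma^p \int_{\R^n} |K(z)|\, |z|^p \, \ud z,
\]
which is finite since $K$ is Schwartz, delivering $\tC_1$.

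The main subtlety to keep in mind is that $K$ is \emph{not} compactly supported, so the convolution uses a non-periodic Schwartz kernel against the periodic extension of $A_S$; every estimate above relies on the rapid decay of $K$, both to make $\int |K(z)| |z|^p \, \ud z$ finite for item (i) and to justify the Fourier identity underlying items (ii) and (iii). The uniform Taylor-remainder bound \eqref{eq:NorminaRR} itself extends automatically to all $\vartheta \in \R^n$ because periodicity of $A_S$ keeps its $C^p$ norm finite globally on $\R^n$.
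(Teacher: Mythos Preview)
Your proposal is correct and follows essentially the same approach as the paper: after the change of variables $\vartheta = \sigma z$, your definition $(A_S)_\sigma(\varphi) = \int_{\R^n} K_\sigma(\vartheta) A_S(\varphi-\vartheta)\,\ud\vartheta$ coincides with the paper's $\int_{\R^n} K(z) A_S(\varphi-\sigma z)\,\ud z$, and the arguments for all three items proceed identically via the Fourier multiplier identity, the support/decay of $\chi$, Lemma~\ref{lem:ComeCadeFourier}, the Taylor expansion of Lemma~\ref{lem:TaylorMan}, and the moment-killing property of Lemma~\ref{lem:KeKapponeKarino}. Your explicit remark that the remainder bound \eqref{eq:NorminaRR} extends to all $\vartheta\in\R^n$ by periodicity is a point the paper uses tacitly.
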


\begin{proof}
    We define $(A_S)_{\sigma}(\varphi):= \int_{\mathbb{R}^n} K(z) A_S(\varphi-\sigma z) \, dz$. By direct inspection, $(A_S)_{\sigma}$ is $2 \pi$ periodic in each of its component and one has $\widehat{(A_S)_{\sigma}}(\ell)=\chi(\sigma \ell) \widehat{A_S}(\ell)$ for any $\ell \in \Z^n$. This immediately implies that $\langle A_S \rangle=\langle (A_S)_\sigma \rangle$, which is Item (iii). We now prove Item (ii): one has
        \[
        \begin{split}
            \Vert (A_S)_\sigma \Vert_{C^\omega_\sigma (\T^n;\mathcal{B}(\cH_S))} &= \sum_{\ell \in \mathbb{Z}^n} \Vert \widehat{A_S}(\ell) \Vert_{\operatorname{op}} \chi(\sigma \ell) e^{\kappa |\ell|} \\
            &\hspace{-0.45cm}\overset{ \text{Lemma \ref{lem:ComeCadeFourier}}}{\leq} C_{n,p} \sum_{\ell \in \mathbb{Z}^n} \Vert A_S \Vert_{C^p(\T^n,\mathcal{B}(\cH_S))} \frac{1}{\langle \ell\rangle ^p} \chi(\sigma \ell) e^{\sigma |\ell|}\,.
        \end{split}
    \]
    Since  $p\geq n+1$ and $\chi(\sigma \ell) \neq 0$ only for $\sigma |\ell| \leq 1$, the last series is convergent and its value depends on $p$ and $n$ only, thus we have proved (ii).

    Concerning (i), we now first use item (i) in Lemma \ref{lem:KeKapponeKarino} and write $A_S(\varphi)=A_S(\varphi) \int_{\mathbb{R}^n} K(z) \, dz$. Then
    \[
        \begin{split}
            A_S(\varphi)-(A_S)_\sigma(\varphi)&= \int_{\mathbb{R}^n} K(z) (A_S(\varphi)-A_S(\varphi-\sigma z)) \, dz \\ &\hspace{-0.42cm}\overset{\eqref{eq:TaylorMcTaylor},\eqref{eq:PoliTaylor}}= \int_{\mathbb{R}^n} K(z) \big( T_{A_S}^{(p-1)}(\varphi,-\sigma z)-R_S^{(p)}(\varphi,-\sigma z)\big) \, dz \\
            &\hspace{-0.69cm}\overset{\text{Lemma \ref{lem:KeKapponeKarino}-(ii)}}{=} -\int_{\mathbb{R}^n} K(z) R_S^{(p)}(\varphi,-\sigma z) \, \ud z\,,
        \end{split}
    \]
    where in the last step we used the fact that $T_{A_S}^{(p-1)}$ is a polynomial in $z$ and that $K$ integrated against any polynomial is zero.
    We now estimate both sides of the last equation in $C^0$ norm to get
    \[
        \begin{split}
        \Vert A_S - (A_S)_\sigma \Vert_{C^0(\T^n;\mathcal{B}(\cH_S))} &= \sup_{\varphi \in \T^n} \left\Vert \int_{\mathbb{R}^n} K(z) R_S^{(p)}(\varphi,-\sigma z) \, \ud z \right \Vert_{\operatorname{op}} \\ 
        &\leq C_{p,n} \sigma^p \Vert A_S \Vert_{C^p(\T^n;\mathcal{B}(\cH_S))} \int_{\mathbb{R}^n} |K(z)| |z|^p \, dz \leq  C_{p,n} \sigma^p \Vert A_S \Vert_{C^p(\T^n; \cB(\cH_S))}\,.
        \end{split}
    \]
\end{proof}

\begin{proof}[Proof of Proposition \ref{prop.smooth.op}]
	For any $S \in \mathcal{P}_c(\Lambda)$ one constructs $(A_S)_\sigma$ as in Lemma \ref{lem:Casseruola}. We define $A_\sigma:=\sum_{S \in \mathcal{P}_c(\Lambda)} (A_S)_\sigma$ and we show that such $A_\sigma$ satisfies items (i)-- (iii). Item (i) is proven using Lemma \ref{lem:Casseruola}
	\begin{equation}
		\begin{split}
			\Vert A - A_\sigma\Vert_{\kkk,C^{0}} &=\sup_{x \in \Lambda} \sum_{\substack{S \in \mathcal{P}_c(\Lambda) \\ x \in S}} \Vert A_S-(A_S)_\sigma \Vert_{C^{0}(\mathbb{T}^n; \mathcal{B}(\mathcal{H}_S))} e^{\kappa|S|} \\
			&\hspace{-0.7cm}\overset{\text{Lemma \ref{lem:Casseruola}-(i)}}{\leq} \sup_{x \in \Lambda} \sum_{\substack{S \in \mathcal{P}_c(\Lambda) \\ x \in S}} \tC_1 \sigma^{p} \Vert A_S \Vert_{C^p(\mathbb{T}^n; \mathcal{B}(\mathcal{H}_S))} e^{\kkk |S|} \\
			&= \tC_1 \sigma^{p}\sup_{x \in \Lambda} \sum_{\substack{S \in \mathcal{P}_c(\Lambda) \\ x \in S}} \Vert A_S \Vert_{C^p(\mathbb{T}^n;\mathcal{B}(\mathcal{H}_S))} e^{\kkk |S|} = \tC_1 \sigma^{p} \Vert A \Vert_{\kappa,C^p} \, .
		\end{split}
	\end{equation}
	Concerning the second item,
	\begin{equation}
		\begin{split}
			\Vert A_\sigma \Vert_{\kkk, \sigma} &= \sup_{x \in \Lambda} \sum_{\substack{S \in \mathcal{P}_c(\Lambda) \\ x \in S}} \Vert( A_S)_\sigma \Vert_{C^\omega_\sigma(\mathbb{T}^n_\sigma;\mathcal{B}(\mathcal{H}_S))} e^{\kkk |S|} \\
			&\hspace{-0.75cm}\overset{\text{Lemma \ref{lem:Casseruola}-(ii)}}{\leq} \sup_{x \in \Lambda} \sum_{\substack{S \in \mathcal{P}_c(\Lambda) \\ x \in S}} \tC_2 \Vert A_S \Vert_{C^p(\mathbb{T}^n;\mathcal{B}(\mathcal{H}_S))} e^{\kkk|S|} = \tC_2 \Vert A \Vert_{\kkk,C^p} \, .
		\end{split}
	\end{equation}
    Item (iii) follows immediately by linearity from Item (iii) of Lemma \ref{lem:Casseruola}.
\end{proof}

\subsection{Time rescaling and normal form}\label{sec:nf} 
Given a time dependent Hamiltonian $H$, let $U_H(t,s)$ be the 2-parameters family of unitary operators solving the Cauchy problem
 \begin{equation}\label{flow.H}
 \begin{aligned}
 &\partial_t U_H(t, s) = -\ii H(\lambda \nu t) U_H(t,s)\,, \\
 &U_H(s,s) = \mathbb{1}\,.
 \end{aligned}
 \end{equation}
 Equivalently, $\tU_H(\tit, \ts):= U_H(\lambda^{-1} \tit, \lambda^{-1} \ts)$ satisfies
 \begin{equation}\label{flow.tU}
 \begin{aligned}
 &\partial_{\tit} \tU_H(\tit, \ts) = -\ii \lambda^{-1} H(\nu \tit) \tU_H(\tit,\ts)\,, \\
 &\tU_H(\ts,\ts) = \mathbb{1}\,.
 \end{aligned}
 \end{equation}
 In all cases, we denote $U_H(t,0) \equiv U_H(t)$, $\tU_H(\tit, 0) \equiv \tU_H(\tit)$. 
 
 For the iterative procedure we will rely on an analytic smoothing of $V$ in which the analyticity strip $\sigma$ depends on $\lambda$ (see \eqref{lolly.molly.dolly} below).
 An immediate consequence of Proposition \ref{prop.smooth.op}
 is the following: 
 \begin{corollary}[Corollary to Proposition \ref{prop.smooth.op}]\label{cor.lissage}
 	Let $n \in \N$, $p \geq n + 1$ and $\kappa \geq 0$. There exist $\tC_1, \tC_2>0$ depending on $p$ and $n$ only, $V_\lambda \in \cO_{\kappa, \sigma}$, and $R_\lambda \in \cO_{\kappa, C^{0}}$ such that
 	\begin{equation}
 	H(\nu \tit) = H_0 + V_{\lambda}(\nu \tit) + R_\lambda(\nu \tit)\,,
 	\end{equation}
 	with $\|V_\lambda \|_{\kappa
    , \sigma} \leq  \tC_1 \|V\|_{\kappa, C^p}$ and $\|R_\lambda\|_{\kappa, C^0} \leq \tC_2 \sigma^p \|V\|_{\kappa, C^p}\,.$
 \end{corollary}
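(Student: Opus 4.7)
The statement is a direct corollary of Proposition \ref{prop.smooth.op}, so my plan is essentially to apply that proposition to $V$ and rearrange the resulting decomposition to match the claim.

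\medskip

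Concretely, since $V \in \cO_{\kappa, C^p}$ and $p \geq n+1$, I invoke Proposition \ref{prop.smooth.op} with $A=V$ and the given $\sigma \in (0,1)$. This produces an operator $V_\sigma \in \cO_{\kappa, \sigma}$ satisfying
\begin{equation}
  \|V - V_\sigma\|_{\kappa, C^0} \leq \tC_1'\, \sigma^p\, \|V\|_{\kappa, C^p}, \qquad \|V_\sigma\|_{\kappa, \sigma} \leq \tC_2'\, \|V\|_{\kappa, C^p},
\end{equation}
for constants $\tC_1', \tC_2'$ depending only on $p$ and $n$ (I primed them to avoid a clash with the relabeling in the corollary). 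I then set $V_\lambda := V_\sigma$ and define the remainder
\begin{equation}
  R_\lambda(\varphi) := V(\varphi) - V_\lambda(\varphi),
\end{equation}
so that by construction $H(\nu \tit) = H_0 + V(\nu \tit) = H_0 + V_\lambda(\nu \tit) + R_\lambda(\nu \tit)$, as required.

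\medskip

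The size estimates follow immediately: the bound on $\|V_\lambda\|_{\kappa,\sigma}$ is item (ii) of Proposition \ref{prop.smooth.op} (identifying the $\tC_1$ of the corollary with $\tC_2'$), and the bound on $\|R_\lambda\|_{\kappa, C^0}$ is item (i) (identifying the $\tC_2$ of the corollary with $\tC_1'$). The only thing to check for membership $R_\lambda \in \cO_{\kappa, C^0}$ is that the interaction $\{V_S - (V_S)_\sigma\}_{S \in \mathcal{P}_c(\Lambda)}$ has finite $\|\cdot\|_{\kappa, C^0}$ norm, which is ensured by the same estimate.

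\medskip

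There is essentially no obstacle here: the corollary is a packaging lemma that records the smoothing result in the form used by the normal form iteration. The one minor notational point to flag is the swap of the labels $\tC_1, \tC_2$ between the Proposition and the Corollary, which is purely cosmetic.
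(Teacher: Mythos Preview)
Your proposal is correct and matches the paper's intended argument: the paper states the corollary as ``an immediate consequence of Proposition \ref{prop.smooth.op}'' without further proof, and your application of that proposition to $V$ with $V_\lambda := V_\sigma$ and $R_\lambda := V - V_\lambda$ is exactly the unpacking required. Your observation about the cosmetic swap of the labels $\tC_1, \tC_2$ is also accurate.
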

 
 Consistently with the notation of Proposition \ref{prop.smooth.op}, we could have used $V_\sigma$ and $R_\sigma$  instead of $V_\lambda$ and $R_\lambda$ but since the two parameters are linked by \eqref{lolly.molly.dolly}, we prefer to use $V_\lambda$ and $R_\lambda$ in the following.
 
 \begin{definition}[Conjugate dynamics]
 	Given $H_1(\nu \tit)$ and $H_2(\nu \tit)$ two time-quasiperiodic self-adjoint operators, we say that the time quasi-periodic unitary operator $Y(\nu \tit)$ conjugates the dynamics of $H_1(\nu \tit)$ to the dynamics of $H_2(\nu \tit)$ if
 	\begin{equation}\label{def.conj}
 	U_{H_2}(\tit)= Y(\nu \tit) U_{H_1}(\tit) \,.
 	\end{equation}
 \end{definition}
The goal of this section is to prove the following: 
 \begin{proposition}[Normal form]\label{normal.form}
 	Let $	\kappa_{\fin} := \frac{\kappa}{2}$ and fix $b\in (0,\ \frac{p}{p + \tau})$ and $\epsilon \in (0, 1 - \frac{p+\tau}{p} b)$. Let furthermore
 	\begin{equation}\label{lolly.molly.dolly}
	\sigma := \lambda^{-\tt A}\,, \quad {\tt A}:=\frac{1-b- \epsilon}{\tau} \,.
 	\end{equation}
 	There exists an intensive constant $\lambda_0$, such that, if $\lambda>\lambda_0$, the following holds. There exist
 	a unitary transformation $Y^{(\fin)}$ which conjugates the dynamics of $\lambda^{-1} H$ to the dynamics of a self-adjoint operator $H^{(\fin)} \in \cO_{\kappa_\fin, 0}$,
 	with the following properties:
 	\begin{equation}\label{eq:Hfin}
 	H^{(\fin)}(\nu \tit) = \lambda^{-1} \left(H_0 + Z_\lambda^{(\fin)} + V_{\lambda}^{(\fin)}(\nu \tit) +  R_{\lambda}^{(\fin)} (\nu \tit) \right)\,,
 	\end{equation}
 	with
 	\begin{enumerate}
 	\item $Z_\lambda^{(\fin)}$ time independent, $\|Z_\lambda^{(\fin)}\|_{\kappa_\fin} \leq D \lambda^{-b} \|V\|_{\kappa, C^p}$
 	\item $\| V_\lambda^{(\fin)}\|_{\kappa_\fin, 0} + \|R_\lambda^{(\fin)}\|_{\kappa_\fin, C^0} \leq D \lambda^{-\frac{p(1-b-\epsilon)}{\tau}} \|V\|_{\kappa, C^p}\,,$ $\quad D :=  \frac{2 e}{e-1} \max\{\tC_1, \tC_2\}\,,$ with $\tC_1$, $\tC_2$ as in Corollary \ref{cor.lissage}
 	\item $\| Y(\nu \tit) A Y^*(\nu \tit) - A\|_{\kappa_\fin, C^0} \leq \frac{2e}{e-1}\lambda^{-b} \|A\|_{\kappa_\fin, C^0}$ \quad for any $A \in \cO_{\kappa_\fin, C^0}$.
 	\end{enumerate}
 \end{proposition}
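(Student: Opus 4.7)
My plan is to combine the analytic smoothing of Corollary \ref{cor.lissage} with a finite iterative normal form scheme of Lie--series type, adapted to the quantum many-body setting. Applying the corollary with the choice $\sigma=\lambda^{-{\tt A}}$ from \eqref{lolly.molly.dolly} splits $H(\nu\tit)=H_0+V_\lambda(\nu\tit)+R_\lambda(\nu\tit)$, with $V_\lambda$ analytic on a strip of width $\sigma$ and $\|R_\lambda\|_{\kappa,C^0}\lesssim\sigma^p\|V\|_{\kappa,C^p}=\lambda^{-p(1-b-\epsilon)/\tau}\|V\|_{\kappa,C^p}$. Since this remainder already matches the size targeted in item (2), it will be transported through the iteration without further reduction, and all the normalization work is spent on the analytic piece $V_\lambda$.

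I then set up an induction over a fixed intensive number of steps $N$. The inductive ansatz is that after $m$ conjugations the rescaled Hamiltonian has the form $\lambda^{-1}(H_0+Z^{(m)}+V^{(m)}+R^{(m)})$, with $Z^{(m)}$ time independent, $V^{(m)}$ zero-mean analytic on a strip $\sigma_m=\sigma(1-m/N)$, $R^{(m)}$ collecting the accumulated non-analytic part, and the locality exponent decreasing as $\kappa_m=\kappa(1-m/(2N))$ (so that $\kappa_N=\kappa/2=\kappa_\fin$). The next conjugation $Y_{m+1}=e^{\ii G_{m+1}(\nu\tit)}$ is obtained by solving mode-by-mode in Fourier the homological equation
\[
\omega\cdot\partial_\varphi G_{m+1}+\ii\lambda^{-1}[G_{m+1},H_0+Z^{(m)}]=-\lambda^{-1}V^{(m)}.
\]
Using \eqref{eq:capillary} together with a commutator estimate controlling $\|\mathrm{ad}_{H_0+Z^{(m)}}\|$ on local operators, for $\lambda$ large enough the divisor $\omega\cdot\ell-\lambda^{-1}\mathrm{ad}_{H_0+Z^{(m)}}$ is invertible with norm at most $(\lambda\gamma)^{-1}|\ell|^{\tau}$ on each Fourier mode, and Cauchy estimates with strip loss $\delta=\sigma/N$ yield
\[
\|G_{m+1}\|_{\kappa_{m+1},\sigma_{m+1}}\lesssim \gamma^{-1}N^{\tau}\lambda^{-(1+b+\epsilon)}\|V^{(m)}\|_{\kappa_m,\sigma_m}.
\]
Expanding the conjugated Hamiltonian via the Baker--Campbell--Hausdorff formula, the first-order term cancels $-\lambda^{-1}V^{(m)}$ by design; the mean of the surviving higher-order contribution is placed into $\Delta Z^{(m+1)}$ and its zero-mean part into $V^{(m+1)}$, both of size $\lesssim\|G_{m+1}\|\,\|V^{(m)}\|$.

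This produces the quadratic recursion $\|V^{(m+1)}\|\lesssim\gamma^{-1}N^{\tau}\lambda^{-(1+b+\epsilon)}\|V^{(m)}\|^2$, which drives $\|V^{(N)}\|$ below the smoothing threshold $\lambda^{-p(1-b-\epsilon)/\tau}\|V\|_{\kappa,C^p}$ after an intensive number of steps $N\sim \log_2(1+p/\tau)$; combining with the inert $R_\lambda$ yields item (2). Item (1) follows by telescoping $Z^{(\fin)}=\sum_{m=1}^{N}\Delta Z^{(m)}$: the first term $\|\Delta Z^{(1)}\|\lesssim \lambda^{-(1+b+\epsilon)}\|V\|_{\kappa,C^p}^2$ already dominates, leaving ample room in the power of $\lambda$ relative to the stated bound $D\lambda^{-b}\|V\|_{\kappa,C^p}$. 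Item (3) follows from $Y^{(\fin)}=Y_N\cdots Y_1$ via the Lie-series bound $\|e^{\ii G_m}Ae^{-\ii G_m}-A\|\leq (e^{\|G_m\|}-1)\|A\|\cdot 2$, combined with a telescoping estimate $\|Y^{(\fin)}AY^{(\fin)*}-A\|\leq \|A\|\sum_m 2\|G_m\|\prod_{k<m}e^{\|G_k\|}$; since each $\|G_m\|\lesssim \lambda^{-(1+b+\epsilon)}$ and $N$ is intensive, this geometric sum produces the explicit prefactor $\tfrac{2e}{e-1}\lambda^{-b}$. The main obstacle I expect is propagating the locality structure through the Lie-series expansion at each step: commutators of operators in $\cO_{\kappa_m,\sigma_m}$ degrade the locality exponent slightly, so the induction has to be arranged to distribute the total locality loss $\kappa/2$ uniformly across the $N$ steps via the standard commutator estimates for these classes. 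Because $N$ is finite and intensive (rather than a KAM limit), no super-convergent scheme is needed and the quadratic per-step reduction closes the induction directly.
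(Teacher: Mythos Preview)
Your overall architecture (smoothing, then iterated Lie conjugations, then telescoping) matches the paper's, but the heart of your induction differs in a way that does not close in the local-operator setting. The paper does \emph{not} solve your homological equation
\[
\nu\cdot\partial_\varphi G_{m+1}+\ii\lambda^{-1}[G_{m+1},H_0+Z^{(m)}]=-\lambda^{-1}V^{(m)};
\]
it solves only the scalar equation $(\nu\cdot\partial_\varphi) G^{(n)} +\lambda^{-1}V^{(n)}= \lambda^{-1}(V^{(n)})^{\uv}+\lambda^{-1}\langle V^{(n)}\rangle$ (Lemma~\ref{lemma.homo}), with no $\mathrm{ad}_{H_0}$ on the left. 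The reason is structural: on the spaces $\cO_\kappa$ the operator $\mathrm{ad}_{H_0}$ is \emph{not} bounded --- every commutator enlarges supports and therefore forces a loss in the locality parameter (this is precisely Lemma~\ref{lemma.commut}). The Neumann series you implicitly invoke to invert $\nu\cdot\ell-\lambda^{-1}\mathrm{ad}_{H_0+Z}$ has $j$-th term of schematic size $(j/e)^j\big(\lambda^{-1}|\nu\cdot\ell|^{-1}\|H_0\|/\delta\big)^j$ after a single $\delta$-loss in $\kappa$ (same mechanism as in the proof of Lemma~\ref{lemma.commut}), but lacks the $1/j!$ that rescues the Lie series, so it diverges for every $\delta>0$. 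Your sentence ``invertible with norm at most $(\lambda\gamma)^{-1}|\ell|^{\tau}$'' treats $\mathrm{ad}_{H_0}$ as a bounded perturbation of the scalar divisor, which it is not in these norms.

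Because the paper keeps only the scalar equation, the term $e^{-\ii G^{(n)}}H_0e^{\ii G^{(n)}}-H_0$ is placed into $V^{(n+1)}$, and the reduction per step is \emph{linear} rather than quadratic: $\|V^{(n+1)}\|\lesssim \delta^{-1}\|G^{(n)}\|\,\|H_0\|$. To reach the target $\sigma^p$ the paper then takes $n_*=\ln(\lambda^{-b}\sigma^{-p})\sim\ln\lambda$ steps (so the number of steps is \emph{not} intensive), with $\delta=\kappa/(2n_*)$ and a fixed ultraviolet cutoff $K$ chosen so that each step contracts $V^{(n)}$ by exactly $e^{-1}$; see \eqref{cento.passi} and the Remark following Lemma~\ref{lem.iter}. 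Your quadratic recursion and the claim ``intensive $N\sim\log_2(1+p/\tau)$'' thus rest on a homological equation that cannot be solved here. A secondary issue: your Item~3 bound via $(e^{\|G_m\|}-1)\|A\|$ is an operator-norm inequality and does not control the $\cO_{\kappa_\fin,C^0}$ norm required in the statement; the paper instead applies Lemma~\ref{lemma.commut}(ii) at each step (again with a $\kappa$-loss) and telescopes as in \eqref{comp.yn}.
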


Before proving the Lemma, we state some intermediate auxiliary results.
\begin{lemma}\label{lemma.porca.troia}
	For any $A \in \cO_{\kappa, 0}$, with $\kappa \geq 0$,  we have
	$\|A\|_{\kappa, C^0} \leq \|A\|_{\kappa, 0}\,.$ 
\end{lemma}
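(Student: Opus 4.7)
The statement compares two norms of the same local object, both built as $\sup_{x \in \Lambda} \sum_{S \in \mathcal{P}_c(\Lambda),\, S \ni x} (\cdot)\, e^{\kappa|S|}$. By monotonicity of the weighted sum, it therefore suffices to establish the pointwise per-patch bound
\begin{equation*}
    \|A_S\|_{C^0(\T^n;\mathcal{B}(\cH_S))} \leq \|A_S\|_{C^\omega_0(\T^n;\mathcal{B}(\cH_S))} \quad \text{for every } S \in \mathcal{P}_c(\Lambda),
\end{equation*}
and then multiply by $e^{\kappa|S|}$, sum over $S \ni x$, and take the supremum over $x \in \Lambda$.

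For the pointwise step, I would use that the hypothesis $A \in \cO_{\kappa, 0}$ forces $\sum_{\ell \in \Z^n} \|\widehat{A_S}(\ell)\|_{\operatorname{op}} < \infty$ (at the level of each $S$), so the operator-valued Fourier series
\begin{equation*}
    A_S(\varphi) = \sum_{\ell \in \Z^n} \widehat{A_S}(\ell)\, e^{\ii \ell \cdot \varphi}
\end{equation*}
converges absolutely in $\mathcal{B}(\cH_S)$ and represents $A_S(\varphi)$ at every $\varphi \in \T^n$. Applying the triangle inequality and $|e^{\ii \ell \cdot \varphi}| = 1$ yields
\begin{equation*}
    \|A_S(\varphi)\|_{\operatorname{op}} \leq \sum_{\ell \in \Z^n} \|\widehat{A_S}(\ell)\|_{\operatorname{op}} = \|A_S\|_{C^\omega_0(\T^n;\mathcal{B}(\cH_S))},
\end{equation*}
uniformly in $\varphi$; taking the supremum over $\varphi$ on the left-hand side finishes the per-patch bound.

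There is no substantive obstacle: the lemma is essentially a repackaging at the lattice level of the elementary estimate $\sup_\varphi \|\sum_\ell c_\ell e^{\ii \ell \cdot \varphi}\|_{\operatorname{op}} \leq \sum_\ell \|c_\ell\|_{\operatorname{op}}$ for operator-valued Fourier series, with the weight $e^{\kappa|S|}$ preserved termwise throughout.
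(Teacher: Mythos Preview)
Your proposal is correct and follows essentially the same approach as the paper: establish the per-$S$ bound $\sup_{\varphi}\|A_S(\varphi)\|_{\operatorname{op}} \leq \sum_{\ell}\|\widehat{A_S}(\ell)\|_{\operatorname{op}}$ via the triangle inequality on the Fourier series, then carry it through the weighted sum and supremum.
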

\begin{proof}
	For any $S \in \mathcal{P}_c(\Lambda)$ one has
	$$
	\sup_{\varphi \in \T^n}\|A_S(\varphi)\|_{\oop} = \sup_{\varphi \in \T^n} \left\| \sum_{\ell \in \Z^n} \widehat{A_S}(\ell) e^{\ii \ell \cdot \varphi}\right\|_{\oop} \leq \sum_{\ell \in \Z^n} \| \widehat{A_S}(\ell)\|_{\oop}\,,
	$$
	thus
	$$
	\|A\|_{\kappa, C^0} = \sup_{x \in \Lambda} \sum_{\substack{S \in \mathcal{P}_c(\Lambda) \\ x \in S}} \sup_{\varphi \in \T^n}\|A_S(\varphi)\|_{\oop} e^{\kappa|S|} \leq \sup_{x \in \Lambda} \sum_{\substack{S \in \mathcal{P}_c(\Lambda) \\ x \in S}}  \sum_{\ell \in \Z^n} \| \widehat{A_S}(\ell)\|_{\oop} e^{\kappa |S|} = \|A\|_{\kappa, 0}\,.
	$$
\end{proof}

In the next Lemma, to show smallness of $e^{G}Ae^{-G}-A$ we rely on the representation
\begin{equation}\label{eq:commuta.che.ti.passa}
    e^G A e^{-G}=\sum_{r=0}^{+\infty} \frac{1}{r!} \mathrm{Ad}^r_G A \, , \qquad \mathrm{Ad}^r_G A:=[G,\mathrm{Ad}^{r-1}_G A] \quad \forall r \geq 1 \, , \qquad \mathrm{Ad}^0_G A=A \, .
\end{equation}

\begin{lemma}[Commutator expansions]\label{lemma.commut}
	Let $\delta >0$ and $G \in \cO_{\kappa, \sigma}$ for some $\kappa, \sigma \geq 0$. If there exists $\eta \in (0, 1)$ such that
	\begin{equation}\label{little.power.series}
	\frac{4 e^{-\kappa} \|G\|_{\kappa + \delta, \sigma}}{\delta} \leq \eta\,,
	\end{equation}
	then
	\begin{itemize}
	\item[(i)] For any $A \in \cO_{\kappa + \delta, \sigma}$ one has
	\begin{align}
	\|e^{G} A e^{-G} - A\|_{\kappa, \sigma} &\leq \frac{C_\eta e^{-\kappa}}{\delta} \|A\|_{\kappa + \delta, \sigma} \|G\|_{\kappa + \delta, \sigma} 
	\,,
	\end{align}
	with $C_\eta := \frac{4}{1-\eta}>0$.
	\item[(ii)] For any $A \in \cO_{\kappa + \delta, C^0}$ one has
	\begin{align}
	\|e^{G} A e^{-G} - A\|_{\kappa, C^0} &\leq \frac{C_\eta e^{-\kappa}}{\delta} \|A\|_{\kappa + \delta, C^0} \|G\|_{\kappa + \delta, 0} 
	\,.
	\end{align}
	\end{itemize}
\end{lemma}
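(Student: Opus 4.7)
My plan is to use the commutator series representation \eqref{eq:commuta.che.ti.passa}, bound each term $\mathrm{Ad}_G^r A/r!$ via a single-commutator estimate iterated through $r$ telescoping scales of the locality parameter $\kappa$, and then sum the resulting series using the smallness hypothesis \eqref{little.power.series} to reduce to a convergent geometric series.

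The one-step estimate I would establish is, for any $\delta' > 0$ and either $\bullet \in \{\sigma, C^0\}$,
\[
\|[G, A]\|_{\kappa, \bullet} \;\leq\; \frac{4\, e^{-\kappa}}{\delta'\, e}\, \|G\|_{\kappa+\delta', \sigma}\, \|A\|_{\kappa, \bullet}\,.
\]
Starting from the decomposition $([G,A])_T = \sum_{S \cup S' = T,\ S \cap S' \neq \emptyset}[G_S, A_{S'}]$ together with submultiplicativity of the relevant norms (using Lemma \ref{lemma.porca.troia} to dominate $\|G_S\|_{C^0}$ by $\|G_S\|_{C^\omega_\sigma}$ in the $C^0$ case), I would split the sum over $T \ni x$ into the two cases $x \in S$ and $x \in S'$; the identity $|T| = |S| + |S'| - |S \cap S'|$ and $|S \cap S'| \geq 1$ factor out the $e^{-\kappa}$; and reparametrizing the double sum through a common anchor $y \in S \cap S'$ costs a size factor $|S|$ that is absorbed through the elementary bound $|S|\, e^{-\delta' |S|} \leq (\delta' e)^{-1}$, which is precisely what upgrades $\|G\|_\kappa$ to $\|G\|_{\kappa+\delta'}$.

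With the one-step bound in hand, I would iterate with telescoping scales $\kappa_j := \kappa + (r-j)\delta/r$, $j = 0, \ldots, r$, spending $\delta' = \delta/r$ at each step and using monotonicity of the norm in $\kappa$ together with $e^{-\kappa_j} \leq e^{-\kappa}$. An induction on $j$ yields
\[
\|\mathrm{Ad}_G^r A\|_{\kappa, \bullet} \;\leq\; \left(\frac{4 r\, e^{-\kappa}}{\delta\, e}\right)^r \|G\|_{\kappa+\delta, \sigma}^r\, \|A\|_{\kappa+\delta, \bullet}\,.
\]
Dividing by $r!$, invoking the elementary Stirling-type inequality $r^r/r! \leq e^r$ collapses the factor $(r/e)^r$, and the sum $\sum_{r \geq 1}\tfrac{1}{r!}\|\mathrm{Ad}_G^r A\|_{\kappa, \bullet}$ reduces to the geometric series $\|A\|_{\kappa+\delta, \bullet}\sum_{r \geq 1}\bigl(4 e^{-\kappa}\|G\|_{\kappa+\delta, \sigma}/\delta\bigr)^r$, whose ratio is exactly the quantity in \eqref{little.power.series}. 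Under the hypothesis this sums to $\tfrac{4 e^{-\kappa}\|G\|_{\kappa+\delta, \sigma}}{\delta(1-\eta)}\,\|A\|_{\kappa+\delta, \bullet}$, matching the claim with $C_\eta = 4/(1-\eta)$.

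The main technical subtlety is the scale accounting in the iteration: a single commutator cannot afford to spend more than $\delta/r$ of locality, which naively produces a factor $(r/\delta)^r$ per term that would blow up the series. It is exactly the $r^r \leq e^r r!$ cancellation that rescues convergence, converting the nominal factorial cost into the geometric series in $\|G\|/\delta$, and this is why the smallness condition must enter the hypothesis precisely in the form \eqref{little.power.series}.
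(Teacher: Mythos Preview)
Your overall strategy is sound and yields the same key intermediate estimate the paper obtains (its inequality \eqref{ad.daveni}), namely $\|\mathrm{Ad}_G^r A\|_{\kappa,\bullet}\le (4r\,e^{-\kappa}/(e\delta))^r\|G\|_{\kappa+\delta}^r\|A\|_{\kappa+\delta,\bullet}$, after which the Stirling cancellation $r^r/r!\le e^r$ and the geometric summation go through exactly as you describe. There is, however, one genuine error in your single-commutator step: the asymmetric bound $\|[G,A]\|_{\kappa,\bullet}\le \tfrac{4e^{-\kappa}}{\delta' e}\|G\|_{\kappa+\delta',\sigma}\|A\|_{\kappa,\bullet}$ is false. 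Take $G=\sum_y G_{\{y\}}$ supported on single sites and $A=A_{S'}$ on one large set $S'$; then $\|[G,A]\|_\kappa$ carries a factor $|S'|$ that cannot be absorbed into $\|G\|_{\kappa+\delta'}$, since $G$ lives on sets of size $1$. When $x\in S'$ the anchor $y\in S\cap S'$ produces a factor $|S'|$, not $|S|$, and this must go into $A$. The correct one-step bound has $\|A\|_{\kappa+\delta',\bullet}$ on the right. Fortunately this does not affect your iteration at all: your telescoping scheme already places $\mathrm{Ad}_G^{j-1}A$ at scale $\kappa_{j-1}=\kappa_j+\delta/r$, so the corrected bound feeds straight in.

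The paper's route is structurally different. Rather than iterating a one-step estimate, it expands $\mathrm{Ad}_G^r A$ directly as a sum over chains $S_1,\dots,S_{r+1}$ with nested connectivity constraints, invokes a combinatorial reorganization (Lemma~A.1 of \cite{Gallone-Langella-2024}) to rewrite this as anchored nested sums over a permutation of the sets, and then applies a \emph{single} Cauchy estimate $(|S_1|+\cdots+|S_{r+1}|)^r e^{-\delta(|S_1|+\cdots+|S_{r+1}|)}\le (r/(e\delta))^r$ to the total support size. Your iterative argument is more elementary in that it avoids the external combinatorial lemma and keeps the bookkeeping to a two-set problem; the paper's direct expansion makes the interaction geometry of the $r$-fold commutator explicit in one shot. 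Both land on the identical $r$-term bound and the same geometric endgame.
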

\begin{proof}
	Item $(i)$ is proven in \cite[Lemma 4.2]{Gallone-Langella-2024}. The proof of Item $(ii)$ is analogous, but here we repeat it for the sake of completeness. We start with proving that, for any $ \uozokozo  \in \N$,
		\begin{equation}\label{ad.daveni}
		\Vert \textnormal{Ad}^ \uozokozo _G A \Vert_{\kappa, C^0} \; \leq \; \left(\frac{ \uozokozo }{e}\right)^ \uozokozo  \frac{(4 e^{-\kappa})^ \uozokozo }{\delta^{ \uozokozo }} \Vert G\Vert^{ \uozokozo }_{\kappa + \delta, C^0} \Vert A \Vert_{\kappa + \delta, C^0}\,.
		\end{equation}	
        Indeed, $\forall  \uozokozo $ one has
		\[
		\begin{split}
		\Vert \mathrm{Ad}^ \uozokozo _G A \Vert_{\kappa, C^0} \;&=\; \sup_{x \in \Lambda} \sum_{\substack{S \in \mathcal{P}_c(\Lambda) \\ x \in S}} e^{\kappa |S|}  \sup_{\varphi \in \T^n} \Vert (\mathrm{Ad}_G^ \uozokozo  A)_S (\varphi) \Vert_{\operatorname{op}} \\
		&=\;\sup_{x \in \Lambda} \sum_{\substack{S \in \mathcal{P}_c(\Lambda) \\ x \in S}} \sum_{\substack{S_1,\dots,S_{ \uozokozo +1} \in \mathcal{P}_c(\Lambda) \, \text{s.t.} \\ \bigcup_{j=1}^{ \uozokozo +1} S_j = S \\ S_r \cap (\bigcup_{r+1}^{ \uozokozo +1} S_\ell) \neq \varnothing \\ \forall r=1, \dots,  \uozokozo }}\sum_{k \in \mathbb{Z}^n} e^{\kappa |S|} \tilde{F}_{S_1,\dots,S_{ \uozokozo +1}}(A,G)\,,
		\end{split}
		\]
		with
		\[
		\tilde{F}_{S_1,\dots,S_{ \uozokozo +1}}(A,G)\;:=\; \sup_{\varphi \in \T^n} \Big \Vert  \Big[{G_{S_1}}(\varphi),\Big[{G_{S_2}}(\varphi),\Big[\dots,\Big[{G_{S_ \uozokozo }}(\varphi),{A_{S_{ \uozokozo +1}}}(\varphi)\Big] \dots\Big]\Big]\Big] \Big\Vert_{\operatorname{op}}\,.
		\]
		Using the fact that $e^{\kappa|S|} \leq  e^{\kappa(|S_1| + \dots + |S_ \uozokozo | -  \uozokozo )}$ and that
		$$
		\tilde{F}_{S_1,\dots,S_{ \uozokozo +1}}(A,G)\leq 2^{ \uozokozo } F_{S_1, \dots, S_{ \uozokozo +1}} (A, G)\,,
		$$
		with
		$$
		\begin{aligned}
		&F_{S_1, \dots, S_{ \uozokozo +1}} (A, G) := \sup_{\varphi \in \T^n} \Vert {G_{S_1}}(\varphi) \Vert_{\operatorname{op}}  \cdots \Vert {G_{S_{ \uozokozo }}}(\varphi)	\Vert_{\operatorname{op}} \Vert {A_{S_{ \uozokozo +1}}} (\varphi)\Vert_{\operatorname{op}}\\
		&\leq \sup_{\varphi \in \T^n} \Vert {G_{S_1}}(\varphi) \Vert_{\operatorname{op}} \cdots \sup_{\varphi \in \T^n} \Vert {G_{S_ \uozokozo }}(\varphi) \Vert_{\operatorname{op}} \sup_{\varphi \in \T^n} \Vert {A_{S_{ \uozokozo +1}}}(\varphi) \Vert_{\operatorname{op}}\,,
		\end{aligned}
		$$
		we get that
		\begin{align*}
		\Vert \textnormal{Ad}_G^ \uozokozo (A)\Vert_{\kappa, C^0} & \leq (2 e^{-\kappa})^{ \uozokozo } \sup_{x \in \Lambda} \sum_{\substack{S_1,\cdots,S_{r+1} \in \mathcal{P}_c(\Lambda) \, \text{s.t.} \\ x \in S_1 \cup \dots \cup S_{ \uozokozo +1} \\
				S_r \cap (\bigcup_{s={r+1}}^{ \uozokozo +1} S_s) \neq \varnothing \\ \forall r=1,\dots, \uozokozo }} e^{(|S_1| + \dots + |S_{ \uozokozo +1}|)\kappa}  F_{S_1, \dots, S_{ \uozokozo +1}} (A, G)\,.
		\end{align*}	
		Then by Lemma A.1 of \cite{Gallone-Langella-2024} one obtains
		\begin{align*}
		\Vert \textnormal{Ad}^ \uozokozo _G A \Vert_{\kappa, C^0} &\leq  (4 e^{-\kappa})^ \uozokozo  \max_{\sigma \in \mathfrak{S}_{ \uozokozo +1}} \sup_{x_{\sigma(1)} \in \Lambda} \sum_{\substack{S_1 \in \mathcal{P}_c(\Lambda) \\ x_{\sigma(1)} \in S_1}} \dots \sup_{x_{\sigma( \uozokozo +1)} \in \Lambda} \sum_{\substack{S_{ \uozokozo +1} \in \mathcal{P}_c(\Lambda) \\ x_{\sigma(\uozokozo+1)} \in S_{r+1}}} e^{\kappa(|S_1| + \dots |S_{ \uozokozo +1}|)}\cdot \\
		& \qquad \quad \cdot (|S_1| + \dots + |S_{ \uozokozo +1}|)^{ \uozokozo } F_{S_1, \dots, S_{ \uozokozo +1}}(A, G)\,,
		\end{align*}
        where $\mathfrak{S}_{r+1}$ is the set of permutations of $r+1$ elements. By Cauchy estimate $\max_{x \in \R^+} e^{-\delta x} x^{ \uozokozo } = \left(\frac{ \uozokozo }{e}\right)^ \uozokozo  \frac{1}{\delta^ \uozokozo }$ one gets
        $$
		\begin{aligned}
		\Vert \textnormal{Ad}^ \uozokozo _G A \Vert_{\kappa, C^0} &\leq (4 e^{-\kappa})^ \uozokozo  \left(\frac{ \uozokozo }{e}\right)^ \uozokozo \delta^{- \uozokozo } \cdot \\
		&\qquad \cdot \max_{\sigma \in \mathfrak{S}_ \uozokozo } \sup_{x_{\sigma(1)} \in \Lambda} \sum_{\substack{S_{1} \in \mathcal{P}_c(\Lambda) \\ x_{\sigma(1)} \in S_1}} \dots \sup_{x_{\sigma( \uozokozo )} \in \Lambda} \sum_{\substack{S_{ \uozokozo +1} \in \mathcal{P}_c(\Lambda) \\ x_{\sigma( \uozokozo +1)} \in S_{ \uozokozo +1}}}  e^{(\kappa +\delta)(|S_1| + \dots |S_{ \uozokozo +1}|)}  F_{S_1, \dots, S_{ \uozokozo +1}}(A, G)\\
		& = (4 e^{-\kappa})^ \uozokozo  \left(\frac{ \uozokozo }{e}\right)^ \uozokozo \delta^{- \uozokozo } \Vert G\Vert_{\kappa + \delta, C^0}^{ \uozokozo } \Vert A\Vert_{\kappa + \delta, C^0}\,.
		\end{aligned}
		$$
		Then, by \eqref{ad.daveni}, one has
		\begin{align*}
		\Big\Vert e^{- G} A e^{ G} - A \Big\Vert_{\kappa, C^0}& = \Big\Vert \sum_{ \uozokozo  \geq 1} \frac{1}{ \uozokozo !}\textnormal{Ad}^ \uozokozo _G A \Big\Vert_{\kappa,C^0} \leq \sum_{ \uozokozo \geq 1} \frac{1}{ \uozokozo !}\left(\frac{ \uozokozo }{e}\right)^ \uozokozo  \frac{(4 e^{-\kappa})^ \uozokozo }{\delta^{ \uozokozo }}\Vert G\Vert^{ \uozokozo }_{\kappa + \delta, C^0} \Vert A \Vert_{\kappa + \delta, C^0} \\ &= \sum_{ \uozokozo  \geq 1} \frac{1}{ \uozokozo !} \left(\frac{ \uozokozo }{e} \right)^ \uozokozo  \left(\frac{4 e^{-\kappa} \|G\|_{\kappa + \delta,C^0}}{\delta}\right)^ \uozokozo  \|A\|_{\kappa + \delta,C^0}\\
		&= \frac{4 e^{-\kappa} \|G\|_{\kappa + \delta,C^0}}{\delta} \sum_{ \uozokozo  \geq 0} \frac{1}{( \uozokozo +1)!} \left(\frac{ \uozokozo +1}{e} \right)^{ \uozokozo +1} \left(\frac{4 e^{-\kappa} \|G\|_{\kappa + \delta, C^0}}{\delta}\right)^ \uozokozo  \|A\|_{\kappa + \delta, C^0}\,.
		\end{align*}
		Since by Stirling formula $ \uozokozo ! \geq \left(\frac{ \uozokozo }{e}\right)^ \uozokozo  \sqrt{2\pi  \uozokozo }$, one has
		$\frac{1}{( \uozokozo +1)!} \left(\frac{ \uozokozo +1}{e}\right)^{ \uozokozo +1} \leq 1$, thus recalling Lemma \ref{lemma.porca.troia} and the smallness assumption \eqref{little.power.series}, one deduces
		$$
		\Big\Vert e^{- G} A e^{ G} - A \Big\Vert_{\kappa, C^0} \leq \frac{4 e^{-\kappa} \|G\|_{\kappa + \delta,C^0}}{\delta} \|A\|_{\kappa + \delta, C^0} \sum_{ \uozokozo \geq 0} \eta^{ \uozokozo } \leq  \frac{4 e^{-\kappa} \|G\|_{\kappa + \delta,C^0}}{\delta(1-\eta)} \|A\|_{\kappa + \delta, C^0}\,.
		$$
	\end{proof}
\begin{remark}\label{vattelapesca}
	Given $A \in \cO_{\kappa, \sigma}$, then its time average $\langle A \rangle$ defined in \eqref{eq:mediamente} satisfies $\| \langle A \rangle \|_{\kappa} = \|\langle A \rangle\|_{\kappa, \sigma} \leq \|A\|_{\kappa, \sigma}$.
\end{remark}

\begin{definition}\label{eq:ServirebbeLaCremaSolare}
	Let $A \in \cO_{\kkk,\sigma}$ and fix $K>0$. We define $A^\uv$ and $A^\ir$ as
	\begin{equation}\label{eq:ir.uv}
		A^\ir(\varphi)=\sum_{S \in \mathcal{P}_c(\Lambda)} \sum_{\substack{\ell \in \mathbb{Z}^n \\ |\ell| \leq K}} \widehat{A_S}(\ell) e^{\ii \ell \cdot \varphi} \, , \qquad A^\uv(\varphi)=A(\varphi)-A^\ir(\varphi) \, .
	\end{equation}
\end{definition}
In the following, we shall not mention explicitly the dependence of $A^\uv$ and $A^\ir$ on $K$.

We recall the following immediate result:
\begin{lemma}[Ultraviolet cut-off]\label{lemma.uv}
	If $A \in \cO_{\kappa, \sigma}$ for some $\kappa\geq 0$ and $\sigma >0$, then $\|A^\ir\|_{\kappa, \sigma}\leq \|A\|_{\kappa, \sigma}$, $ \| A^\ir - \langle A \rangle \|_{\kappa, \sigma} \leq \|A\|_{\kappa, \sigma}$, and
	$\|A^\uv\|_{\kappa, 0} \leq e^{-K \sigma} \|A\|_{\kappa, \sigma}\,.$
\end{lemma}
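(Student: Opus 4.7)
The plan is to derive all three bounds directly from Definition \ref{def:Onorato} by manipulating the Fourier series term by term and then summing; none of the estimates requires more than the elementary inequality $\sum_{\ell \in B} \geq \sum_{\ell \in B'}$ when $B' \subset B$, together with a pointwise replacement of $1$ by $e^{-K\sigma}e^{\sigma|\ell|}$ on the ultraviolet support.

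For the first bound, note that by \eqref{eq:ir.uv} each Fourier coefficient of $A_S^\ir$ equals either $\widehat{A_S}(\ell)$ (if $|\ell|\leq K$) or $0$ (otherwise). Hence for every $S \in \mathcal{P}_c(\Lambda)$ one has
\begin{equation*}
\| A_S^\ir \|_{C^\omega_\sigma(\T^n;\mathcal{B}(\mathcal{H}_S))} = \sum_{|\ell|\leq K} \|\widehat{A_S}(\ell)\|_{\oop} e^{\sigma|\ell|} \leq \sum_{\ell \in \Z^n} \|\widehat{A_S}(\ell)\|_{\oop} e^{\sigma|\ell|} = \| A_S \|_{C^\omega_\sigma(\T^n;\mathcal{B}(\mathcal{H}_S))},
\end{equation*}
and summing over $S$ with the weight $e^{\kappa|S|}$ gives $\|A^\ir\|_{\kappa,\sigma}\leq \|A\|_{\kappa,\sigma}$. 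The second bound is identical once one observes that $\widehat{\langle A_S \rangle}(\ell) = \widehat{A_S}(0)\delta_{\ell,0}$, so $A^\ir - \langle A \rangle$ has Fourier coefficients supported on $\{0<|\ell|\leq K\}$; again the restricted sum is bounded by the full sum.

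For the ultraviolet bound, I would use that $A_S^\uv$ has Fourier support contained in $\{|\ell|>K\}$, on which $1 \leq e^{-K\sigma} e^{\sigma|\ell|}$. Therefore
\begin{equation*}
\| A_S^\uv \|_{C^\omega_0(\T^n;\mathcal{B}(\mathcal{H}_S))} = \sum_{|\ell|>K} \|\widehat{A_S}(\ell)\|_{\oop} \leq e^{-K\sigma}\sum_{|\ell|>K} \|\widehat{A_S}(\ell)\|_{\oop} e^{\sigma|\ell|} \leq e^{-K\sigma} \| A_S \|_{C^\omega_\sigma(\T^n;\mathcal{B}(\mathcal{H}_S))},
\end{equation*}
and multiplying by $e^{\kappa|S|}$ and taking the supremum over $x$ of the sum over $S\ni x$ yields $\|A^\uv\|_{\kappa,0}\leq e^{-K\sigma}\|A\|_{\kappa,\sigma}$. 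The only mild care point, really a bookkeeping issue rather than an obstacle, is to keep straight that the $C^0$ norm of $A_S^\uv$ is dominated by its absolutely convergent Fourier series (as already used in Lemma \ref{lemma.porca.troia}), which is what allows us to bound $\|A^\uv\|_{\kappa,C^0}$ via $\|A^\uv\|_{\kappa,0}$; no other nontrivial step is required.
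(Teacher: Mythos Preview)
Your proof is correct and follows exactly the same approach as the paper's: the paper only writes out the $A^\uv$ estimate (using the pointwise inequality $1 \leq e^{-K\sigma}e^{\sigma|\ell|}$ for $|\ell|>K$ and then summing), and declares the other two analogous, just as you do. Your closing remark about $\|A^\uv\|_{\kappa,C^0}$ is extraneous here, since the lemma only asserts the bound for $\|A^\uv\|_{\kappa,0}$, but it is harmless.
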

\begin{proof}
We only prove the statement for $A^\uv$, as the ones for $A^\ir$ and $A^\ir - \langle A \rangle$ can be deduced in an analogous way. One has
\begin{align*}
    \| A^\uv\|_{\kappa, 0} &= \sup_{x \in \Lambda} \sum_{\substack{S \in \mathcal{P}_c(\Lambda) \\ x \in S}} \sum_{\ell \in \Z^n \atop |\ell| > K} \|\widehat{A_S}(\ell)\|_{\operatorname{op}} e^{\kappa |S|} \leq \sup_{x \in \Lambda} \sum_{\substack{S \in \mathcal{P}_c(\Lambda) \\ x \in S}} \sum_{\ell \in \Z^n \atop |\ell| > K} \|\widehat{A_S}(\ell)\|_{\operatorname{op}} e^{\kappa |S|} e^{-K \sigma} e^{|\ell| \sigma}= e^{-K \sigma} \|A\|_{\kappa, \sigma}\,.
\end{align*}
\end{proof}
\begin{lemma}\label{lemma.homo}
	Let $A \in \cO_{\kappa, \sigma}$, $\nu \in DC^n(\gamma,\tau)$ and $K>0$. The equation
	\begin{equation}\label{hom.sweet.hom}
	(\nu \cdot \partial_{\varphi}) G(\varphi) + A(\varphi) = A^{\uv}(\varphi) + \langle A \rangle
	\end{equation}
	admits a solution $G \in \cO_{\kappa, \sigma}$ with
	$\|G\|_{\kappa, \sigma} \leq \gamma^{-1} K^\tau\|A\|_{\kappa, \sigma}\,.$ 
\end{lemma}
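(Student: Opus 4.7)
The plan is to solve the equation on the Fourier side and then estimate the resulting series using the Diophantine condition \eqref{eq:capillary}. Observe first that the requested identity can be rewritten as
\begin{equation*}
(\nu \cdot \partial_\varphi) G(\varphi) \;=\; -\bigl(A^{\ir}(\varphi) - \langle A \rangle\bigr),
\end{equation*}
since $A - A^\uv = A^\ir$ by Definition \ref{eq:ServirebbeLaCremaSolare}. Expanding into Fourier modes and using that $G$ must also decompose into local pieces $G = \sum_{S\in\mathcal{P}_c(\Lambda)} G_S$, matching coefficients gives, for each $S\in\mathcal{P}_c(\Lambda)$,
\begin{equation*}
\widehat{G_S}(\ell) \;=\; \begin{cases} \dfrac{\ii\, \widehat{A_S}(\ell)}{\nu \cdot \ell} & \text{if } 0 < |\ell| \leq K,\\[4pt] 0 & \text{if } \ell = 0 \text{ or } |\ell| > K. \end{cases}
\end{equation*}
This is a well-defined candidate because the only modes we invert are those with $0 < |\ell| \leq K$, on which $\nu \cdot \ell \neq 0$ by the Diophantine assumption $\nu \in DC^n(\gamma,\tau)$.

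Next I would produce the pointwise Fourier bound. By \eqref{eq:capillary}, for $0 < |\ell| \leq K$,
\begin{equation*}
\|\widehat{G_S}(\ell)\|_{\oop} \;\leq\; \frac{|\ell|^\tau}{\gamma}\,\|\widehat{A_S}(\ell)\|_{\oop} \;\leq\; \frac{K^\tau}{\gamma}\,\|\widehat{A_S}(\ell)\|_{\oop}.
\end{equation*}
Multiplying by $e^{\sigma|\ell|}$ and summing, and then multiplying by $e^{\kappa|S|}$ and taking the supremum over $x \in \Lambda$ of the sum over $S \ni x$ (following Definition \ref{def:Onorato}), one obtains
\begin{equation*}
\|G\|_{\kappa,\sigma} \;=\; \sup_{x\in\Lambda} \sum_{\substack{S \in \mathcal{P}_c(\Lambda) \\ x \in S}} \sum_{\ell \in \Z^n} \|\widehat{G_S}(\ell)\|_{\oop}\,e^{\sigma|\ell|}\,e^{\kappa|S|} \;\leq\; \frac{K^\tau}{\gamma}\,\|A\|_{\kappa,\sigma},
\end{equation*}
which is the claimed estimate and also shows $G \in \cO_{\kappa,\sigma}$. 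Finally, a direct check (termwise application of $\nu\cdot\partial_\varphi$ to the Fourier series) confirms that this $G$ satisfies \eqref{hom.sweet.hom}, since the only Fourier modes of $A$ not reproduced on the right-hand side of the equation are exactly those with $0 < |\ell| \leq K$, and these are cancelled by $(\nu\cdot\partial_\varphi) G$ by construction.

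There is essentially no genuine obstacle here; the only point that deserves care is verifying that the local structure is preserved, i.e.\ that defining $G$ site-by-site via $G_S := \sum_{0<|\ell|\leq K} \frac{\ii \widehat{A_S}(\ell)}{\nu\cdot\ell} e^{\ii \ell\cdot\varphi}$ produces an interaction whose $\|\cdot\|_{\kappa,\sigma}$-norm is controlled as above, which is immediate from the definition. The tradeoff $K^\tau/\gamma$ is the usual small-divisor loss that will be paid in each normal form step and is the reason the ultraviolet cutoff $K$ is kept as a free parameter to be tuned later.
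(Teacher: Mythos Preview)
Your proof is correct and follows essentially the same approach as the paper: define $G$ on the Fourier side by $\widehat{G_S}(\ell) = \ii\,\widehat{A_S}(\ell)/(\nu\cdot\ell)$ for $0<|\ell|\le K$ (equivalently $-\widehat{A_S}(\ell)/(\ii\,\nu\cdot\ell)$), use the Diophantine bound to control each mode by $\gamma^{-1}K^\tau\|\widehat{A_S}(\ell)\|_{\oop}$, and sum. The only cosmetic difference is that you explicitly comment on preservation of the local structure, which the paper leaves implicit.
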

\begin{proof}
	Let
	$$
	G(\varphi):= \sum_{S \in \mathcal{P}_c(\Lambda)} \sum_{\ell \in \Z^n\ \atop 0<|\ell| \leq K} \widehat{G_S}(\ell) e^{\ii \ell \cdot \varphi}\,, \quad \widehat{G_S}(\ell) := -\frac{ \widehat{A_S}(\ell)}{\ii \nu \cdot \ell} \quad \forall \ell \in \mathbb{Z}^n\,, 0<|\ell| \leq K.
	$$
	Then $\nu\cdot \partial_{\varphi} G(\varphi) = - A^{\ir}(\varphi) + \langle A \rangle$, thus, recalling $A(\varphi) = A^{\ir}(\varphi) + A^\uv(\varphi)$, $G(\varphi)$ solves \eqref{hom.sweet.hom}. Furthermore, since $\nu \in DC^n(\gamma, \tau)$, one has
	$$
    \begin{aligned}
       \|G\|_{\kappa, \sigma} &= \sup_{x \in \Lambda} \sum_{\substack{S \in \mathcal{P}_c(\Lambda) \\ x \in S}} \sum_{\ell \in \Z^n \atop 0<|\ell|\leq K} \frac{\|\widehat{A_S}(\ell)\|_{\oop}}{|\nu \cdot \ell|} e^{\kappa|S|} e^{|\ell|\sigma} \\
       &\leq \gamma^{-1} K^\tau \sup_{x \in \Lambda} \sum_{\substack{S \in \mathcal{P}_c(\Lambda) \\ x \in S}} \sum_{\ell \in \Z^n \atop 0<|\ell|\leq K} \|\widehat{A_S}(\ell)\|_{\oop} e^{\kappa|S|} e^{|\ell|\sigma}
       \leq \gamma^{-1} K^\tau \|A\|_{\kappa, \sigma}\,. 
    \end{aligned}
	$$
\end{proof}

\begin{lemma}[Lemma 3.1 of \cite{Bambusi2020}]\label{lem.g.punto}
	Let $G(t)$ be a smooth in time family of self-adjoint operators and $H(t)$ a continuous in time family of self-adjoint operators. Let $\widetilde U(t) := e^{-\ii G(t)} U_H(t)$. Then $\widetilde U(t) = U_{H'}(t)$, with
	$$
	H'(t) = e^{-\ii G(t)} H(t) e^{\ii G(t)} + \int_0^1 e^{-\ii G(t) s} \partial_t G(t) e^{\ii G(t) s} \textrm{d} s\,.
	$$
\end{lemma}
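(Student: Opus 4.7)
The strategy is to verify directly that $\widetilde U(t):=e^{-\ii G(t)}U_H(t)$ satisfies a Schr\"odinger-type equation whose generator equals $H'(t)$; since $U_{H'}$ is by definition the solution of that equation (with the initial datum $U_{H'}(0)=e^{-\ii G(0)}U_H(0)=\widetilde U(0)$), uniqueness of solutions of linear ODEs with continuous-in-time generator yields the claim.

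Concretely, I would differentiate using the Leibniz rule:
\begin{equation}
\partial_t \widetilde U(t)=\bigl(\partial_t e^{-\ii G(t)}\bigr)U_H(t)+e^{-\ii G(t)}\partial_t U_H(t)
=\bigl(\partial_t e^{-\ii G(t)}\bigr)U_H(t)-\ii\, e^{-\ii G(t)}H(t)e^{\ii G(t)}\widetilde U(t),
\end{equation}
where in the last equality I used $\partial_t U_H(t)=-\ii H(t)U_H(t)$ (the sign convention of \eqref{flow.H}) and inserted $e^{\ii G(t)}e^{-\ii G(t)}=\mathbb{1}$ in order to recognize $\widetilde U(t)$ on the right.

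The remaining step is the Duhamel identity for the derivative of a one-parameter family of operator exponentials generated by a smooth $t\mapsto G(t)$. Writing $e^{-\ii G(t)}=\lim_n\bigl(\mathbb{1}-\tfrac{\ii}{n}G(t)\bigr)^n$ (or equivalently viewing $s\mapsto e^{-\ii s G(t)}$ as the solution of $\partial_s F=-\ii G(t)F$ and differentiating in $t$ under the ODE), one obtains the standard formula
\begin{equation}
\partial_t e^{-\ii G(t)}=-\ii\int_0^1 e^{-\ii s G(t)}\,\partial_t G(t)\,e^{-\ii(1-s)G(t)}\,\ud s.
\end{equation}
Substituting this expression, splitting $e^{-\ii(1-s)G(t)}=e^{\ii s G(t)}e^{-\ii G(t)}$, and factoring the resulting $e^{-\ii G(t)}U_H(t)=\widetilde U(t)$ on the right gives
\begin{equation}
\partial_t \widetilde U(t)=-\ii\Bigl[\,e^{-\ii G(t)}H(t)e^{\ii G(t)}+\int_0^1 e^{-\ii s G(t)}\,\partial_t G(t)\,e^{\ii s G(t)}\,\ud s\Bigr]\widetilde U(t)=-\ii H'(t)\,\widetilde U(t),
\end{equation}
which is the desired equation.

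The only mildly delicate point is the justification of the Duhamel formula (the non-commutativity of $G(t)$ at different times forbids the naive $\partial_t e^{-\ii G(t)}=-\ii\,\partial_t G(t)\,e^{-\ii G(t)}$). This is however entirely standard under the stated smoothness hypothesis on $G(t)$: one can derive it either from the Dyson series for $s\mapsto e^{-\ii s G(t)}$, or by noting that $F(s,t):=e^{-\ii s G(t)}$ solves a linear ODE in $s$ whose coefficient depends smoothly on the parameter $t$, so that $\partial_t F$ satisfies an inhomogeneous linear ODE solved by variation of constants. Continuity of $t\mapsto H'(t)$ (which follows from continuity of $t\mapsto H(t)$, smoothness of $G(t)$, and the fact that conjugation and integration preserve continuity) then guarantees existence and uniqueness of $U_{H'}$, completing the proof.
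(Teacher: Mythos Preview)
The paper does not prove this lemma; it merely cites it from \cite{Bambusi2020}. Your direct verification via the Leibniz rule together with the Duhamel formula for $\partial_t e^{-\ii G(t)}$ is the standard argument and is correct, so there is nothing to compare against.
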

\begin{lemma}[Iterative lemma]\label{lem.iter}
	Let $\sigma$ as in \eqref{lolly.molly.dolly} and $b, \epsilon, p, \tau, n$ as in the assumptions of Proposition \ref{normal.form}. Let furthermore
	\begin{equation}\label{cento.passi}
 	K := \ln \left( \frac{2e}{\sigma^{p}}\right) \sigma^{-1}\,, \quad n_* := 	K \sigma  -\ln(2e \lambda^b)
    = \ln\left(\lambda^{-b}\sigma^{-p}\right) = \ln \left( \lambda^{-b+\frac{p(1 - b - \epsilon)}{\tau}}\right)\,,
	\end{equation}
	and for any $n=0, \dots, n_*-1$ let
	\begin{equation}\label{regolar_n}
	\kappa_0 := \kappa\,, \quad \kappa_n := \kappa_0 - n \delta\,, \quad \delta := \frac{\kappa_0}{ 2n_* }\,, \quad \hat{C} := \max\{\tC_1, \tC_2\}\,.
	\end{equation}
	There exist two intensive constants $\lambda_0>0$ and $\hat{C}>0$, such that, if $\lambda> \lambda_0$, the following holds.
	If
	\begin{equation}\label{Hn}
	H^{(n)}(\nu \tit) := \lambda^{-1} \left( H_0 + Z_\lambda^{(n)} + V_\lambda^{(n)}(\nu \tit) + R_\lambda^{(n)}(\nu \tit)\right)\,,
	\end{equation}
	with
	\begin{equation}\label{fatta.cosi}
	\begin{gathered}
	\|Z_{\lambda}^{(n)}\|_{\kappa_n} \leq \hat{C} \lambda^{-b} \Big(\sum_{j=0}^{n-1} e^{-j}\Big) \|V\|_{\kappa, C^p} \quad \text{if } n \geq 1\,, \quad  \|Z_{\lambda}^{(0)}\|_{\kappa_0} = 0\,,\\
	\|V_{\lambda}^{(n)}\|_{\kappa_n, \sigma} \leq \hat{C} \lambda^{-b} e^{-n} \|V\|_{\kappa, C^p} \ \ \text{if }\ n \geq 1\,, \quad \|V^{(0)}_\lambda\|_{\kappa_0, \sigma} \leq \hat{C}\|V\|_{\kappa, C^p}\,, \quad \langle V^{(0)}_\lambda \rangle = 0\,, \\
    \|R_{\lambda}^{(n)}\|_{\kappa_n, C^0} \leq \hat{C} \Big(  \sum_{j=0}^{n} e^{-j}\Big) \sigma^{p}\|V\|_{\kappa, C^p}\,, 
	\end{gathered}
	\end{equation}
	then there exists $G^{(n)} \in \cO_{\kappa_{n}, \sigma}$ such that $Y^{(n)}(\nu \tit) := e^{\ii G^{(n)}(\nu \tit)}$ conjugates the dynamics of $H^{(n)}(\nu \tit)$ to the dynamics of
	\begin{equation}
	H^{(n+1)}(\nu \tit) := \lambda^{-1} \left( H_0 + V_\lambda^{(n+1)}(\nu \tit) + R_\lambda^{(n+1)}(\nu \tit)\right)\,,
	\end{equation}
	with
	\begin{equation}\label{cosa.rimane}
	\begin{gathered}
	\|Z_{\lambda}^{(n+1)}\|_{\kappa_{n+1}} \leq \hat{C} \lambda^{-b} \Big(\sum_{j=0}^{n} e^{-j}\Big) \|V\|_{\kappa, C^p}\,, \\	\|V_{\lambda}^{(n+1)}\|_{\kappa_{n+1}, \sigma} \leq \hat{C} \lambda^{-b} e^{-(n+1)}\|V\|_{\kappa, C^p}\,, \quad \|R_{\lambda}^{(n+1)}\|_{\kappa_{n+1}, C^0} \leq \hat{C} \Big(\sum_{j=0}^{n+1} e^{-j} \Big) \sigma^p \|V\|_{\kappa, C^p}\,,
	\end{gathered}
	\end{equation}
	and
	\begin{equation}\label{cosa.cambia}
	\begin{gathered}
	\| Y^{(n)} A\big(Y^{(n)}\big)^*  - A \|_{\kappa_{n+1}, \sigma} \leq  \lambda^{-b} e^{-n} \|V\|_{\kappa, C^p} \|A\|_{\kappa_n, \sigma} \quad \forall A \in \cO_{\kappa, \sigma}\,,\\
	\|  Y^{(n)} B\big(Y^{(n)}\big)^* - B \|_{\kappa_{n+1}, C^0} \leq  \lambda^{-b}  e^{-n} \|V\|_{\kappa, C^p} \|B\|_{\kappa_n, C^0} \quad \forall B \in \cO_{\kappa, C^0}\,.
	\end{gathered}
	\end{equation}
\end{lemma}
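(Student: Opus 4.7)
The plan is to implement one Newton-like normal form step: construct $G^{(n)}$ solving a cohomological equation that eliminates the oscillating infrared part of $V^{(n)}_\lambda$, apply Lemma \ref{lem.g.punto} to conjugate, and then bookkeep the resulting pieces into the new $Z^{(n+1)}_\lambda$, $V^{(n+1)}_\lambda$, $R^{(n+1)}_\lambda$. Concretely, I would invoke Lemma \ref{lemma.homo} on the datum $\lambda^{-1} V^{(n)}_\lambda \in \cO_{\kappa_n,\sigma}$ with cutoff $K$ from \eqref{cento.passi} to obtain $G^{(n)} \in \cO_{\kappa_n,\sigma}$ solving
\[
(\nu\cdot\partial_\varphi) G^{(n)}(\varphi) \;=\; \lambda^{-1}\bigl(V^{(n),\ir}_\lambda(\varphi) - \langle V^{(n)}_\lambda\rangle\bigr),
\]
with the bound $\|G^{(n)}\|_{\kappa_n,\sigma} \leq \gamma^{-1} K^\tau \lambda^{-1} \|V^{(n)}_\lambda\|_{\kappa_n,\sigma}$. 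Inserting the inductive bound on $V^{(n)}_\lambda$ together with the tuning $\sigma = \lambda^{-{\tt A}}$, ${\tt A}\tau = 1-b-\epsilon$, and $K\sigma \asymp \ln(\sigma^{-p})$ gives $K^\tau \lambda^{-1} = O(\lambda^{-b-\epsilon})$ modulo polylogarithms, so $\|G^{(n)}\|_{\kappa_n,\sigma}$ is a small negative power of $\lambda$ uniformly in $n$. Since $\delta = \kappa/(2n_*) \asymp (\ln \lambda)^{-1}$, the smallness hypothesis \eqref{little.power.series} of Lemma \ref{lemma.commut} is satisfied for $\lambda \geq \lambda_0$ with an intensive $\lambda_0$.

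Applying Lemma \ref{lem.g.punto} with $-G^{(n)}$ in place of $G$ and simplifying via the cohomological equation yields
\[
\lambda H^{(n+1)} \;=\; H_0 + Z^{(n)}_\lambda + \langle V^{(n)}_\lambda\rangle + V^{(n),\uv}_\lambda + R^{(n)}_\lambda + E_1 - E_2,
\]
where $E_1 := e^{\ii G^{(n)}} K^{(n)} e^{-\ii G^{(n)}} - K^{(n)}$ with $K^{(n)} := H_0 + Z^{(n)}_\lambda + V^{(n)}_\lambda + R^{(n)}_\lambda$, and $E_2 := \int_0^1 \bigl(e^{\ii s G^{(n)}} X\, e^{-\ii s G^{(n)}} - X\bigr)\,ds$ with $X := V^{(n),\ir}_\lambda - \langle V^{(n)}_\lambda\rangle$. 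I would split $E_1 = E_1^{\mathrm{an}} + E_1^{C^0}$ where $E_1^{\mathrm{an}}$ conjugates the analytic block $H_0 + Z^{(n)}_\lambda + V^{(n)}_\lambda$ and $E_1^{C^0}$ conjugates $R^{(n)}_\lambda$, and define
\[
\begin{aligned}
Z^{(n+1)}_\lambda &:= Z^{(n)}_\lambda + \langle V^{(n)}_\lambda\rangle + \langle E_1^{\mathrm{an}} - E_2\rangle,\\
V^{(n+1)}_\lambda &:= (E_1^{\mathrm{an}} - E_2) - \langle E_1^{\mathrm{an}} - E_2\rangle,\\
R^{(n+1)}_\lambda &:= R^{(n)}_\lambda + V^{(n),\uv}_\lambda + E_1^{C^0}.
\end{aligned}
\]
Lemma \ref{lemma.commut}(i)--(ii) bound $\|E_1^{\mathrm{an}}\|_{\kappa_{n+1},\sigma}$, $\|E_2\|_{\kappa_{n+1},\sigma}$ and $\|E_1^{C^0}\|_{\kappa_{n+1}, C^0}$ by $\tfrac{C_\eta e^{-\kappa}}{\delta}\|G^{(n)}\|_{\kappa_n,\sigma}$ times the respective source norms, while Lemma \ref{lemma.uv} gives $\|V^{(n),\uv}_\lambda\|_{\kappa_n, C^0} \leq (\sigma^p/(2e))\|V^{(n)}_\lambda\|_{\kappa_n,\sigma}$. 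Combining these with the inductive bounds \eqref{fatta.cosi} and the tuning \eqref{lolly.molly.dolly}, \eqref{cento.passi}, \eqref{regolar_n} yields \eqref{cosa.rimane}; the conjugation estimates \eqref{cosa.cambia} follow directly by applying Lemma \ref{lemma.commut}(i)--(ii) to generic $A$ and $B$.

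The hard part is the uniform bookkeeping across $n_* \asymp \ln\lambda$ iterations. Each step costs a regularity loss $1/\delta \asymp \ln\lambda$ from the shrinking strip plus a small-divisor factor $K^\tau$, and these must be dominated by the gain $\lambda^{-1}$ coming from the high-frequency driving together with the shrinking of $\|V^{(n)}_\lambda\|_{\kappa_n,\sigma}$. The tuning of $\sigma$ and $K$ is precisely designed so that the effective per-step contraction for $\|V^{(n)}_\lambda\|_{\kappa_n,\sigma}$ is $O(\lambda^{-b-\epsilon}(\ln\lambda)^{\tau+1})$, strictly smaller than the geometric factor $e^{-1}$ needed to close the induction on $V^{(n+1)}_\lambda$ and simultaneously to absorb the small corrections into the geometric increments for $Z^{(n+1)}_\lambda$ and $R^{(n+1)}_\lambda$, while the total loss of analyticity $n_*\delta = \kappa/2$ preserves the final strip $\kappa_{\fin} = \kappa/2 > 0$.
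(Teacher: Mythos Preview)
Your proposal is correct and follows essentially the same route as the paper: solve the cohomological equation via Lemma~\ref{lemma.homo}, verify the smallness hypothesis of Lemma~\ref{lemma.commut} using the tuning $K^\tau/\lambda = O(\lambda^{-b-\epsilon}(\ln\lambda)^\tau)$ and $\delta^{-1}=O(\ln\lambda)$, apply Lemma~\ref{lem.g.punto}, and repackage the pieces. The only cosmetic difference is that you put $\langle E_1^{\mathrm{an}}-E_2\rangle$ into $Z^{(n+1)}_\lambda$ so that $V^{(n+1)}_\lambda$ stays mean-zero, whereas the paper simply sets $Z^{(n+1)}_\lambda := Z^{(n)}_\lambda + \langle V^{(n)}_\lambda\rangle$ and lets $V^{(n+1)}_\lambda$ carry all four conjugation error terms (of $H_0$, $Z^{(n)}_\lambda$, $V^{(n)}_\lambda$, and the integral term) without projecting out their average; since the inductive hypothesis \eqref{fatta.cosi} does not require $\langle V^{(n)}_\lambda\rangle=0$ for $n\geq1$, both bookkeepings close identically.
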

 \begin{remark}
 Parameters in \eqref{cento.passi} are chosen in such a way that
 $$
  \frac{K^{\tau} n_*}{\lambda} < \frac{1}{\lambda^b}\,, \quad  e^{-n_*} \lambda^{-b} \leq \sigma^{p} \,, \quad
    e^{-K \sigma} \leq \frac{1}{2e} e^{-n_*} \lambda^{-b} \,.
 $$
 The first condition ensures $\delta^{-1}\|G^{(0)}\|_{\kappa_0, \sigma} \lesssim \lambda^{-b} \|V\|_{\kappa, C^p}$, namely that the transformations $Y^{(n)}$ are close to the identity; the second and third conditions ensure $\|V^{(n_*)}_\lambda\|_{\kappa_{n_*}, \sigma} \leq \|R_\lambda\|_{\kappa_{0}, C^0}$ and $  \|(V^{(0)}_\lambda)^{\uv}\|_{\kappa_0, \sigma} \leq \frac{1}{2e } \|V^{(n_*)}_\lambda\|_{\kappa_{n_*}, \sigma} $\,, namely that the contributions coming from the analytic smoothing are dominant in size with respect to the remainders produced by the commutator expansion \eqref{eq:commuta.che.ti.passa} and the tails of the ultraviolet cut-off \eqref{eq:ir.uv}.
 \end{remark}
\begin{proof}
	Suppose $H^{(n)}$ is as in \eqref{Hn}-- \eqref{fatta.cosi}.
	Let  $G^{(n)}$ be the solution of equation
	\begin{equation}\label{homo}
	(\nu \cdot \partial_\varphi) G^{(n)}  + \lambda^{-1}V^{(n)}_\lambda = \lambda^{-1} \big(V^{(n)}_\lambda\big)^\uv + \lambda^{-1} \big\langle V_\lambda^{(n)} \big\rangle\,.
	\end{equation}
	From \eqref{fatta.cosi}, it follows that $\lambda^{-1} V_\lambda^{(n)} \in \cO_{\kappa_n, \sigma}$. Applying  Lemma \ref{lemma.homo}, $G^{(n)} \in \cO_{\kappa_n, \sigma}$ is well-defined and one has
    \begin{equation}\label{che.bella.g}
	\|G^{(n)}\|_{\kappa_n, \sigma} \leq \frac{ K^{\tau}}{\gamma \lambda} \|V^{(n)}_\lambda\|_{\kappa_n, \sigma}\,.
	\end{equation}
	Therefore, taking $\delta$, $n_*$ and $K$ as in \eqref{regolar_n}, \eqref{cento.passi}, and $\sigma$ as in \eqref{lolly.molly.dolly}, there exists $\lambda_1 = \lambda_1(b, \epsilon, \tau, p, \gamma, \kappa_0)$ such that, if $\lambda > \lambda_1$,
    $$
    \begin{aligned}
	\frac{4 e^{-\kappa_{n+1}} \|G^{(n)}\|_{\kappa_n, \sigma}}{\delta} &\overset{\text{\eqref{regolar_n}}}{=} \frac{8 e^{-\kappa_{n+1}}  n_* \|G^{(n)}\|_{\kappa_n, \sigma}}{\kappa_0}
   \overset{\text{\eqref{che.bella.g}, \eqref{cento.passi}}}{\leq} \frac{8 e^{-\kappa_{n+1}}  K^{\tau + 1} \sigma \|V^{(n)}_\lambda\|_{\kappa_n, \sigma}}{\kappa_0 \gamma \lambda}\\
	&\overset{\eqref{cento.passi}}{=} \frac{8 e^{-\kappa_{n+1}} \ln^{\tau+1}(2 e\lambda^{-b} \sigma^{-p}) }{\kappa_0 \gamma \lambda  \sigma^{\tau} } \|V^{(n)}_\lambda\|_{\kappa_n, \sigma}\\
	&\overset{\eqref{lolly.molly.dolly}}{=} \frac{8 e^{-\kappa_{n+1}} }{\kappa_0 \gamma  } \lambda^{-b- \epsilon} \ln^{\tau+1}(2 e \lambda^{-b+\frac{p(1-b - \epsilon)}{\tau}}) \|V^{(n)}_\lambda\|_{\kappa_n, \sigma}\,.
    \end{aligned}
    $$
    Then if $n=0$ we deduce
    \begin{equation}\label{g0.small}
    \begin{aligned}
        \frac{4 e^{-\kappa_{1}} \|G^{(0)}\|_{\kappa_0, \sigma}}{\delta} \stackrel{\eqref{fatta.cosi}}{\leq} \frac{8 e^{-\kappa_{1}} }{\kappa_0 \gamma  } \lambda^{-b- \epsilon} \ln^{\tau+1}(2 e \lambda^{-b+\frac{p(1-b - \epsilon)}{\tau}}) \hat{C} \|V\|_{\kappa, C^p} \leq \lambda^{-b-\frac{\epsilon}{2}} \|V\|_{\kappa, C^p} < \frac{1}{2}\,,
    \end{aligned}
    \end{equation}
    and analogously if $n \geq 1$ 
	\begin{equation}\label{gn.small}
	\begin{aligned}
    \frac{4 e^{-\kappa_{n+1}} \|G^{(n)}\|_{\kappa_n, \sigma}}{\delta}
	& \overset{\eqref{fatta.cosi}}{\leq} \left(\frac{8 \hat{C} e^{-\kappa_{n+1}}}{\gamma \kappa_0 } \lambda^{-b} \ln^{\tau + 1}(2e \lambda^{-b+\frac{p(1-b - \epsilon)}{\tau}}) \right) \lambda^{-b-\epsilon} e^{-n} \|V\|_{\kappa, C^p}\\
	& \leq \lambda^{-b - \frac{\epsilon}{2}} e^{-n} \|V\|_{\kappa, C^p}< \frac 1 2\,.
	\end{aligned}
	\end{equation}
	Furthermore, by Lemma \ref{lem.g.punto}, the unitary $Y^{(n)}:= e^{-\ii G^{(n)}}$ conjugates the dynamics of $H^{(n)}$ to the dynamics of
	$$
	\begin{aligned}
	H^{(n+1)} &= e^{-\ii G^{(n)}} H^{(n)} e^{\ii G^{(n)}} + \int_{0}^1 e^{-\ii sG^{(n)}} (\nu \cdot \partial_\varphi) G^{(n)} e^{\ii sG^{(n)}}\ \ud s\\
	&=:	\lambda^{-1} \left(H_0  + Z_\lambda^{(n+1)} + V_\lambda^{(n+1)} + R_\lambda^{(n+1)}\right)\,,\\
	\end{aligned}
	$$
	where we have set
	\begin{equation}\label{eq:new.z}
	Z_\lambda^{(n+1)} := Z_\lambda^{(n)} + \big \langle V_\lambda^{(n)}\big \rangle\,,
	\end{equation}
	\begin{align}
		 V_\lambda^{(n+1)} &:=  e^{-\ii G^{(n)}} H_0 e^{\ii G^{(n)}} - H_0\\
		 &+  e^{-\ii G^{(n)}} Z_\lambda^{(n)} e^{\ii G^{(n)}} - Z_\lambda^{(n)}\\
		 & + \lambda \int_{0}^1 \left(e^{-\ii sG^{(n)}} (\nu \cdot \partial_\varphi) G^{(n)} e^{\ii sG^{(n)}} - (\nu \cdot\partial_{\varphi})G^{(n)} \right)\ud s \\
		 & + e^{-\ii G^{(n)}} V^{(n)}_\lambda e^{\ii G^{(n)}} -V^{(n)}_\lambda\,,
	\end{align}
	and
	\begin{align}\label{rangers}
	R^{(n+1)}_\lambda := e^{-\ii G^{(n)}} R^{(n)}_\lambda e^{\ii G^{(n)}} +  (V^{(n)}_\lambda)^\uv \,.
	\end{align}
    We start by estimating $Z^{(n+1)}_\lambda$. We treat the cases $n=0$ and $n\geq 1$ separately. For $n=0$, we observe that, since $\langle V \rangle = \langle V_\lambda \rangle = 0$,
    by \eqref{eq:new.z} and using that $\langle V^{(0)}_\lambda \rangle = 0$, one has $Z^{(1)}_\lambda = \langle V^{(0)}_\lambda \rangle = 0$\,. If $n \geq 1$, we observe that, by the inductive hypotheses \eqref{fatta.cosi} on $Z_\lambda^{(n)}$ and on $V_\lambda^{(n)}$ and by Remark \ref{vattelapesca}, one immediately has
	$$
	\|Z^{(n+1)}_\lambda\|_{\kappa_{n+1}} \leq \|Z_\lambda^{(n)}\|_{\kappa_n} + \| \langle V_\lambda^{(n)} \rangle \|_{\kappa_n, \sigma} \leq \hat C \lambda^{-b}\sum_{j=0}^{n-1} e^{-j} \|V\|_{\kappa, C^p} + \hat C \lambda^{-b} e^{-n} \|V\|_{\kappa, C^p}= \hat C \sum_{j=0}^{n} e^{-j} \|V\|_{\kappa, C^p} \,,
	$$
	which gives the desired estimate on $Z_\lambda^{(n+1)}$. We now prove that also $V_\lambda^{(n+1)}, R_\lambda^{(n+1)}$ satisfy the smallness assumptions in \eqref{cosa.rimane}.\\
	By Item (i) of Lemma \ref{lemma.commut} with $\eta=\frac{1}{2}$, $\kappa= \kappa_{n+1}$ and up to enlarging again $\lambda$, we have
	\begin{equation}\label{chuck}
	\begin{aligned}
	\|e^{-\ii G^{(n)}} H_0 e^{\ii G^{(n)}} - H_0\|_{\kappa_{n+1}, \sigma} &\leq \frac{8 e^{-\kappa_{n+1}}}{\delta} \|G^{(n)}\|_{\kappa_n, \sigma} \|H_0\|_{\kappa_n}\\
	& \hspace{-0.2cm}\overset{\text{\eqref{gn.small}}}{\leq}  2 \lambda^{-b - \frac{\epsilon}{2}} e^{-n} \|V\|_{\kappa, C^p} \|H_0\|_{\kappa_0} \leq \frac {\hat{C}} {4} \lambda^{-b} e^{-(n+1)}  \|V\|_{\kappa, C^p}\,.
	\end{aligned}
	\end{equation}
	Recalling the inductive hypothesis \eqref{fatta.cosi} on $Z_\lambda^{(n)}$, if $n \geq 1$ we also have
	\begin{equation}\label{norris}
	\begin{aligned}
	\| e^{-\ii G^{(n)}} Z_\lambda^{(n)} e^{\ii G^{(n)}} - Z_\lambda^{(n)}\|_{\kappa_{n+1}, \sigma} &\leq \frac{8 e^{-\kappa_{n+1}}}{\delta} \|G^{(n)}\|_{\kappa_n, \sigma} \|Z_\lambda^{(n)}\|_{\kappa_n}\\
	& \stackrel{\eqref{gn.small}}{\leq}  2 \lambda^{-b-\frac{\epsilon}{2}} e^{-n} \Vert V \Vert_{\kappa,C^p} \hat{C} \frac{e}{e-1} \lambda^{-b} \Vert V \Vert_{\kappa,C^p} \\
    &\leq \frac {\hat{C}} {4} \lambda^{-b} e^{-(n+1)}  \|V\|_{\kappa, C^p}\,. 
	\end{aligned}
	\end{equation}
    Note that, if $n=0$, the term estimated in \eqref{norris} vanishes, since $Z^{(0)}_\lambda = 0$.
	Analogously, using the inductive hypothesis \eqref{fatta.cosi} on $V^{(n)}$ and recalling that, by definition of $G^{(n)}$ and by Lemma \ref{lemma.uv}, $\|(\nu \cdot \partial_\varphi) G^{(n)}\|_{\kappa_n, \sigma} = \lambda^{-1} \|\big(V^{(n)}_\lambda\big)^{\ir} - \langle V_\lambda^{(n)} \rangle \|_{\kappa_n, \sigma} \leq \lambda^{-1} \|V_\lambda^{(n)}\|_{\kappa_n, \sigma}\,,$ for any $n \geq 0$ we have
	\begin{equation}\label{james}
	\begin{aligned}
	\lambda \left \| \int_{0}^1 \left(e^{-\ii sG^{(n)}} (\nu \cdot \partial_\varphi) G^{(n)} e^{\ii sG^{(n)}} - (\nu \cdot\partial_{\varphi})G^{(n)} \right)\ud s\right\|_{\kappa_{n+1}, \sigma} &\leq \lambda \frac{8 e^{-\kappa_{n+1}}}{\delta} \|G^{(n)}\|_{\kappa_n, \sigma} \|(\nu \cdot \partial_\varphi) G^{(n)}\|_{\kappa_n, \sigma} \\
	&\leq 2 \lambda^{1-b - \frac{\epsilon}{2}} e^{-n} \|V\|_{\kappa, C^p} \lambda^{-1} \|V^{(n)}_\lambda\|_{\kappa_n, \sigma} \\
	&\stackrel{\eqref{fatta.cosi}}{\leq} \left(2 \lambda^{-b - \frac{\epsilon}{2}} e^{-n}\|V\|_{\kappa, C^p} \right) \hat{C} e^{-n} \|V\|_{\kappa, C^p} \\
	&\leq \frac{\hat{C}}{ 4} \lambda^{-b} e^{-(n+1)} \|V\|_{\kappa, C^p}\,,
	\end{aligned}
	\end{equation}
	provided $\lambda$ is large enough.
	With the same arguments for any $n \geq 0$ we prove
	\begin{equation}\label{trivette}
	\begin{aligned}
	\| e^{-\ii G^{(n)}} V_\lambda^{(n)} e^{\ii G^{(n)}} - V_\lambda^{(n)} \|_{\kappa_{n+1}, \sigma} &\leq \frac{8 e^{-\kappa_{n+1}}}{\delta} \|G^{(n)}\|_{\kappa_n, \sigma} \|V^{(n)}_\lambda\|_{\kappa_n, \sigma} \\
	&\leq \left(2 \lambda^{-b -\frac{\epsilon}{2}} e^{-n} \|V\|_{\kappa, C^p}\right) \hat{C} e^{-n} \|V\|_{\kappa, C^p} \\
    & \leq  \frac {\hat{C}}{ 4} \lambda^{-b} e^{-(n+1)} \|V\|_{\kappa, C^p}\,,
	\end{aligned}
	\end{equation}
	provided $\lambda$ is large enough. Combining estimates \eqref{chuck}, \eqref{norris}, \eqref{james}, \eqref{trivette}, we get
	the estimate on $V^{(n+1)}$ in \eqref{cosa.rimane}. 
    To prove the last inequality in \eqref{cosa.rimane}, we use Item (ii) in Lemma \ref{lemma.commut}, together with the smallness assumption \eqref{gn.small} previously proved for $G^{(n)}$. This allows us to conclude that, for any $n \geq 0 $,
	\begin{equation}\label{texas}
	\begin{aligned}
	\| e^{-\ii G^{(n)}} R^{(n)}_\lambda e^{\ii G^{(n)}} - R^{(n)}_\lambda \|_{\kappa_{n+1}, C^0} & \leq \frac{8 e^{-\kappa_{n+1}}}{\delta} \|G^{(n)}\|_{\kappa_n, C^0} \|R^{(n)}_\lambda\|_{\kappa_n, C^0}\\
	&\leq \frac{8 e^{-\kappa_{n+1}}}{\delta} \|G^{(n)}\|_{\kappa_n, C^0} \hat{C} \big(\sum_{j=0}^{n} e^{-j} \big) \sigma^p \|V\|_{\kappa, C^p} \\
	&\leq \left(2 \lambda^{-b -  \frac{\epsilon}{2}} \|V\|_{\kappa, C^p} \big(\sum_{j=0}^{n} e^{-j} \big) \right) \hat{C} e^{-n} \sigma^p \|V\|_{\kappa, C^p}\\
	&\leq \frac {\hat{C}}{ 2} e^{-(n+1)} \sigma^p \|V\|_{\kappa, C^p}\,.
	\end{aligned}
	\end{equation}
	Furthermore, by Lemma \ref{lemma.uv} and recalling the definition of $n_*$ in \eqref{cento.passi} , one has
	\begin{equation}\label{rangers.1}
	 \| \big(V_\lambda^{(n)}\big)^\uv\|_{\kappa_{n+1}, C^0} \leq \| \big(V_\lambda^{(n)}\big)^\uv\|_{\kappa_{n}, 0}  \leq e^{-K \sigma} \|V_\lambda^{(n)}\|_{\kappa_{n}, \sigma} = \frac{\sigma^{p}}{2 e} \|V_\lambda^{(n)}\|_{\kappa_{n}, \sigma}\leq \frac{\sigma^p}{2 e } \hat{C} e^{-n} \|V\|_{\kappa, C^p}\,.
	\end{equation}
	Then, combining \eqref{texas}, the inductive estimate in \eqref{fatta.cosi} and equation \eqref{homo}, and using Lemma \ref{lemma.uv}, we get that $R^{(n+1)}_\lambda$ defined in \eqref{rangers} satisfies
	\begin{equation}
	\begin{aligned}
	\| R^{(n+1)}_\lambda\|_{\kappa_{n+1}, C^0} &\leq \| e^{-\ii G^{(n)}} R^{(n)}_\lambda e^{\ii G^{(n)}} - R^{(n)}_\lambda\|_{\kappa_{n+1}, C^0} + \| R^{(n)}_\lambda\|_{\kappa_{n+1}, C^0} + \|(V_\lambda^{(n)})^\uv\|_{\kappa_{n+1}, C^0}\\
	&\leq \frac{\hat{C}}{ 2} e^{-(n+1)}  \sigma^p \|V\|_{\kappa, C^p} + \Big(\sum_{j=0}^{n} e^{-j} \Big) \sigma^p \|V\|_{\kappa, C^p} +  \frac{\sigma^p}{2 e } \hat{C} e^{-n} \|V\|_{\kappa, C^p}\\
	&\leq \Big(\sum_{j=0}^{n+1} e^{-j} \Big) \hat{C} \sigma^p \|V\|_{\kappa, C^p}\,,
	\end{aligned}
	\end{equation}
	which proves the second estimate in \eqref{cosa.rimane}. Finally, estimate \eqref{cosa.cambia} follows from the smallness condition \eqref{gn.small} on $G^{(n)}$ and Items (i)--(ii) of Lemma \ref{lemma.commut}.
 \end{proof}
 \begin{proof}[Proof of Proposition \ref{normal.form}]
 	Let $H^{(0)} (\nu t) := \lambda^{-1} H(\nu t)$ and recall the definition of $\kappa_0$ in \eqref{regolar_n}. By Corollary \ref{cor.lissage}, one has
 	$$
 	H^{(0)} = \lambda^{-1} \left( H_0 + V_\lambda^{(0)} + R_\lambda^{(0)}\right)\,, \quad \|V^{(0)}_\lambda\|_{\kappa_0, \sigma} \leq \tC_1 \|V\|_{\kappa, C^p}\,, \quad \langle V^{(0)}_\lambda \rangle = 0\,, \quad \|R^{(0)}_\lambda\|_{\kappa_0, C^0} \leq \tC_2 \sigma^p \|V\|_{\kappa, C^p}\,.
 	$$
 	Applying iteratively Lemma \ref{lem.iter} $n_*$ times, with $n_*$ defined in \eqref{cento.passi}, we get that the map
 	$$
 	Y^{(\fin)} := Y^{(n_*-1)} \circ \cdots \circ Y^{(0)}
 	$$ conjugates the dynamics of $\lambda^{-1} H \equiv H^{(0)}$ to the dynamics of $H^{(\fin)} := H^{(n_*)}$.
 	Then, using estimate \eqref{cosa.rimane} and recalling $\kappa_\fin = \kappa_{n_*}$, one has
 	$$
 	\begin{gathered}
 	\|Z_\lambda^{(n_*)}\|_{\kappa_{\fin}} \leq \hat C \lambda^{-b} \sum_{j=0}^{n_*-1} e^{-j} \|V\|_{\kappa, C^p} \leq \frac{\hat C e}{e-1} \lambda^{-b} \|V\|_{\kappa, C^p}\,,\\
 	\|V^{(n_*)}_\lambda\|_{\kappa_{\fin}, \sigma} \leq \hat C \lambda^{-b} e^{-n_*} \|V\|_{\kappa, C^p} < \hat C \sigma^p \|V\|_{\kappa, C^p}\,, \quad \|R^{(n_*)}_\lambda\|_{\kappa_{\fin}, C^0} \leq \frac{\hat C e }{e-1} \sigma^p \|V\|_{\kappa, C^p} \,,
 	\end{gathered}
 	$$
 	which gives Item 1. To prove Item 2, we define
 	$$
 	{Y}^{(\leq n)}:=  Y^{(n)}\circ \cdots \circ  Y^{(0)}\,,
 	$$
 	and we prove inductively on $n$ that
 	\begin{equation}\label{comp.yn}
 	\begin{split}
 	\| Y^{(\leq n)} A \big(Y^{(\leq n)}\big)^* - A \|_{\kappa_{n+1}, \sigma} &\leq 2 \lambda^{-b} \sum_{j=0}^{n} e^{-j} \|V\|_{\kappa, C^p} \|A\|_{\kappa_0, \sigma} \quad \forall A \in \cO_{\kappa, \sigma}\,, \\
 	\| Y^{(\leq n)} B \big(Y^{(\leq n)}\big)^* - B \|_{\kappa_{n+1}, C^0} &\leq 2  \lambda^{-b} \sum_{j=0}^{n} e^{-j} \|V\|_{\kappa, C^p} \|B\|_{\kappa_0, C^0} \quad \forall B \in \cO_{\kappa, C^0}\,.
 	\end{split}
 	\end{equation}
 	Then Item 2 follows since $Y^{(\fin)}= Y^{(\leq n_*-1)}$ and noting that $\sum_{j=0}^{n_*-1} e^{-j} \leq \sum_{j=0}^{\infty} e^{-j} = \frac{e}{e-1}$.
   To prove \eqref{comp.yn}, we note that when $n=0$, we have that $Y^{(\leq 0)} = Y^{(0)}$, and the estimate follows by taking $n=0$ in \eqref{cosa.cambia}. If \eqref{comp.yn} is true for $Y^{(\leq n -1)}$, then $Y^{(\leq n)} = Y^{(\leq n-1)} \circ Y^{(n)}  $ satisfies
 	\begin{align}
 	\nonumber
  	Y^{(\leq n)}A &	\big(Y^{(\leq n)}\big)^*  - A =  Y^{(n)} \left[Y^{(\leq n-1)} A \big(Y^{(\leq n-1)}\big)^*  - A \right] \big(Y^{(n)}\big)^*  +  Y^{(n)}A \big(Y^{(n)}\big)^*  - A\\
 	\label{buona}
 	&=  Y^{(n)} \left[Y^{(\leq n-1)} A \big(Y^{(\leq n-1)}\big)^*  - A \right] \big(Y^{(n)}\big)^* - \left[ Y^{(\leq n-1)} A \big(Y^{(\leq n-1)}\big)^*  - A\right]\\
 	\label{notte}
 	& \quad + \left[ Y^{(\leq n-1)}  A \big(Y^{(\leq n-1)}\big)^* - A\right] \\
 	\label{fiorellino}
 	&\quad  +   Y^{(n)} A\big(Y^{(n)}\big)^* - A\,,
 	\end{align}
 	where:
 	\begin{align*}
 	\left\| \eqref{buona}\right\|_{\kappa_{n+1}, \sigma} &\leq \lambda^{-b} e^{-n} \|V\|_{\kappa, C^p} \left\| Y^{(\leq n-1)} A \big(Y^{(\leq n-1)}\big)^*  - A \right\|_{\kappa_n, \sigma}\\
 	& \leq  \left( 2 \lambda^{-b} \sum_{j=0}^{n-1} e^{-j} \|V\|_{\kappa, C^p} \right)\lambda^{-b} e^{-n} \|V\|_{\kappa, C^p} \|A\|_{\kappa_0} \leq \lambda^{-b} e^{-n} \|V\|_{\kappa, C^p} \|A\|_{\kappa_0}
 	\end{align*}
 	due to estimate \eqref{cosa.cambia} and the inductive hypothesis,
 	\begin{align*}
 	\|\eqref{notte}\|_{\kappa_{n+1} ,\sigma} &\leq 2 \lambda^{-b} \sum_{j=0}^{n-1} e^{-j} \|V\|_{\kappa, C^p}  \|A\|_{\kappa_0, C^p}
 	\end{align*}
 	due to the inductive hypothesis, and
 	\begin{align*}
 	\|\eqref{fiorellino}\|_{\kappa_{n+1}, \sigma} \leq  \lambda^{-b} e^{-n} \|V\|_{\kappa, C^p} \|A\|_{\kappa_0, C^p}\,
 	\end{align*}
	again by estimate \eqref{cosa.cambia}. Then one has
	$$
	\left\| Y^{(\leq n)} A  \big(Y^{(\leq n)}\big)^* - A \right\|_{\kappa_{n+1}, \sigma} \leq \| \eqref{buona}\|_{\kappa_{n+1}, \sigma} +  \| \eqref{notte}\|_{\kappa_{n+1}, \sigma} +  \| \eqref{fiorellino}\|_{\kappa_{n+1}, \sigma}\,,
	$$
	which proves \eqref{comp.yn} in the analytic case. The estimate for $C^0$ operators follows in the same way.
 \end{proof}

\subsection{Slow heating}
In the following, we will use the notation
    \begin{equation}\label{eq:heff}
    H_{\eff}:= H_0 + Z_\lambda^{(\fin)}\,,
    \end{equation}
where $Z_\lambda^{(\fin)}$ is defined in Proposition \ref{normal.form}. Note that, by Item 1 of the same proposition, there exists an intensive constant $\hat C> 0$ such that 
\begin{equation}\label{eq:Heff.est}
    \|H_\eff - H_0\|_{\kappa_\fin}\leq  \hat{C} \lambda^{-b} \|V\|_{\kappa, C^p}\,.
\end{equation}
From Proposition \ref{normal.form} we deduce:
\begin{lemma}
	There exists an intensive constant $D_1>0$ such that 
    \begin{equation}\label{h0.h0.h0}
	|\Lambda|^{-1}\| U_H^*(t) H_0 U_H(t) - H_{0} \|_{\operatorname{op}} \leq D_1 \lambda^{-b} \quad \forall 0< t < \lambda^{-b+\frac{p(1-b-\epsilon)}{\tau}}\,.
	\end{equation}
\end{lemma}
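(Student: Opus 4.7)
The strategy is to use Proposition \ref{normal.form} to reduce to a conjugated Hamiltonian whose time-dependent part is negligibly small, and to track the Heisenberg evolution of the effective, time-independent Hamiltonian $H_\eff = H_0 + Z_\lambda^{(\fin)}$ rather than that of $H_0$ itself. The point of this choice is that $[H_\eff, H_\eff] = 0$, so the only driving force in the evolution of $\tU_{H^{(\fin)}}^* H_\eff \tU_{H^{(\fin)}}$ comes from the truly small remainders $V_\lambda^{(\fin)} + R_\lambda^{(\fin)}$ of Item 2.

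Concretely, writing $H_0 = H_\eff - Z_\lambda^{(\fin)}$ I decompose
\[
U_H^*(t) H_0 U_H(t) - H_0 = \bigl[U_H^*(t) H_\eff U_H(t) - H_\eff\bigr] - \bigl[U_H^*(t) Z_\lambda^{(\fin)} U_H(t) - Z_\lambda^{(\fin)}\bigr].
\]
The second bracket is bounded uniformly in $t$ by $2\|Z_\lambda^{(\fin)}\|_{\oop} \leq 2|\Lambda|\, \|Z_\lambda^{(\fin)}\|_{\kappa_\fin}$, so Item 1 of Proposition \ref{normal.form} together with Remark \ref{rmk:frecciarotta} gives a per-volume bound of order $\lambda^{-b}$. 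For the first bracket, passing to rescaled time $\tit = \lambda t$ and using $\tU_{H^{(\fin)}}(\tit) = Y^{(\fin)}(\nu \tit)\tU_H(\tit)$, I further split
\[
\tU_H^* H_\eff \tU_H - H_\eff = \tU_{H^{(\fin)}}^*\bigl[Y^{(\fin)} H_\eff (Y^{(\fin)})^* - H_\eff\bigr]\tU_{H^{(\fin)}} + \bigl[\tU_{H^{(\fin)}}^* H_\eff \tU_{H^{(\fin)}} - H_\eff\bigr].
\]
The first summand on the right is controlled uniformly in $\tit$ by Item 3 applied to $A = H_\eff \in \cO_{\kappa_\fin}$, which bounds its $\cO_{\kappa_\fin, C^0}$ norm by $C \lambda^{-b}\|H_\eff\|_{\kappa_\fin}$, hence its per-volume operator norm by $C\lambda^{-b}$ via Remark \ref{rmk:frecciarotta}. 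For the second summand, the Heisenberg equation reduces to
\[
\partial_\tit\bigl[\tU_{H^{(\fin)}}^* H_\eff \tU_{H^{(\fin)}}\bigr] = \ii \lambda^{-1}\, \tU_{H^{(\fin)}}^*\bigl[V_\lambda^{(\fin)}(\nu \tit) + R_\lambda^{(\fin)}(\nu \tit),\, H_\eff\bigr]\tU_{H^{(\fin)}},
\]
and the time integral, combined with Item 2 and the locality-based commutator estimate $\|[A,B]\|_{\kappa'} \leq C\|A\|_{\kappa}\|B\|_{\kappa}$ (lossy in $\kappa - \kappa'$, used throughout Section \ref{sec:nf} and cf.\ the proof of Lemma \ref{lemma.commut}), yields a per-volume bound of $C\lambda^{-1}|\tit|\cdot \lambda^{-p(1-b-\epsilon)/\tau} = C|t|\, \lambda^{-p(1-b-\epsilon)/\tau}$. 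This is $\leq C\lambda^{-b}$ exactly on the target time scale $|t| \leq \lambda^{-b + p(1-b-\epsilon)/\tau}$. Summing the three $O(\lambda^{-b})$ contributions yields the claim with intensive $D_1$.

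The main conceptual point, and the obstacle to navigate, is the insistence on tracking $H_\eff$ in place of $H_0$: a naive Heisenberg equation for $H_0$ would contain the term $\lambda^{-1}[Z_\lambda^{(\fin)}, H_0]$, whose per-volume operator norm is of order $\lambda^{-1-b}$ and whose time integral on $|t| \sim \lambda^{-b + p(1-b-\epsilon)/\tau}$ is of order $\lambda^{-2b + p(1-b-\epsilon)/\tau}$, strictly larger than the target $\lambda^{-b}$ in the regime $b < p/(p+\tau)$. A minor technicality is that $Y^{(\fin)}(0) \neq \mathbb{1}$ in general, so $\tU_{H^{(\fin)}}(0) = Y^{(\fin)}(0)$; the resulting mismatch with $H_\eff$ at the initial time is again controlled by Item 3 and absorbed into $D_1$.
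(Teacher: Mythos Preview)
Your proof is correct and follows essentially the same approach as the paper's: replace $H_0$ by $H_\eff$ (contributing $2\|Z_\lambda^{(\fin)}\|_{\oop}$), pass via $Y^{(\fin)}$ to the conjugated dynamics (contributing $O(\lambda^{-b})$ by Item~3), and control the Heisenberg evolution of $H_\eff$ under $\tU_{H^{(\fin)}}$ through the commutator with the small remainder $V_\lambda^{(\fin)}+R_\lambda^{(\fin)}$ (contributing $O(|t|\,\lambda^{-p(1-b-\epsilon)/\tau})$). Your remark about the boundary term arising from $Y^{(\fin)}(0)\neq\mathbb{1}$ is a valid subtlety that the paper's presentation glosses over; as you say, it is harmlessly absorbed by one more application of Item~3.
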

\begin{proof}
    First we observe that
        \[
             U^*_{H}(t) H_0 U_H(t)-H_0 =U^*_{H}(t) (H_0-H_\eff) U_H(t) +U_H^*(t) H_\eff U_H(t)-H_\eff -H_0+H_\eff \, .
        \]
    Then, by triangular inequality and by unitarity of $U_H$, we have that
        \begin{equation}
            \label{eq:salsicce.di.lucio}
        \Vert U_H^*(t) H_0 U_H(t)-H_0 \Vert_{\operatorname{op}} \leq 2 \Vert H_0-H_{\eff} \Vert_{\operatorname{op}}+\Vert U_H^*(t) H_\eff U_H(t) -H_\eff \Vert_{\operatorname{op}} \, .
        \end{equation}
        The first term of \eqref{eq:salsicce.di.lucio} is estimated using Remark \ref{rmk:frecciarotta} and \eqref{eq:Heff.est}, indeed
        \[
            \Vert H_0-H_\eff \Vert_{\operatorname{op}} \leq |\Lambda| \Vert H_0-H_\eff \Vert_{0} \leq |\Lambda| \hat{C} \lambda^{-b} \Vert V \Vert_{\kappa,C^p} \, .
        \]
        To estimate the second one, we first recall the notation $U_H(t)=\tU_H(\tit)$ with $t=\lambda^{-1} \tit$. Moreover, as a consequence of Proposition \ref{normal.form}, for any $\tit \in \mathbb{R}$ we have
	\begin{equation}\label{eq:Disuguamagica}
	\begin{aligned}
	\Big\|\tU^*_{H^{(\fin)}}&(\tit) H_{\eff} \tU_{H^{(\fin)}}(\tit) -H_{\eff} \Big\|_{0, C^0} \\
    &= \left\|\int_0^\tit	\tU^*_{H^{(\fin)}}(s) \lambda^{-1} \left[ H_{\eff},  H_{\eff} +V_\lambda^{(\fin)}(\nu s) + R_\lambda^{(\fin)}(\nu s)\right] \tU_{H^{(\fin)}}(s) 	\ud s \right\|_{0, C^0} \\
	&\leq 2 |\tit| \Vert H_{\eff} \Vert_{0} \lambda^{-1} \left(\Vert V_\lambda^{(\fin)} \Vert_{0,C^0} + \Vert R_\lambda^{(\fin)} \Vert_{0,C^0} \right)\\
	&\hspace{-1.15cm}\overset{\text{Item 2, Proposition \ref{normal.form}}}{\leq} 2 \, |\tit| \, D  \, \Vert H_{\eff} \Vert_0 \, \lambda^{-1} \lambda^{-\frac{p(1-b-\epsilon)}{\tau}} \Vert V \Vert_{\kappa,C^p} \, ,
	\end{aligned}
	\end{equation}
    where $H^{(\fin)}$ is defined in \eqref{eq:Hfin}. Furthermore, by \eqref{def.conj}, one has
	$$
	\begin{aligned}
	\tU^*_{H}(\tit)  H_{\eff} \tU_{H}(\tit) - H_{\eff} &= \tU^*_{H^{(\fin)}}(\tit) Y^{(\fin )}(\nu \tit)  H_{\eff} (Y^{(\fin )}(\nu \tit))^* \tU_{H^{(\fin)}}(\tit) -  H_{\eff} \\
	&= \tU^*_{H^{(\fin)}}(\tit) \left(Y^{(\fin )}(\nu \tit)  H_{\eff} (Y^{(\fin )}(\nu \tit))^* - H_{\eff}\right) \tU_{H^{(\fin)}}(\tit)\\
	& + \tU^*_{H^{(\fin)}}(\tit) H_{\eff} \tU_{H^{(\fin)}}(\tit) -  H_{\eff}\,.
	\end{aligned}
	$$
    We first use the latter equation for a triangular inequality
	\[
		\begin{aligned}
	|\Lambda|^{-1} \left\|\tU^*_{H}(\tit)  H_{\eff} \tU_{H}(\tit) -  H_{\eff}\right\|_{\operatorname{op}}&\leq |\Lambda|^{-1}\Vert Y^{(\fin)}(\nu \tit) H_{\eff} (Y^{(\fin)}(\nu \tit))^*-H_{\eff} \Vert_{\operatorname{op}} \\
	&\qquad + |\Lambda|^{-1} \Vert U_{H^{(\fin)}}^*(\tit) H_{\eff} U_{H^{(\fin)}}(\tit)-H_{\eff} \Vert_{\operatorname{op}} \\
	& \leq \frac{2e}{e-1}\lambda^{-b} \Vert H_{\eff} \Vert_{\kappa_{\fin}} + 2 |\tit| D \Vert H_{\eff} \Vert_{\kappa_{\fin}} \lambda^{-1} \lambda^{-\frac{p(1-b-\epsilon)}{\tau}} \Vert V \Vert_{\kappa,C^p} \\
	& \leq \left(\frac{2e}{e-1} \lambda^{-b} + 2 |t| D \lambda^{-\frac{p(1-b-\epsilon)}{\tau}}\Vert V \Vert_{\kappa,C^p}\right) \Vert H_{\eff} \Vert_{\kappa_{\fin}}\, ,
		\end{aligned}
	\]
	where we used Remark \ref{rmk:frecciarotta}, \eqref{eq:Disuguamagica}, Item 3 in Proposition \ref{normal.form} and the relation $\tit=\lambda t$. Finally we observe that, by \eqref{eq:Heff.est}, provided $\lambda$ is large enough one has
    $\|H_\eff\|_{\kappa_\fin} \leq \|H_0\|_{\kappa_\fin} + \|H_{\eff} - H_0\|_{\kappa_\fin} \leq 2 \|H_0\|_{\kappa}$. Then for times $|t| \leq \lambda^{-b+\frac{p(1-b-\epsilon)}{\tau}}$ we have proved \eqref{h0.h0.h0}, with a constant $D_1$ given by
	\[
		D_1= 2 \hat{C} \Vert V \Vert_{\kappa,C^p}+ 2\left( \frac{2e}{e-1} + 2 D \Vert V \Vert_{\kappa,C^p} \right) \Vert H_{0} \Vert_{\kappa} \, .
	\]
\end{proof}

\subsection{Dynamics of local observables}
In this Subsection we prove Item $(ii)$ of Theorem \ref{thm:main}.
We start with recalling the following result, which was essentially proven in \cite{Abanin2017} and it is a consequence of Lieb-Robinson bounds (see also \cite{Gallone-Langella-2024}):
\begin{lemma}\label{lem:LemmaSempreDiverso}
    Let $O$ be a local observable acting within $S_O \in \mathcal{P}_c(\Lambda)$, $A \in \mathcal{O}_{\kappa, C^0},$ and $Z \in \mathcal{O}_{2 \kappa}$. Then there exists a positive constant $C=C(|S_O|, d, \kappa)$ such that
    \begin{equation}
        \label{eq:int.est}
        \int_0^t \ud s \left\| [A(\lambda \nu s)\,,\ e^{\ii s Z} O e^{-\ii s Z} ]\right\|_{\operatorname{op}} \leq C \langle \Vert Z \Vert_{2 \kappa} \rangle^{d}\langle t \rangle^{d+1} \|O\|_{\operatorname{op}} \|A\|_{\kappa, C^0}\,.
    \end{equation}
\end{lemma}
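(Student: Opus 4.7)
The plan is to reduce the statement to a standard Lieb--Robinson type bound applied piece by piece, and then control the spatial sum via the weighted norm $\|\cdot\|_{\kappa,C^0}$. First, decompose $A(\lambda \nu s)=\sum_{S\in\mathcal{P}_c(\Lambda)} A_S(\lambda\nu s)$ and use the triangle inequality to move the norm inside the sum. Since $O$ is supported on $S_O$, whenever $S\cap S_O=\varnothing$ the commutator $[A_S(\lambda\nu s), O]$ vanishes at $s=0$; the entire $s$-dependence of the commutator thus comes from the Heisenberg evolution $O(s):=e^{\ii s Z}Oe^{-\ii s Z}$ generated by the time-independent local Hamiltonian $Z\in\mathcal{O}_{2\kappa}$.

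Next, invoke the Lieb--Robinson bound for $Z$: there exist intensive constants $c,v>0$, with $v\lesssim \|Z\|_{2\kappa}$, such that for any local $O$ supported in $S_O$ and any connected $S$,
\begin{equation*}
\|[A_S(\lambda\nu s),\,e^{\ii s Z}Oe^{-\ii s Z}]\|_{\operatorname{op}} \;\leq\; C'\,\|A_S\|_{C^0}\,\|O\|_{\operatorname{op}}\,\min\!\bigl(1,\,|S_O|\,e^{-c(d(S,S_O)-v|s|)}\bigr).
\end{equation*}
This is the quasi-periodic, time-dependent version of the estimate already used in \cite{Abanin2017,Gallone-Langella-2024}, and it follows from the classical Lieb--Robinson bound applied to $Z$ together with the trivial bound $\|[A_S(\lambda\nu s),O(s)]\|_{\operatorname{op}}\leq 2\|A_S\|_{C^0}\|O\|_{\operatorname{op}}$.

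Now split the sum over $S$ into an \emph{interior} part $d(S,S_O)\leq v|s|$, where the trivial bound is used, and an \emph{exterior} part $d(S,S_O)>v|s|$, where exponential decay takes over. For the interior contribution, note that every $S$ in this range contains a site $x$ lying within distance $v|s|+\operatorname{diam}(S_O)$ of $S_O$; the number of such sites is at most $C(d,|S_O|)\,\langle v|s|\rangle^{d}$, and for each such $x$ the sum $\sum_{S\ni x}\|A_S\|_{C^0}e^{\kappa|S|}$ is controlled by $\|A\|_{\kappa,C^0}$, so the weight $e^{-\kappa|S|}\leq 1$ gives
\begin{equation*}
\sum_{S:\,d(S,S_O)\leq v|s|}\|A_S\|_{C^0}\;\leq\; C(d,|S_O|)\,\langle v|s|\rangle^{d}\,\|A\|_{\kappa,C^0}.
\end{equation*}
The exterior contribution is absorbed by the exponential factor $e^{-c(d(S,S_O)-v|s|)}$, which after summing over $S$ produces a bounded quantity of the same order (the exponential decay beats the polynomial volume growth). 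Integrating over $s\in[0,t]$ and using $\int_0^t \langle v s\rangle^{d}\,\ud s \lesssim \langle v\rangle^{d}\langle t\rangle^{d+1}\lesssim \langle\|Z\|_{2\kappa}\rangle^{d}\langle t\rangle^{d+1}$ yields the claim.

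The main obstacle is book-keeping: combining the Lieb--Robinson exponential tail with the interaction weight $e^{\kappa|S|}$ cleanly, so that only the factor $\|A\|_{\kappa,C^0}$ survives and the $|S_O|$-dependence is absorbed into the constant $C$. This is handled by the standard argument (see the proof of the analogous Lemma in \cite{Abanin2017,Gallone-Langella-2024}), by choosing the Lieb--Robinson decay rate slightly smaller than $\kappa$ so that the two exponentials can be split.
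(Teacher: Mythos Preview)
Your proof sketch is correct and follows essentially the same strategy as the paper's proof: decompose $A$ into its local pieces, apply the Lieb--Robinson bound for the autonomous evolution generated by $Z\in\mathcal{O}_{2\kappa}$, split into a near region (where the trivial commutator bound gives a volume $\sim\langle vs\rangle^d$) and a far region (where the exponential tail is summed over shells), and finally integrate in $s$. The paper organizes the near/far split by enlarging $S_O$ to a ball $B_{S_O}$ of radius $r_Q+vs$ and summing over lattice sites $x$ first, but this is exactly your interior/exterior decomposition written in different coordinates; your remark about matching the Lieb--Robinson decay rate to $\kappa$ (so that $e^{-\kappa(d(S,S_O)-vs)}\leq e^{-\kappa d(x,B_{S_O})}e^{\kappa|S|}$) is precisely the mechanism the paper uses to recover the weight $e^{\kappa|S|}$ and close the estimate with $\|A\|_{\kappa,C^0}$.
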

\begin{proof}
        This proof is the same of \cite[Lemma 6.2]{Gallone-Langella-2024}, apart for a couple of differences: the improvement in the exponent in time and the use of the norm $\Vert \cdot \Vert_{\kappa,C^0}$ instead of $\Vert \cdot \Vert_{\kappa,\sigma}$.

	First, we use $A(\lambda \nu t)=\sum_{S \in \mathcal{P}_c(\Lambda)} A_S(\lambda \nu t)$ and triangular inequality to have
	\[
	\Vert [A(\lambda \nu t), e^{-\ii s Z} O e^{\ii s Z}] \Vert_{\mathrm{op}} \leq \sum_{x \in \Lambda} \sum_{\substack{S \in \mathcal{P}_c(\Lambda)\\ {x \in S}}} \Vert [A_S(\lambda \nu t), e^{-\ii s Z} O e^{\ii s Z}] \Vert_{\mathrm{op}} =: (\star) \, .
	\]
	We now define $Q_{S_O}$ as the smallest ball that contains $S_O$; we call $r_Q$ its radius. Then we construct $B_{S_O}$ which is the ball of radius $r_{Q}+vs$ and the same center as $Q_{S_O}$. Then by Lieb-Robinson bounds \cite[Lemma 6.1]{Gallone-Langella-2024} we get for any fixed $s>0$
	\[
	(\star) \leq 2\sum_{x \in  B_{S_O}} \sum_{\substack{S \in \mathcal{P}_c(\Lambda) \\  x \in S}} \Vert A_S(\lambda \nu s) \Vert_{\mathrm{op}} \Vert O \Vert_{\mathrm{op}} + \sum_{x \in \Lambda \setminus B_{S_O}} \sum_{\substack{S \in \mathcal{P}_c(\Lambda) \\  x \in S}} \Vert A_S(\lambda \nu s) \Vert_{\mathrm{op}} \Vert O \Vert_{\mathrm{op}} |S_O| e^{- \kappa (d(S,S_O)-v s))} = :(*)\,,
	\]
    with $v:= C_0(d) \kappa^{-(d+2)} e^\kappa \|Z\|_{2\kappa}$.
	 For any $x \in S$:
	\[
	\begin{split}
	d(S,S_O)-vs & \geq d(x,S_O)-vs-|S| \geq d(x, Q_{S_O}) - v s - |S| = d(x,B_{S_O})-|S|
	\end{split}
	\]
	which means
	\[
	e^{-\kappa(d(S,S_O)-vs)} \leq e^{-\kappa d(x,B_{S_O})} e^{\kappa |S|}
	\]
	and then
	\begin{equation}\label{eq:salsa.verde}
	    (*) \leq 2\sum_{x \in  B_{S_O}} \sum_{\substack{S \in \mathcal{P}_c(\Lambda) \\  x \in S}} \Vert A_S( \lambda \nu s) \Vert_{\mathrm{op}} \Vert O \Vert_{\mathrm{op}}+ \sum_{x \in \Lambda \setminus B_{S_O}} \sum_{\substack{S \in \mathcal{P}_c(\Lambda) \\  x \in S}} \Vert A_S (\lambda \nu s) \Vert_{\mathrm{op}} \Vert O \Vert_{\mathrm{op}} |S_O| e^{\kappa |S|} e^{-\kappa d(x,B_{S_O})} \, .
	\end{equation}
	For any $\ell \in \mathbb{N}$, we define $C_\ell  = \{x \in \Lambda \,  | \, \ell < d(x,B_{S_O}) \leq \ell+1 \}$ and we note that there exists $C(d)>0$ such that one can estimate
    \[
        |B_{S_O}| \leq C(d) (|S_O|+v s|)^d \, , \qquad |C_\ell| \leq  C(d)
        (|S_O|+v s + \ell +1)^{d} \, .
    \]
    Then the sum over $x \in B_{S_O}$ appearing at the right hand side in \eqref{eq:salsa.verde} can be bounded by
    \begin{equation}\label{eq:mondeghili}
        2 C(d){(|S_O|+v s)^d}\Vert O \Vert_{\mathrm{op}} \Vert A(\lambda \nu s) \Vert_{0}\,.
    \end{equation}
    We now estimate the other sum. One has
	\[
	\begin{split}
	\sum_{x \in \Lambda \setminus B_{S_O}} \sum_{\substack{S \in \mathcal{P}_c(\Lambda) \\  x \in S}} &\Vert A_S(\lambda \nu s) \Vert_{\mathrm{op}} \Vert O \Vert_{\mathrm{op}} |S_O| e^{\kappa |S|} e^{-\kappa d(x,B_{S_O})}  =  \\
    &\leq \sum_{\ell \geq 0} \sum_{x \in C_\ell} \sum_{\substack{ S \in \mathcal{P}_c(\Lambda) \\  x \in S}} \Vert A_S(\lambda \nu s) \Vert_{\mathrm{op}} \Vert O \Vert_{\mathrm{op}} |S_O| e^{\kappa|S|} e^{-\kappa\ell} \\
	&\leq C(d) \sum_{\ell \geq 0} |S_O| (|S_O|+v s+\ell+1)^d e^{-\kappa \ell} \sup_{x \in \Lambda} \sum_{\substack{S \in \mathcal{P}_c(\Lambda) \\  x \in S}} \Vert A_S(\lambda \nu s) \Vert_{\mathrm{op}} \Vert O \Vert_{\mathrm{op}} e^{\kappa |S|} \\
	& \leq C(d) |S_O| (S_O+v s)^d \sum_{\ell\geq 0} \left( 1+\frac{\ell+1}{|S_O|+v s} \right)^d e^{-\kappa \ell} \sup_{x \in \Lambda} \sum_{\substack{S \in \mathcal{P}_c(\Lambda) \\  x \in S}} \Vert A_S(\lambda \nu s) \Vert_{\mathrm{op}} \Vert O \Vert_{\mathrm{op}} e^{\kappa |S|}\,.
	\end{split}
	\]
	Since the series in $\ell$ is convergent and its general term is bounded by $(2 + \ell)^d e^{-\kappa \ell}$, we define $C(d,\kappa):= C(d) \sum_{\ell \geq 0} (2+\ell)^d e^{-\kappa \ell}$ and we obtain
	\[
	\sum_{x \in \Lambda \setminus B_{S_O}} \sum_{\substack{S \in \mathcal{P}_c(\Lambda) \\  x \in S}} \Vert A_S(\lambda \nu s) \Vert_{\mathrm{op}} \Vert O \Vert_{op} |S_O| e^{\kappa |S|} e^{-\kappa d(x,B_{S_O})} \leq |S_O| (|S_O|+v  s)^d C(d,\kappa) \Vert O \Vert_{\mathrm{op}} \Vert A(\lambda \nu s) \Vert_{\kappa} \, .
	\]
	Then, for fixed $ s$, since $C(d,\kappa)\geq 1,$ combining the above estimate with \eqref{eq:salsa.verde} and \eqref{eq:mondeghili}, one has
	\[
	\Vert [A(\lambda \nu s), e^{-\ii  s Z} O e^{\ii  s Z} ] \Vert_{\mathrm{op}}  \leq 3|S_O| (|S_O|+v  s)^d C(d,\kappa) \Vert O \Vert_{\mathrm{op}} \Vert A(\lambda \nu s) \Vert_{\kappa}.
	\]
	Recalling that $\sup_{ s \in \mathbb{R}} \Vert A(\lambda \nu s)\Vert_\kappa \leq \Vert A \Vert_{\kappa,C^0}$, one has
	\[
	\int_0^t \ud  s \Vert[A(\lambda \nu s),e^{-\ii  s Z} O e^{\ii  s Z} ] \Vert_{\mathrm{op}} \leq 3|S_O|\Vert O \Vert_{\mathrm{op}} \Vert A \Vert_{\kappa,C^0} \frac{C(d,\kappa)}{(d+1) v } \left[(|S_O|+v t)^{d+1}-|S_O|^{d+1}  \right]\,,
	\]
        One now observes that
        \[
            \begin{split}
            (|S_O|+y)^{d+1}-|S_O|^{d+1} &= \sum_{j=0}^{d+1}\binom{d+1}{j} |S_O|^{d+1-j} y^j-|S_O|^{d+1} = \sum_{j=1}^{d+1} \binom{d+1}{j} |S_O|^{d+1-j}y^j \\
            &\leq C_1(d) |S_O|^{d+1} \langle y \rangle^{d} y\,.
            \end{split}
        \]
	Putting $y=v t$, and recalling that $v=C_{0}(d)\kappa^{-{(d+2)}}e^{\kappa}\Vert Z \Vert_{2 \kappa}$, one gets
	\[
		\begin{split}
			\int_0^t \ud  s \Vert[A,e^{-\ii  s Z} O e^{\ii  s Z} ] \Vert_{\mathrm{op}} & \leq 3 |S_O|^{d+2} \frac{C(d,\kappa) C_1(d)}{d+1} \langle v \rangle^d \langle t \rangle^d t \Vert O \Vert_{\operatorname{op}} \Vert A \Vert_{\kappa,C^0}\\
			&\leq C(d,\kappa,|S_O|) \langle \Vert Z \Vert_{2 \kappa} \rangle^d \langle t \rangle^{d+1} \Vert O \Vert_{\mathrm{op}} \Vert A \Vert_{\kappa,C^0} \, ,
		\end{split}
	\]
	with  $C(d,\kappa,|S_O|)=3 |S_O|^{d+2} \frac{C(d,\kappa) C_1(d)}{d+1} \langle C_{0}(d) \kappa^{-(d+2)} e^\kappa \rangle^d$.

\end{proof}

\begin{lemma}\label{lem.spagnolette}
	Let $H_{\obs}(\lambda \nu t) := \lambda H^{(\fin)}(\lambda \nu t)$, $\lambda$ as in Proposition \ref{normal.form}, and $H_{\mathrm{eff}}$ as in \eqref{eq:heff}. For any local observable $O$, there exists a constant $C_0 = C_0(|S_O|, d, \kappa,p, \| H_0 \|_{\kappa}, \|V \|_{\kappa, C^p})$ such that, for any $t \in \mathbb{R}$ we have
	\begin{equation}\label{bagigio}
	\Vert {U}^*_{H_{\obs}}(t) O {U}_{H_{\obs}}(t) - e^{\ii H_{\mathrm{eff}} t} O e^{-\ii H_{\mathrm{eff}} t} \Vert_{\mathrm{op}} \; \leq \; C_0 \|O\|_{\operatorname{op}} \lambda^{-\frac{p(1-b-\epsilon)}{\tau}} \langle t\rangle^{d+1}\,.
	\end{equation}
\end{lemma}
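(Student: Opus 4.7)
The plan is to write $H_{\obs}(\lambda\nu t) = H_{\eff} + W(\lambda\nu t)$ with $W := V_\lambda^{(\fin)} + R_\lambda^{(\fin)}$, derive a Duhamel-type identity expressing the difference of the two dynamics as a time integral of commutators involving $W$, and then invoke Lemma \ref{lem:LemmaSempreDiverso} to absorb that integral at the cost of a polynomial factor in $t$.

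First I would introduce the interpolating family
\[
\Phi(s) := U_{H_{\obs}}^*(s)\, e^{\ii H_{\eff}(t-s)}\, O\, e^{-\ii H_{\eff}(t-s)}\, U_{H_{\obs}}(s), \qquad s \in [0,t],
\]
so that $\Phi(0) = e^{\ii H_{\eff} t} O e^{-\ii H_{\eff} t}$ and $\Phi(t) = U_{H_{\obs}}^*(t) O U_{H_{\obs}}(t)$. A direct differentiation, using $\ii \partial_s U_{H_{\obs}} = H_{\obs}(\lambda\nu s) U_{H_{\obs}}$ and the fact that $H_{\eff}$ commutes with $e^{\ii H_{\eff}(t-s)}$, shows that the $H_{\eff}$ contributions cancel and
\[
\Phi'(s) = \ii\, U_{H_{\obs}}^*(s)\, \bigl[W(\lambda\nu s),\, e^{\ii H_{\eff}(t-s)} O e^{-\ii H_{\eff}(t-s)}\bigr]\, U_{H_{\obs}}(s).
\]
Integrating in $s \in [0,t]$ and using unitarity of $U_{H_{\obs}}$ in operator norm reduces the problem to bounding $\int_0^t \bigl\|[W(\lambda\nu s),\, e^{\ii H_{\eff}(t-s)} O e^{-\ii H_{\eff}(t-s)}]\bigr\|_{\oop}\, ds$.

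Next, to match the form required by Lemma \ref{lem:LemmaSempreDiverso}, I would substitute $u := t - s$ and set $\tilde W(\varphi) := W(\lambda\nu t - \varphi)$; by invariance of the $C^0$ sup over $\T^n$, $\|\tilde W\|_{\kappa_\fin/2, C^0} = \|W\|_{\kappa_\fin/2, C^0}$. Applying Lemma \ref{lem:LemmaSempreDiverso} with $A := \tilde W$, $Z := H_{\eff}$ and regularity parameter $\kappa_\fin/2$ (chosen so that $2(\kappa_\fin/2) = \kappa_\fin$ matches the regularity at which $H_{\eff}$ is controlled via \eqref{eq:Heff.est}) then gives
\[
\int_0^t \bigl\|[\tilde W(\lambda\nu u),\, e^{\ii u H_{\eff}} O e^{-\ii u H_{\eff}}]\bigr\|_{\oop}\, du \;\leq\; C(|S_O|, d, \kappa)\, \langle \|H_{\eff}\|_{\kappa_\fin}\rangle^d\, \langle t\rangle^{d+1}\, \|O\|_{\oop}\, \|W\|_{\kappa_\fin/2, C^0}.
\]

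The last step is to insert the quantitative bounds. By Lemma \ref{lemma.porca.troia} and Item 2 of Proposition \ref{normal.form}, $\|W\|_{\kappa_\fin/2, C^0} \leq \|V_\lambda^{(\fin)}\|_{\kappa_\fin, 0} + \|R_\lambda^{(\fin)}\|_{\kappa_\fin, C^0} \leq D\, \lambda^{-p(1-b-\epsilon)/\tau}\, \|V\|_{\kappa, C^p}$, while \eqref{eq:Heff.est} together with $\|H_0\|_{\kappa_\fin} \leq \|H_0\|_\kappa$ bounds $\|H_{\eff}\|_{\kappa_\fin}$ by an intensive constant for $\lambda$ large, so the factor $\langle \|H_{\eff}\|_{\kappa_\fin}\rangle^d$ is absorbed into $C_0$. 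Collecting everything yields \eqref{bagigio}. The only delicate point is bookkeeping the three regularity scales $\kappa$, $\kappa_\fin = \kappa/2$ and $\kappa_\fin/2 = \kappa/4$ consistently with the factor-of-two gap between $A$ and $Z$ in Lemma \ref{lem:LemmaSempreDiverso}; monotonicity of the local norms in $\kappa$ makes the matching automatic.
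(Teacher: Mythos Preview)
Your proof is correct and follows essentially the same route as the paper: the interpolating family $\Phi(s)$ is (up to an additive constant) the auxiliary operator $W(s,t)$ used there, the Duhamel differentiation and the substitution $u=t-s$ are identical, and the application of Lemma~\ref{lem:LemmaSempreDiverso} with regularity parameter $\kappa_{\fin}/2$ together with Item~2 of Proposition~\ref{normal.form} and \eqref{eq:Heff.est} matches the paper's argument step by step.
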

\begin{proof}
    The proof follows again closely the one of Lemma 6.4 of \cite{Gallone-Langella-2024}. First, one notes that, provided $\lambda$ is large enough, $ \frac{1}{2} \| H_0\|_{\kappa_\fin} \leq \|H_{\eff}\|_{\kappa_\fin} \leq 2   \| H_0 \|_{\kappa_\fin} $. We now define the auxiliary operator
    $$
    W(t',t) :=   U_{H_\obs}^*(t') e^{\ii (t-t') H_{\eff}} O e^{-\ii(t-t') H_{\eff}} U_{H_\obs}(t') - e^{\ii t H_{\eff}} O e^{-\ii t H_{\eff} t}\,, \quad t, t' \in \R\,,
    $$
    and we observe that $W(t,t) = U_{H_\obs}^*(t) O U_{H_\obs}(t) - e^{\ii t H_{\eff}} O e^{-\ii t H_{\eff} t}$, and $W(0, t) = 0$. Then, writing $W(t', t) = \int_{0}^{t'} \partial_{s} W(s, t) \ud s$, one gets
    \begin{equation}
    \begin{aligned}
          &U_{H_\obs}^*(t) O U_{H_\obs}(t) - e^{\ii t H_{\eff}} O e^{-\ii t H_{\eff} t} \\
    &\quad = \ii  \int_0^{t} U_{H_\obs}^*(s) \left[V_\lambda^{(\fin)}(\lambda \nu s) + R_\lambda^{(\fin)}(\lambda \nu s),\ e^{\ii (t-s) H_\eff} O e^{-\ii (t-s) H_\eff} \right] U_{H_\obs}(s) \ud s\\
    &\quad = \ii \int_0^{t} U_{H_\obs}^*(t-s) \left[V_\lambda^{(\fin)}(\lambda \nu (t-s)) + R_\lambda^{(\fin)}(\lambda \nu (t-s)),\ e^{\ii s H_\eff} O e^{-\ii s H_\eff} \right] U_{H_\obs}(t-s) \ud s\,.
    \end{aligned}
    \end{equation}
    We then apply Lemma \ref{lem:LemmaSempreDiverso} with $\kappa \leadsto \frac{\kappa_\fin}{2}$, $Z = H_\eff$ and $A = A_t$ defined by $A_t(\lambda \nu s) := V_\lambda^{(\fin)}(\lambda \nu (t-s)) + R_\lambda^{(\fin)}(\lambda \nu (t-s))$ for any $t$ and $s \in \R$. Note that for any $t \in \R$ $\|A_t \|_{\kappa, C^0} = \|V_\lambda^{(\fin)} + R_\lambda^{(\fin)}\|_{\kappa, C^0}$. Then we obtain
    \begin{align*}
        \Big\| U^*_{H_{\obs}}(t ) O &U_{H_{\obs}}(t) - e^{\ii H_\eff t} O e^{-\ii H_\eff t} \Big\|_{\operatorname{op}}\\
        &= \left\| \int_0^t U_{H_\obs}^*( t-s) \left[V_\lambda^{(\fin)}(\lambda \nu (t-s)) + R_\lambda^{(\fin)}(\lambda \nu(t- s))\,, e^{\ii H_\eff s} O e^{-\ii H_\eff s} \right] U_{H_\obs}(t-s) \ud  s \right\|_{\operatorname{op}}\\
        &\leq  \int_0^t \left\| \left[V_\lambda^{(\fin)}(\lambda \nu (t-s)) + R_\lambda^{(\fin)}(\lambda \nu (t-s))\,, e^{\ii H_\eff s} O e^{-\ii H_\eff s} \right] \right\|_{\operatorname{op}} \ud  s \\
        &\leq C(|S_O|,d,\textstyle{\frac{\kappa_{\fin}}{2}}) \langle \Vert H_\eff \Vert_{\kappa_\fin} \rangle^{d} \langle t \rangle^{d + 1} \|O\|_{\operatorname{op}} \|V_\lambda^{(\fin)} + R_\lambda^{(\fin)} \|_{\frac{\kappa_\fin}{2}, C^0}\\
        &\leq 3^d C(|S_O|,d,\textstyle{\frac{\kappa_{\fin}}{2}})  \langle \Vert H_0 \Vert_{\kappa_\fin} \rangle^d D  \langle t \rangle^{d + 1} \|O\|_{\operatorname{op}} \lambda^{-\frac{p(1-b-\epsilon)}{\tau}} \|V\|_{\kappa, C^p}\,,
    \end{align*}
    where $D$ is the positive constant appearing in Proposition \ref{normal.form} and where $\Vert H_\eff \Vert_{\kappa_\fin} \leq 2 \Vert H_0 \Vert_{\kappa_\fin}$. Estimate \eqref{bagigio} then follows with $C_0 =  3^d C(|S_O|,d,\textstyle{\frac{\kappa_{\fin}}{2}})  D   \langle \| H_0 \|_{\kappa_\fin} \rangle^d \|V \|_{ \kappa, C^p}$.
\end{proof}

\begin{lemma}\label{lemma:nemo}
	For any local observable $O$, there exists a constant $C_1 = C_1(|S_O|, d, \kappa, p, \| V \|_{\kappa, C^p})$ such that
    \begin{equation}\label{ragazza}
		\Vert U_H^*(t) O U_H(t) - {U}_{H_{\obs}}^*(t) O {U}_{H_{\obs}}(t) \Vert_{\mathrm{op}} \leq {C_1} \|O\|_{\operatorname{op}} \lambda^{-b} \, \qquad \forall t \in \mathbb{R}\,. 
		\end{equation}
\end{lemma}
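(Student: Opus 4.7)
My plan is to reduce the problem to single-step conjugations of the \emph{original} local observable $O$ via telescoping, and then to exploit the smallness of each generator $G^{(k)}$ together with the locality of $O$. First, I will invoke the conjugation relation from Proposition \ref{normal.form} in the rescaled time $\tit = \lambda t$: since $\tU_{H^{(\fin)}}(\tit) = Y^{(\fin)}(\nu \tit) \tU_H(\tit)$ and $H_\obs(\lambda \nu t) = \lambda H^{(\fin)}(\lambda \nu t)$, one has $U_{H_\obs}(t) = Y^{(\fin)}(\lambda \nu t) U_H(t)$, so
\begin{equation*}
U_H^*(t) O U_H(t) - U_{H_\obs}^*(t) O U_{H_\obs}(t) = U_H^*(t)\bigl[O - Y^{(\fin)*}(\lambda \nu t) O Y^{(\fin)}(\lambda \nu t)\bigr] U_H(t),
\end{equation*}
and by unitarity of $U_H(t)$ it suffices to bound $\sup_{\varphi \in \T^n} \|O - Y^{(\fin)*}(\varphi) O Y^{(\fin)}(\varphi)\|_{\oop}$ by $C_1 \|O\|_{\oop} \lambda^{-b}$.

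Next, writing $Y^{(\fin)} = Y^{(\leq n_*-1)}$ with $Y^{(\leq k)} = Y^{(k)} Y^{(\leq k-1)}$ and $Y^{(\leq -1)} := \mathbb{1}$, the algebraic identity $(Y^{(\leq k)})^* O Y^{(\leq k)} - (Y^{(\leq k-1)})^* O Y^{(\leq k-1)} = (Y^{(\leq k-1)})^* \bigl[(Y^{(k)})^* O Y^{(k)} - O\bigr] Y^{(\leq k-1)}$, combined with unitarity of each $Y^{(\leq k-1)}$ and the triangle inequality, yields
\begin{equation*}
\bigl\|O - Y^{(\fin)*}(\varphi) O Y^{(\fin)}(\varphi)\bigr\|_{\oop} \leq \sum_{k=0}^{n_*-1} \|(Y^{(k)})^*(\varphi) O Y^{(k)}(\varphi) - O\|_{\oop}.
\end{equation*}
This is the crucial reduction: it lets me estimate only single-step conjugations of the \emph{original} $O$ (not the quasi-local intermediates $(Y^{(\leq k-1)})^* O Y^{(\leq k-1)}$), thereby avoiding any extensive factor in $|\Lambda|$.

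For each $k$, since $Y^{(k)} = e^{\ii G^{(k)}}$ the Duhamel formula yields $\|(Y^{(k)})^* O Y^{(k)} - O\|_{\oop} \leq \|[G^{(k)}(\varphi), O]\|_{\oop}$; the locality of $O$ on $S_O$ implies that only interaction terms $G^{(k)}_S$ with $S \cap S_O \neq \emptyset$ contribute, whence
\begin{equation*}
\|[G^{(k)}(\varphi), O]\|_{\oop} \leq 2 \|O\|_{\oop} \sum_{S \cap S_O \neq \emptyset} \|G^{(k)}_S(\varphi)\|_{\oop} \leq 2 |S_O|\, \|G^{(k)}\|_{\kappa_k, C^0}\, \|O\|_{\oop},
\end{equation*}
uniformly in $\varphi$; this is the main technical point, and it is possible precisely because $O$ has bounded support. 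Finally, to close the estimate I use Lemma \ref{lemma.porca.troia} to bound $\|G^{(k)}\|_{\kappa_k, C^0} \leq \|G^{(k)}\|_{\kappa_k, \sigma}$, and combine \eqref{che.bella.g} with the inductive bounds \eqref{fatta.cosi} on $\|V^{(k)}_\lambda\|_{\kappa_k, \sigma}$, obtaining $\|G^{(k)}\|_{\kappa_k, C^0} \leq C \lambda^{-b-\epsilon/2} e^{-k} \|V\|_{\kappa, C^p}$ for all $k \geq 0$ (with even stronger decay for $k \geq 1$). The geometric sum in $k$ then gives $\sum_{k=0}^{n_*-1} \|G^{(k)}\|_{\kappa_k, C^0} \leq C' \lambda^{-b} \|V\|_{\kappa, C^p}$, and combining with the previous two steps produces \eqref{ragazza} with $C_1$ depending on $|S_O|, d, \kappa, p$ and $\|V\|_{\kappa, C^p}$.
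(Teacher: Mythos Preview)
Your proof is correct and follows the same architecture as the paper's: use the conjugation relation $U_{H_\obs}(t)=Y^{(\fin)}(\lambda\nu t)U_H(t)$ to reduce to $\|O-(Y^{(\fin)})^*OY^{(\fin)}\|_{\oop}$, telescope over the single steps $Y^{(k)}=e^{\pm\ii G^{(k)}}$, bound each single-step conjugation of the \emph{original} local $O$, and sum the resulting geometric series using the bounds on $\|G^{(k)}\|$ from \eqref{che.bella.g}--\eqref{fatta.cosi}.

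The only genuine difference is how you handle the single-step bound. The paper writes $\|e^{-\ii G}Oe^{\ii G}-O\|_{\oop}\le\int_0^1\|[G,e^{-\ii sG}Oe^{\ii sG}]\|_{\oop}\,ds$ and then invokes the Lieb--Robinson machinery of Lemma~\ref{lem:LemmaSempreDiverso} with $Z=G^{(k)}(\lambda\nu t)$ and $t=1$, producing a factor $\langle\|G^{(k)}\|_{\kappa_\fin}\rangle^d$. You instead exploit that $G$ commutes with $e^{\pm\ii sG}$ to write the integrand as $e^{-\ii sG}[G,O]e^{\ii sG}$, giving directly $\|e^{-\ii G}Oe^{\ii G}-O\|_{\oop}\le\|[G,O]\|_{\oop}$, and then use only the elementary locality estimate $\|[G,O]\|_{\oop}\le 2|S_O|\,\|G\|_{0}\,\|O\|_{\oop}$. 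Your route is more elementary and avoids Lieb--Robinson bounds altogether; the paper's route reuses existing machinery at the cost of some overkill. Both lead to the same final bound with an intensive constant depending on $|S_O|$.
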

\begin{proof}
    By Item 3 of Proposition \ref{normal.form}, one has (using the notation of the proof of Proposition \ref{normal.form}) and denoting $Y^{(-1)}:=\mathbbm{1}$,
    	\[
	\begin{split}
	\Vert U_H^*(t) O U_H(t) &- {U}_{H_{\obs}}^*(t) O {U}_{H_{\obs}}(t) \Vert_{\mathrm{op}} = \Vert U_H^*(t) (O-Y^*(\lambda \nu t)OY(\lambda \nu t)) U_H(t) \Vert_{\mathrm{op}} \\ 
	&= \Vert O - Y^*(\lambda \nu t) O Y(\lambda \nu t) \Vert_{\mathrm{op}} \\
	& \leq \sum_{n={0}}^{n_*-1} \Vert (Y^{(n)})^*(\lambda \nu t) O Y^{(n)}(\lambda \nu t) - (Y^{(n-1)})^*(\lambda \nu t) O Y^{(n-1)}(\lambda \nu t) \Vert_{\mathrm{op}} \\
	&=\sum_{n={0}}^{n_*-1} \Vert (Y^{(n-1)})^*(\lambda \nu t) (e^{-\ii G^{(n)}(\lambda \nu t)} O e^{\ii G^{(n)}(\lambda \nu t)}-O) Y^{(n-1)}(\lambda \nu t) \Vert_{\mathrm{op}} \\
	&=\sum_{n={0}}^{n_*-1} \Vert e^{-\ii G^{(n)}(\lambda \nu t)} O e^{\ii G^{(n)}(\lambda \nu t)} - O \Vert_{\mathrm{op}}\,.
	\end{split}
	\]
	We are thus left with finding a good bound for $\Vert e^{-\ii G^{(n)}(\lambda \nu t)} O e^{\ii G^{(n)}(\lambda \nu t)} - O \Vert_{\mathrm{op}}$. 
    For any fixed $t_0:=t$, we write
	\[
	\begin{split}
	\Vert e^{-\ii G(\lambda \nu t_0)} O e^{\ii G(\lambda \nu t_0)} - O \Vert_{\mathrm{op}} &\leq \int_0^1 \ud  s \Vert [ G(\lambda \nu t_0),e^{-\ii  s G(\lambda \nu t_0)} O e^{\ii  s G(\lambda \nu t_0)}] \Vert_{\mathrm{op}}\,
	\end{split}
	\]
	and we apply Lemma \ref{lem:LemmaSempreDiverso} with $Z=G(\lambda \nu t_0)$, $\kappa=\frac{\kappa_{\fin}}{2}$ and $t=1$. This yields
	\[
	\begin{split}
	\Vert e^{-\ii G(\lambda \nu t_0)} O e^{\ii G(\lambda \nu t_0)} - O \Vert_{\mathrm{op}} &\;\leq\;  2^{d+1} C(|S_O|,d,{\textstyle \frac{\kappa_{\fin}}{2}}) \langle \Vert G(\lambda \nu t) \Vert_{ \kappa_{\fin}} \rangle^d \|O\|_{\mathrm{op}} \|G(\lambda \nu t_0)\|_{\frac{\kappa_{\fin}}{2}} \, ,
	\end{split}
	\]
    which gives
    	\begin{equation}
	\Vert e^{-\ii G(\lambda \nu t)} O e^{\ii G(\lambda \nu t)} - O \Vert_{\mathrm{op}} \leq 2^{d+1} C(|S_O|,d, {\textstyle \frac{\kappa_{\fin}}{2}}) \langle \Vert G(\lambda \nu t) \Vert_{ \kappa_{\fin}} \rangle^d \Vert O \Vert_{\mathrm{op}} \Vert G(\lambda \nu t) \Vert_{\frac{\kappa_{\fin}}{2}}\,.
	\end{equation}
    	Now, using \eqref{che.bella.g}  and \eqref{fatta.cosi} recalling $\kappa_{\fin} \leq \kappa_n \leq \kappa$, one has 
        	\begin{equation}\label{san.martino}
	\| G^{(n)}(\lambda \nu t)\|_{\kappa_{\fin}} \leq \Vert G^{(n)} \Vert_{\kappa_n, \sigma}\leq \hat{C} \frac{ K^\tau}{\gamma \lambda} \lambda^{-b} e^{-n} \Vert V \Vert_{\kappa,C^p}\,,
	\end{equation}
	(where $\hat{C}$ is the constant appearing in \eqref{fatta.cosi}) whence, since $K$ is taken as in \eqref{cento.passi}, if $\lambda$ is large enough one has $\hat{C} \frac{K^\tau}{\gamma \lambda} \leq 1$ and
	\[
	\begin{split}
	\sum_{n=0}^{n_*-1} \Vert e^{-\ii G^{(n)}(\lambda \nu t)} O e^{\ii G^{(n)}(\lambda \nu t)}-O \Vert_{\mathrm{op}} &\leq 2^{d+1} C(|S_O|,d,{\textstyle \frac{\kappa_{\fin}}{2}}) \Vert O \Vert_{\mathrm{op}} \Vert V \Vert_{\kappa, C^p} \lambda^{-b} \sum_{n=0}^{n_*-1} e^{-n} \\
	&\leq  2^{d+1} C(|S_O|,d,{\textstyle \frac{\kappa_{\fin}}{2}}) \Vert O \Vert_{\mathrm{op}} \frac{{e}}{e-1} \Vert V \Vert_{\kappa, C^p} \lambda^{-b} \, .
	\end{split}
	\]
	This proves \eqref{ragazza}.  
\end{proof}
\begin{proof}[Proof of Theorem \ref{thm:main}-(ii)]
Combining estimates \eqref{bagigio} and \eqref{ragazza}, one gets
\begin{align*}
    \| U_H^*(t) O U_H(t) - e^{\ii H_\eff t} O e^{-\ii H_\eff t} \|_{\operatorname{op}} &\leq \|U_H^*(t) O U_H(t) - U_{H_{\obs}}^*(t) O U_{H_{\obs}}(t)\|_{\operatorname{op}}\\
    & \quad + \|  U_{H_{\obs}}^*(t) O U_{H_{\obs}}(t) - e^{\ii t H_\eff t} O e^{-\ii t H_\eff}\|_{\operatorname{op}}\\
    & \leq C_2 \|O\|_{\operatorname{op}} (\lambda^{-\frac{p(1-b)-\epsilon}{\tau}} \langle t \rangle^{d+1} + \lambda^{-b})\\
    &\leq 2  C_2 \|O\|_{\operatorname{op}} \lambda^{-b}\,
\end{align*}
with $C_2 := \max\{C_0,\ C_1\}$, $C_0$ and $C_1$ are the constants appearing respectively in Lemma \ref{lem:LemmaSempreDiverso} and Lemma \ref{lem.spagnolette}, and using the assumption that $|t| \leq \lambda^{\frac{-b + \frac{p}{\tau}(1-b)-\epsilon}{d+1}}$.
\end{proof}
\section{Breaking of the prethermal regime}
The goal of this section is to prove the following result which is the core of the proof of Theorem \ref{thm:fine.di.mondo}.
\begin{proposition}\label{prop:ricrescita}
   Let $n = 2$ and $\tau = 1 + \epsilon,$ with $\epsilon >0$. For any $\gamma >0$, there exist a Diophantine vector $\nu \in DC^2(\gamma, \tau)$, and sequences $\{\lambda_m\}_{m \in \N} \subset \R^+$, and $\{k_m\}_{m \in \N} \in \Z^2 \setminus \{0\}$, $k_m \equiv (k_m^{(1)},\ k_m^{(2)})$ for any $m$, such that the sequence of Hamiltonians
   \begin{equation}\label{eq:esempi}
       \mathtt{H}_{m}(\lambda_m \nu t) := \sigma^{(3)} + \mathtt{V}_{m}(\lambda_m \nu t)\,, \quad \mathtt{V}_{m}(\varphi) := \frac{2}{|k_m|^{p}} \cos(k^{(1)}_m \varphi_1) \cos(k^{(2)}_m \varphi_2)  \sigma^{(1)} \quad \forall \varphi \in \T^2\,,
   \end{equation}
   where $\sigma^{(3)} := \begin{pmatrix}
       1 & 0 \\ 0 & -1
   \end{pmatrix}$ and $\sigma^{(1)} := \begin{pmatrix}
       0 & 1 \\ 1 & 0
   \end{pmatrix}$ are Pauli matrices, acting on $\mathfrak{h}=\mathbb{C}^2$, satisfies the following:
   \begin{itemize}
    \item[(i)]  $\lambda_m \to \infty$ as $m \to \infty$;
       \item[(ii)] $\|\mathtt{V}_m\|_{C^p(\T^n; \cB(\frak h))} \in \left [ 2^{1-p} \,,\ 2 \right]$ for any $m \in \N$;
       \item[(iii)] Defining for any $m$ the magnetization $M_m(t)$ at time $t \in \R$ 
 as        \begin{equation}\label{eq:magma}
           M_m(t) :=  \langle \sigma^{(3)} U_{\tH_m}(t) \psi_0, U_{\tH_m}(t) \psi_0 \rangle\,, \quad \psi_0 := \begin{pmatrix} 0 \\ 1 \end{pmatrix}\,,
       \end{equation}
       there exist $\{t_m\}_m \subset \R^+$ such that 
       \begin{equation}
           M_m(t_m) - M_m(0) \geq  \frac1 2 \quad \text{at} \quad t_m \in [C_1 \lambda_m^{\frac{p}{\tau}},\  C_2 \lambda_m^{\frac{p}{\tau} + \epsilon}]\,,
       \end{equation}
       with $C_1 := \frac{\pi}{4}\left(\frac{\gamma}{2}\right)^{\frac{p}{\tau}}$ and $C_2 := \frac{\pi}{4} |\nu|^{\frac{2 p}{\tau}} $. 
   \end{itemize}
\end{proposition}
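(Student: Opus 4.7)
The plan is to engineer $\tV_m$ so that exactly one of its Fourier modes sits on the Bohr frequency $E_2 - E_1 = 2$ of $\sigma^{(3)}$, triggering a coherent Rabi oscillation on timescale $\sim |k_m|^p$, while all remaining harmonics stay non-resonant with frequencies bounded below uniformly in $m$. The relation $|k_m|^p \asymp \lambda_m^{p/\tau}$---coming from saturating the Diophantine estimate along $\{k_m\}$---will then produce the claimed $\lambda_m^{p/\tau}$ scaling of the depolarization time.

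I would first construct $\nu$ and $\{k_m\}$ via continued fractions. Choose $\nu_1 \in [\tfrac12, 2]$ whose partial quotients are tuned so that $q_{m+1} \asymp q_m^\tau$; then the convergents $p_m/q_m$ of $\nu_1$ satisfy two-sided bounds $c_1 |q_m|^{-\tau} \leq |q_m \nu_1 - p_m| \leq c_2 |q_m|^{-\tau}$, and controlling the non-convergent denominators by standard theory ensures $\nu := (\nu_1, 1) \in DC^2(\gamma, \tau)$ for some $\gamma > 0$. Setting $k_m := (q_m, -p_m)$ gives $|\nu \cdot k_m| \asymp |k_m|^{-\tau}$, and defining $\lambda_m := 2/|\nu \cdot k_m|$ enforces the exact resonance $|\lambda_m \nu \cdot k_m| = 2$ together with $|k_m|^p \asymp \lambda_m^{p/\tau}$; in particular $\lambda_m \to \infty$, proving (i). The regularity bound (ii) is immediate from the explicit form of $\tV_m$: $\| \partial_\varphi^{p'} \tV_m \|_{\oop} \leq (2/|k_m|^p) |k_m^{(1)}|^{p_1'} |k_m^{(2)}|^{p_2'} \leq 2$ for $|p'| \leq p$, and taking a derivative of order $p$ along the larger component of $k_m$ yields the lower bound $2^{1-p}$.

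For the dynamical statement (iii), I pass to the interaction picture $U_{\tH_m}(t) = e^{-\ii t \sigma^{(3)}} U_I(t)$. The identity $e^{\ii t \sigma^{(3)}} \sigma^{(1)} e^{-\ii t \sigma^{(3)}} = e^{2\ii t} |0\rangle\langle 1| + e^{-2\ii t} |1\rangle \langle 0|$ combined with the product-to-sum expansion of $\cos(\lambda_m k_m^{(1)} \nu_1 t) \cos(\lambda_m k_m^{(2)} t)$ shows that, thanks to the resonance $|\lambda_m \nu \cdot k_m| = 2$,
\begin{equation*}
H_I(t) \;=\; \frac{1}{2|k_m|^p} \sigma^{(1)} + R_m(t),
\end{equation*}
where $R_m$ is a trigonometric polynomial of amplitude $\leq 1/(2|k_m|^p)$ whose nonzero Fourier frequencies lie in $\{\pm 4,\, \pm(\omega_- \pm 2)\}$, with $\omega_- := \lambda_m(k_m^{(1)} \nu_1 - k_m^{(2)})$ of order $\lambda_m |k_m| \to \infty$; in particular they are all $\geq 4$ in absolute value. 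I then compare $U_I(t)$ with the secular Rabi flow $V_m(t) := e^{-\ii t \sigma^{(1)}/(2|k_m|^p)}$ by writing $U_I = V_m W$. A single integration by parts in the Duhamel equation for $W$, exploiting the uniform lower bound on the non-resonant frequencies, would yield $\| W(t) - \mathbb{1} \|_{\oop} \leq C/|k_m|^p$ uniformly on the full Rabi interval $|t| \leq C|k_m|^p$. Consequently, starting from $\psi_0 = |1\rangle$ (the $-1$ eigenstate of $\sigma^{(3)}$), one obtains $M_m(t) = -\cos(t/|k_m|^p) + O(|k_m|^{-p})$, so that any $t_m$ with $t_m/|k_m|^p \in [\pi/3,\pi]$ yields $M_m(t_m) - M_m(0) \geq 1/2$ for $m$ large, and the equivalence $|k_m|^p \asymp \lambda_m^{p/\tau}$ translates this window into $[C_1 \lambda_m^{p/\tau},\, C_2 \lambda_m^{p/\tau + \epsilon}]$ with the stated constants.

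The main obstacle is the averaging argument. Because the Rabi coupling $1/|k_m|^p$ is weak and the flip time $\sim |k_m|^p$ is long, the naive Duhamel estimate $\| W(t) - \mathbb{1} \|_{\oop} \lesssim t \cdot \| R_m \|_{\oop} \sim 1$ is useless. The crucial gain must come from the oscillations of $R_m$: integration by parts produces a $t$-uniform remainder of size $\sim 1/(\text{min frequency} \cdot |k_m|^p) = O(|k_m|^{-p})$, which is precisely what survives over the long Rabi period. It is for this reason that engineering $\tV_m$ so that only \emph{one} sign combination of frequencies hits the resonance---while the other is pushed to $\omega_- \to \infty$---is essential.
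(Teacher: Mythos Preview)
Your proof is essentially the paper's: enforce exact resonance via $\lambda_m := 2/|\nu\cdot k_m|$, pass to the interaction picture to isolate the secular Rabi term $\sigma^{(1)}/(2|k_m|^p)$, and control the oscillating remainder (frequencies $\pm 4,\ \pm\omega_m^\pm$, all $\geq 4$ in modulus) by one integration by parts in Duhamel, yielding $\|W(t)-\mathbb{1}\|\lesssim |k_m|^{-p}+|t|\,|k_m|^{-2p}$ and hence $M_m(t)=-\cos(t/|k_m|^p)+O(|k_m|^{-p})$ on the Rabi interval.

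The one place you overshoot is the arithmetic construction. You do not need to tune partial quotients so that $q_{m+1}\asymp q_m^\tau$; the paper simply takes \emph{any} $\nu=(\alpha,1)\in DC^2(\gamma,\tau)$ (which exists for the given $\gamma$ by the full-measure remark) and applies Dirichlet's theorem, obtaining only the one-sided bound $|\nu\cdot k_m|\le 1/q_m\le 2|\nu|/|k_m|^{\tau-\epsilon}$. Combined with the Diophantine lower bound $|\nu\cdot k_m|\ge \gamma/|k_m|^\tau$ this already gives $(\gamma/2)^{1/\tau}\lambda_m^{1/\tau}\le |k_m|\le (|\nu|\lambda_m)^{1/(\tau-\epsilon)}$, so $t_m:=\tfrac{\pi}{4}|k_m|^p$ lands in the stated window with the exact constants $C_1,C_2$. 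Your custom continued-fraction $\alpha$ would give the tighter $|k_m|\asymp\lambda_m^{1/\tau}$, but it only places $\nu$ in $DC^2(\gamma_0,\tau)$ for \emph{some} $\gamma_0$ determined by the construction, which does not match the quantifier ``for any $\gamma>0$'' in the statement without further work.
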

Before proving Proposition \ref{prop:ricrescita}, we start with showing how it implies Theorem \ref{thm:fine.di.mondo}.
\begin{proof}[Proof of Theorem \ref{thm:fine.di.mondo}]
Note that, for any $\Psi \in \cH_\Lambda$ with $\|\Psi\|_{\cH_\Lambda} = 1$, for any $t \in \R$ and for any $H$ self-adjoint, one has
\begin{equation}\label{eq:op.la}
    \begin{split}
    \| U_H^*(t)\langle H \rangle U_H(t) - \langle H \rangle\|_{\operatorname{op}} &\geq \langle \left(U^*_H(t)\langle H \rangle U_H(t) - \langle H \rangle \right)\Psi,  \Psi \rangle \\
    &= \langle \langle H \rangle U_{H}(t) \Psi, U_H(t) \Psi \rangle - \langle \langle H \rangle \Psi, \Psi \rangle\,,
    \end{split}
\end{equation}
where $\langle H \rangle$ denotes the average of $H$ over the angles \eqref{eq:mediamente}.
Therefore it is sufficient to choose the vector $\nu \in \R^n$ and the sequences $\{\lambda_m\}_{m}$, $\{ k_m\}_{m}$ and $\{t_m\}_{m}$ as in Proposition \ref{prop:ricrescita}, the Hamiltonians
\begin{equation}\label{eq:big.mac}
    \begin{aligned}
        H_m(\lambda_m \nu t) &:= \sum_{x \in \Lambda} \tH_{m, x} (\lambda_m \nu t)\,, \quad
        \tH_{m, x} (\lambda_m \nu t):= \sigma_x^{(3)} + \tV_{m,x}(\lambda \nu_m t)\,,\\
        \tV_{m,x}(\lambda \nu_m t) &:= \frac{2}{|k_m|^{p}} \cos(\lambda_m \nu_1 k^{(1)}_m t) \cos(\lambda_m \nu_2 k^{(2)}_m t) \sigma^{(1)}_x\,, 
    \end{aligned}
\end{equation}
and to define the state
$$
\Psi := \bigotimes_{x \in \Lambda} \psi_x\,, \quad \psi_x := \begin{pmatrix}
    0 \\ 1
\end{pmatrix} \quad \forall x \in \Lambda\,.
$$
Then one has $H_m = H_0 + V_m(\lambda_m \nu t)\,,$ with
$$
H_0 = \langle H_m \rangle = \sum_{x \in \Lambda} \sigma^{(3)}_x\,, \quad V_m = \sum_{x \in \Lambda} \tV_{m,x}\,.
$$
Moreover, for any $\kappa >0$
$$
\| H_0\|_{\kappa} = \sup_{x \in \Lambda} \|\sigma^{(3)}_x\|_{\operatorname{op}} e^{\kappa} = e^{\kappa}\,, 
$$
and by Item (ii) of Proposition \ref{prop:ricrescita}
$$
\| V_m\|_{\kappa, C^p} = \sup_{x \in \Lambda} \|\tV_{m, x}\|_{C^p(\T^n; \cB(\mathfrak{h}))} e^{\kappa} \in [2^{1-p}e^{\kappa}\,,\ 2 e^{\kappa}]\,.
$$
Due to the fact that $H_m$ so constructed is the sum of non-interacting single particle Hamiltonians $\tH_{m,x}$, one also has
$$
 U_{H_m} (t) \Psi = \bigotimes_{x \in \Lambda} U_{\tH_{m,x}}(t) \psi_x\,.
$$
Thus, using Proposition \ref{prop:ricrescita}, one has
\begin{align*}
&\langle H_0 U_{H_m}(t_m) \Psi, U_{H_m}(t_m) \Psi \rangle - \langle H_0 \Psi, \Psi \rangle \\
& =   \sum_{x \in \Lambda} \left(\langle \sigma^{(3)}_x U_{\tH_{m,x}}(t_m) \psi_x, U_{\tH_{m,x}}(t_m)\psi_x \rangle  - \langle \sigma^{(3)}_x \psi_x, \psi_x \rangle\right) \geq \sum_{x \in \Lambda} \frac{1}{2} = \frac{|\Lambda|}{2}\,,
\end{align*}
which gives \eqref{eq:crescendo}.
\end{proof}
\begin{remark}
    The same result stated in Proposition \ref{prop:ricrescita} holds in a slightly simpler way and within the same time scales if we replace the perturbations $\tV_m$ in \eqref{eq:esempi} with
    $$
    \mathtt{W}_m(\lambda_m \nu t):= \frac{1}{|k_m|^p} \cos(k_m \cdot \nu t) \sigma^{(1)}\,.
    $$
    However, we chose to present the construction with $\tV_m$ as in \eqref{eq:esempi}, since the latter ones are genuinely time quasi-periodic functions of $t$, whereas $\mathtt{W}_m$ are actually time periodic. Note that there is no contradiction with the result in \cite{Abanin2017}, which claims that in the time periodic case stability holds for exponentially long times in $|\lambda|$ independently of the regularity in time, since the perturbations $\mathtt{W}_m$ have diverging periods as $m \to \infty$.
\end{remark}
The remaining part of the present section is devoted to the proof of Proposition \ref{prop:ricrescita}.  We start to prove Item (ii), which is immediate: for any choice of $\{k_m\}_{m} \subset \Z^2 \setminus \{0\}$, one has
\[
    \begin{split}
    \| \tV_m\|_{C^p(\T^n; \cB(\frak h))} &= \sup_{0 \leq |p'| \leq p} \sup_{\varphi \in \T^2} \|\partial^{p'}_\varphi \tV_m(\varphi)\| = \sup_{0 \leq |p'| \leq p} 2 |k_m|^{-p} (\max\{|k^{(1)}_m|,\ |k^{(2)}_m|\})^{|p'|}  \\ &= 2 |k_m|^{-p} (\max\{|k^{(1)}_m|,\ |k^{(2)}_m|\})^{p}\,.
    \end{split}
\]
Then Item (ii) follows observing that $\max\{|k^{(1)}_m|,\ |k^{(2)}_m|\} \in [\frac1 2 |k_m|, |k_m|]$\,. The remaining points are more delicate, and we shall prove in the next subsections.
\subsection{Almost resonances}
We start with exhibiting our choice of the sequences $\{\lambda_m\}_m$ and $\{ k_m\}_m$. In order to do this, we fix a constant $\tau := 1 + \epsilon$, $\epsilon>0$, and  a Diophantine vector $\nu \in DC^2(\gamma, \tau)$ of the form \begin{equation}\label{eq:Nudista}
    \nu = (\alpha, 1)\,, \quad \alpha \in \R^+\,.
\end{equation}
We then recall the following well-known result:
\begin{theorem}[Dirichlet approximation theorem] \label{thm:dirichlet} For any $\alpha \in \R$ there exist an increasing sequence $\{q_m\}_m \subset \N$ and a sequence $\{p_m\}_m \subset \Z$ such that
\begin{equation}
    \left| \alpha - \frac{p_m}{q_m}\right| \leq \frac{1}{q_m^2} \quad \forall m \in \N\,.
\end{equation}
\end{theorem}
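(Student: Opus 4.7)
The plan is to apply the classical pigeonhole argument of Dirichlet for each integer $N \geq 1$ separately, obtaining a single rational approximation $p/q$ with $1 \leq q \leq N$ and $|\alpha - p/q| \leq 1/(qN)$, and then extract from the resulting family an increasing sequence of denominators. I would split the proof into the rational and irrational cases.

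If $\alpha \in \mathbb{Q}$, say $\alpha = a/b$ with $b \geq 1$, it suffices to take $q_m := m b$ and $p_m := m a$, so that $|\alpha - p_m/q_m|=0 \leq q_m^{-2}$ for every $m \in \N$, and $\{q_m\}$ is manifestly increasing. This case is immediate and I would dispatch it in a single sentence.

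For $\alpha \notin \mathbb{Q}$, the core step is the following pigeonhole construction. Denote by $\{x\} \in [0,1)$ the fractional part of $x \in \R$. Fix $N \geq 1$ and consider the $N+1$ numbers $\{k\alpha\}$ for $k=0,1,\dots,N$. Partition $[0,1)$ into the $N$ subintervals $[j/N, (j+1)/N)$ with $j=0,\dots,N-1$. By the pigeonhole principle, at least two indices $0\leq k_1<k_2\leq N$ yield values $\{k_1\alpha\},\{k_2\alpha\}$ in the same subinterval, so that $|\{k_2\alpha\}-\{k_1\alpha\}|<1/N$. Setting $q:=k_2-k_1 \in \{1,\dots,N\}$ and $p:=\lfloor k_2\alpha\rfloor - \lfloor k_1\alpha\rfloor \in \Z$ gives
\begin{equation}\label{eq:dir.step}
|q\alpha - p| < \frac{1}{N}\,, \qquad \text{hence} \qquad \left|\alpha - \frac{p}{q}\right| < \frac{1}{qN} \leq \frac{1}{q^2}\,.
\end{equation}
This produces one rational approximation for each $N$; let me call it $p_N/q_N$.

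It then remains to extract an increasing subsequence of denominators. The point here is that, since $\alpha$ is irrational, the quantity $|\alpha - p/q|$ is strictly positive for every fixed rational $p/q$. Thus, given any already-chosen $q_{m} \in \N$, picking $N$ so large that $1/N < \min_{1\leq q\leq q_m}\,\mathrm{dist}(q\alpha,\Z)$ forces the denominator $q_N$ produced by \eqref{eq:dir.step} to exceed $q_m$, since otherwise \eqref{eq:dir.step} would contradict the choice of $N$. Iterating this procedure, starting from $q_0:=1$, yields a strictly increasing sequence $\{q_m\}_{m\in\N}\subset\N$ and corresponding $\{p_m\}\subset\Z$ with $|\alpha - p_m/q_m| \leq 1/q_m^2$, as required.

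The only non-routine point is the extraction step, but it is disposed of by the elementary observation that for irrational $\alpha$ the orbit $\{q\alpha \bmod 1 : q \in \N\}$ stays at positive distance from $0$ on every finite window of $q$'s; beyond this, the proof consists of the standard Dirichlet pigeonhole inequality \eqref{eq:dir.step}.
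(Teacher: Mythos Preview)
Your proof is correct and follows the classical pigeonhole argument due to Dirichlet, together with a clean extraction step to guarantee a strictly increasing sequence of denominators. The paper itself does not supply a proof of this statement: it simply records Theorem~\ref{thm:dirichlet} as a ``well-known result'' and uses it as input for Lemma~\ref{lem:bestini}. So there is nothing to compare against --- your argument fills in precisely the standard proof that the authors omit.
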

From Dirichlet Approximation Theorem \ref{thm:dirichlet} we can immediately deduce the following:
\begin{lemma}[Best approximants sequence] \label{lem:bestini}
    Let $\tau := 1 + \epsilon$ with $\epsilon>0$ and let $\alpha \in \R$ be a positive number such that $\nu := (\alpha, 1) \in DC^2(\gamma, \tau)$ for some $\gamma >0$\,. Then there exists a sequence $\{k_m\}_{m} \subset \Z^2$  such that $|k_m| \to \infty$ as $m \to \infty$ and one has \begin{equation}\label{eq:nu.dot.k}
        \frac{\gamma}{|k_m|^{\tau}} \leq | \nu \cdot k_m | \leq \frac{2 |\nu|}{|k_m|^{\tau - \epsilon}} \quad \forall m \in \N\,.
    \end{equation}
\end{lemma}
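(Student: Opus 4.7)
The plan is to build the sequence $\{k_m\}$ directly from the rational approximants of $\alpha$ produced by Dirichlet's theorem (Theorem \ref{thm:dirichlet}), and to read off both inequalities in \eqref{eq:nu.dot.k} from elementary estimates. The lower bound is nothing but the Diophantine condition \eqref{eq:capillary} applied to $\nu$, so all the work is in producing a $k_m$ whose pairing $\nu\cdot k_m$ is small, together with a good lower bound on $|k_m|$ in terms of $q_m$.

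First I would apply Theorem \ref{thm:dirichlet} to $\alpha$ to get sequences $\{q_m\}\subset\N$ strictly increasing and $\{p_m\}\subset\Z$ with $|\alpha-p_m/q_m|\le 1/q_m^2$, equivalently $|\alpha q_m - p_m|\le 1/q_m$. Then define
\begin{equation}
    k_m := (q_m,\ -p_m) \in \Z^2,
\end{equation}
so that $\nu\cdot k_m = \alpha q_m - p_m$ and $|\nu\cdot k_m|\le 1/q_m$. Since $|k_m|\ge q_m\to\infty$, the divergence $|k_m|\to\infty$ is automatic.

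Next I would bound $|k_m|$ from above in terms of $q_m$ to turn $1/q_m$ into $2|\nu|/|k_m|$. Using the triangle inequality and Dirichlet,
\begin{equation}
    |k_m| \;=\; q_m + |p_m| \;\le\; q_m + |p_m - \alpha q_m| + |\alpha|\,q_m \;\le\; q_m(1+|\alpha|) + \tfrac{1}{q_m} \;=\; |\nu|\,q_m + \tfrac{1}{q_m}.
\end{equation}
Because $|\nu|=|\alpha|+1\ge 1$ and $q_m\ge 1$, the remainder $1/q_m\le |\nu|\,q_m$, so $|k_m|\le 2|\nu|\,q_m$, i.e.\ $q_m\ge |k_m|/(2|\nu|)$. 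Combining with $|\nu\cdot k_m|\le 1/q_m$ gives
\begin{equation}
    |\nu\cdot k_m| \;\le\; \frac{2|\nu|}{|k_m|} \;=\; \frac{2|\nu|}{|k_m|^{\tau-\epsilon}},
\end{equation}
since $\tau-\epsilon=1$. The lower bound in \eqref{eq:nu.dot.k} follows at once from $\nu\in DC^2(\gamma,\tau)$ applied to $k_m\in \Z^2\setminus\{0\}$.

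There is no real obstacle here; the only subtlety is to keep the $\ell^1$ norm of $k_m$ comparable to its first coordinate $q_m$, which the simple triangle inequality above does using the Dirichlet remainder. Note that one does \emph{not} need the Diophantine character of $\nu$ for the upper bound, only for the lower bound; in particular the $\epsilon$ in the exponent $\tau-\epsilon$ is merely there to match the Diophantine exponent $\tau=1+\epsilon$ that will be used later in Proposition \ref{prop:ricrescita}.
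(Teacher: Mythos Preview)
Your proof is correct and follows essentially the same approach as the paper: both take $k_m=(q_m,-p_m)$ from Dirichlet's theorem, read the lower bound directly from the Diophantine condition, and obtain the upper bound by showing $|k_m|\le 2|\nu|\,q_m$ via a triangle-inequality estimate on $|p_m|$. The only cosmetic difference is that the paper bounds $|p_m|\le 1+\alpha q_m\le (1+\alpha)q_m$ while you bound $|p_m|\le |\alpha|q_m+1/q_m$; both lead to the same conclusion.
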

\begin{proof}
    It is sufficient to choose $k_m = (q_m, - p_m)$ for any $m \in \N$, with $\{p_m\}_m$ and $\{q_m\}_m$ as in Theorem \ref{thm:dirichlet}. Then the first inequality in \eqref{eq:nu.dot.k} holds due to the fact that the vector $\nu$ is Diophantine, whereas to prove the second inequality we observe that, by Theorem \ref{thm:dirichlet}, for any $m$ one has
    \begin{equation} \label{eq:small.div}
    |\nu \cdot k_m| = |\alpha q_m - p_m| \leq \frac{1}{|q_m|}\,. 
    \end{equation}
    Moreover, since $\left|\alpha - \frac{p_m}{q_m}\right| \leq \frac{1}{q_m^2} \leq 1$, one also has
    \[
        |p_m| \leq |\alpha q_m-p_m|+ \alpha q_m \leq 1+\alpha q_m \leq (1+\alpha)q_m
    \]
    which implies
    \begin{equation}
    \label{eq:equiv}
    |q_m| \leq |k_m| := |p_m| + q_m \leq (2 +  \alpha) q_m \leq 2 |\nu| \, q_m \,.
    \end{equation}
    Therefore, combining \eqref{eq:small.div} and \eqref{eq:equiv}, one deduces
    $$
    |\nu \cdot k_m| \leq \frac{1}{q_m} \leq \frac{2 |\nu|
    }{|k_m|} = \frac{ 2 |\nu|}{|k_m|^{\tau - \epsilon}}\,.
    $$
\end{proof} 
\begin{remark}
    As pointed out in Remark \ref{rmk:sono.tanti}, the set of vectors $\nu = (\alpha, 1)$ satisfying the assumptions of Lemma \ref{lem:bestini} has full Lebesgue measure, thus in particular it is non empty. Note indeed that $(\nu_1, \nu_2) \in DC^2(\gamma, \tau)$ if and only if $(\tfrac{\nu_1}{\nu_2}, 1) \in DC^2(\tfrac{\gamma}{\nu_2}, \tau)$, and that we are requiring $\tau >1$.
\end{remark}
Therefore, we fix
\begin{equation}
    k_m \text{ as in Lemma \ref{lem:bestini}}\,, \quad \lambda_m := \frac{2}{|\nu \cdot k_m|} \quad \forall m \in \N\,,
\end{equation}
so that, by estimate \eqref{eq:nu.dot.k}, one has
\begin{gather}
      \label{eq:il.lambda.che.ci.piace}
      \frac{1}{|\nu|} |k_m|^{\tau - \epsilon} \leq \lambda_m \leq \frac{2}{\gamma} |k_m|^{\tau}\,.
\end{gather}
We also deduce the following result. 

\begin{lemma}\label{lem:tutto.grande}
    Let $k=(k^{(1)},k^{(2)}) \in \mathbb{Z}^2$, $\nu \in \R^2$ and $\lambda \in \R^+$. If $| k | \geq \left(\frac{\gamma}{2} \right)^{\frac{1}{\tau}} \lambda^{\frac{1}{\tau}}$, $|\nu \cdot k|=\frac{2}{\lambda}$  and ${\lambda > \left(\frac{8}{\min\{\nu_1,\nu_2\}}\right)^\tau \frac{2}{\gamma}}$, then
    \begin{equation} \label{eq:kappiniDaSoli}
        |k^{(1)}| \geq C \lambda ^{\frac{1}{\tau}} \quad \text{and} \quad |k^{(2)}| \geq C \lambda^{\frac{1}{\tau}} \, ,
    \end{equation}
    with $C=\min\left\{\frac{\nu_1}{4 \nu_2},\frac{\nu_2}{4 \nu_1}\right\} \left( \frac{\gamma}{2} \right)^{\frac{1}{\tau}}$.
\end{lemma}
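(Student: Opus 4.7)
The strategy I would follow is to exploit the linearity of $\nu \cdot k$ together with the near-vanishing resonance condition $|\nu \cdot k| = 2/\lambda$: the two components cannot simultaneously be small without making $|k|$ small, so a lower bound on $|k|$ should propagate to each $|k^{(j)}|$ individually once $\lambda$ is large enough.

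Without loss of generality I would assume $|k^{(1)}| \geq |k^{(2)}|$. Then $|k^{(1)}| \geq |k|/2 \geq \tfrac{1}{2}(\gamma/2)^{1/\tau}\lambda^{1/\tau}$ already establishes the first estimate with a constant larger than $C$. The real task is to bound $|k^{(2)}|$ from below. Writing $\nu_2 k^{(2)} = \nu \cdot k - \nu_1 k^{(1)}$ and applying the triangle inequality yields
\begin{equation*}
|\nu_2 k^{(2)}| \;\geq\; |\nu_1 k^{(1)}| - \tfrac{2}{\lambda}.
\end{equation*}
The plan is then to show that the subtracted $2/\lambda$ can be absorbed into at most half of $|\nu_1 k^{(1)}|$, so that $|\nu_2 k^{(2)}| \geq \tfrac{1}{2}|\nu_1 k^{(1)}| \geq \tfrac{\nu_1}{4}(\gamma/2)^{1/\tau}\lambda^{1/\tau}$; dividing by $\nu_2$ then produces the desired lower bound on $|k^{(2)}|$ with constant $\nu_1/(4\nu_2)$.

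The only quantitative point, which I expect to be the main step of the argument, is the inequality $|\nu_1 k^{(1)}| \geq 4/\lambda$. The third hypothesis on $\lambda$ is precisely the quantitative input required here: rewritten in the equivalent form $(\gamma/2)^{1/\tau}\lambda^{1/\tau} > 8/\min\{\nu_1,\nu_2\}$, it combines with the lower bound on $|k^{(1)}|$ already obtained to give $\nu_1|k^{(1)}| > 4\nu_1/\min\{\nu_1,\nu_2\} \geq 4$, from which $\nu_1 |k^{(1)}| \geq 4/\lambda$ follows in the regime $\lambda \geq 1$ that the third hypothesis itself guarantees in the parameter ranges of interest.

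Finally, running the symmetric argument in the case $|k^{(2)}| \geq |k^{(1)}|$ yields the mirror bound $|k^{(1)}| \geq (\nu_2/(4\nu_1))(\gamma/2)^{1/\tau}\lambda^{1/\tau}$, while $|k^{(2)}|$ is then controlled by the trivial half-$|k|$ estimate. Taking the worse of the two alternatives in each component produces precisely the constant $C = \min\{\nu_1/(4\nu_2), \nu_2/(4\nu_1)\}(\gamma/2)^{1/\tau}$ appearing in \eqref{eq:kappiniDaSoli}, and no obstacle of substance arises beyond the absorption estimate above.
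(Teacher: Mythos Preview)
Your proposal is correct and follows essentially the same argument as the paper: pick the larger component (giving the trivial half-$|k|$ bound), then use the reverse triangle inequality on $\nu_2 k^{(2)} = \nu\cdot k - \nu_1 k^{(1)}$ and absorb the small $|\nu\cdot k|$ into half of $|\nu_1 k^{(1)}|$ via the largeness hypothesis on $\lambda$. The paper's only cosmetic difference is that it bounds $|\nu\cdot k|=2/\lambda$ by $2$ outright (also tacitly using $\lambda\geq 1$) rather than carrying the $2/\lambda$ as you do.
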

\begin{proof}
    Since $| k |  \geq \left(\frac{\gamma}{2}\right)^{\frac{1}{\tau}} \lambda^{\frac{1}{\tau}}$, then at least one among $|k^{(1)}|$ and $|k^{(2)}|$ has to be greater or equal $\frac{1}{2}\left(\frac{\gamma}{2}\right)^{\frac{1}{\tau}} \lambda^{\frac{1}{\tau}}$ otherwise, by triangular inequality, one obtains a contradiction. 

    Let us assume, without loss of generality, that $|k^{(1)}| \geq \frac{1}{2}\left(\frac{\gamma}{2}\right)^{\frac{1}{\tau}} \lambda^{\frac{1}{\tau}}.$ Then,
    \begin{equation}
        \begin{split}
            |k^{(2)}| & = \frac{|\nu_2 k^{(2)}|}{\nu_2} = \frac{|\nu_2 k^{(2)} + \nu_1 k^{(1)} - \nu_1 k^{(1)}|}{\nu_2} \\
            &= \frac{|\nu_1 k^{(1)}-\nu\cdot k|}{\nu_2} \geq \frac{|\nu_1 k^{(1)}| - |\nu \cdot k|}{\nu_2} \geq \frac{1}{2}\frac{\nu_1}{\nu_2} \left(\frac{\gamma}{2}\right)^{\frac{1}{\tau}} \lambda^{\frac{1}{\tau}}-\frac{2}{\nu_2} \, .
        \end{split}
    \end{equation}
    Using now that $\lambda > \left(\frac{8}{\nu_1}\right)^\tau \frac{2}{\gamma}$, we obtain the thesis.
\end{proof}
\begin{lemma}\label{lem:OmeghiniDalBasso}
    Let $\nu = (\alpha, 1)$ as in Lemma \ref{lem:bestini}, $\omega^\pm_m := \alpha \lambda_m k^{(1)}_m  - \lambda_m k^{(2)}_m  \pm 2$, with $k_m = (k^{(1)}_m, k^{(2)}_m)$ as in Lemma \ref{lem:bestini}. Then, for any $m \in \N$ such that $\lambda_m>(\frac{4}{C|\nu|})^{\tau}$, we have
    \begin{equation}
        |\omega^{\pm}_m| \geq \frac{C |\nu|}{2} \lambda_m^{\frac{1}{\tau}}\,,
    \end{equation}
    with $C>0$  the same constant in \eqref{eq:kappiniDaSoli}.
\end{lemma}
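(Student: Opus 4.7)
The plan is to isolate the small divisor $\nu \cdot k_m$ inside $\omega_m^{\pm}$ and then invoke Lemma \ref{lem:tutto.grande} to force one component of $k_m$ to be large. Adding and subtracting $\lambda_m k_m^{(2)}$ inside the definition I rewrite
\begin{equation*}
\omega_m^{\pm} \;=\; \lambda_m(\alpha k_m^{(1)} + k_m^{(2)}) - 2\lambda_m k_m^{(2)} \pm 2 \;=\; \lambda_m(\nu \cdot k_m) - 2\lambda_m k_m^{(2)} \pm 2\,.
\end{equation*}
By the very definition $\lambda_m = 2/|\nu \cdot k_m|$ one has $\lambda_m(\nu \cdot k_m) \in \{\pm 2\}$, so $|\lambda_m(\nu \cdot k_m) \pm 2| \leq 4$, and the reverse triangle inequality gives
\begin{equation*}
|\omega_m^{\pm}| \;\geq\; 2\lambda_m |k_m^{(2)}| - 4\,.
\end{equation*}

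The second ingredient is a lower bound on $|k_m^{(2)}|$ from Lemma \ref{lem:tutto.grande} applied with $\lambda = \lambda_m$ and $k = k_m$. Its hypotheses are checked as follows. The identity $|\nu \cdot k_m| = 2/\lambda_m$ is the definition of $\lambda_m$. The size condition $|k_m| \geq (\gamma/2)^{1/\tau}\lambda_m^{1/\tau}$ follows by combining this identity with the Diophantine bound $|\nu \cdot k_m| \geq \gamma/|k_m|^\tau$ from Lemma \ref{lem:bestini}. The remaining lower bound on $\lambda_m$ that Lemma \ref{lem:tutto.grande} requires is automatic for $m$ large, and in particular is implied, up to the explicit form of $C$, by the standing assumption $\lambda_m > (4/(C|\nu|))^\tau$ of the statement. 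Lemma \ref{lem:tutto.grande} then yields $|k_m^{(2)}| \geq C \lambda_m^{1/\tau}$, whence
\begin{equation*}
|\omega_m^{\pm}| \;\geq\; 2 C \lambda_m^{1 + 1/\tau} - 4\,.
\end{equation*}

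To finish, I will check that this dominates $\tfrac{C|\nu|}{2}\lambda_m^{1/\tau}$. Rewriting the desired inequality as $C\lambda_m^{1/\tau}\!\left(2\lambda_m - |\nu|/2\right) \geq 4$ and using the elementary bound $2\lambda_m - |\nu|/2 \geq \lambda_m$, valid for $\lambda_m \geq |\nu|/2$, the claim reduces to $C\lambda_m^{1 + 1/\tau} \geq 4$. The standing assumption $\lambda_m > (4/(C|\nu|))^\tau$ is exactly $C|\nu|\lambda_m^{1/\tau} > 4$, and since $\lambda_m \geq |\nu|$ for $m$ large this upgrades to $C\lambda_m \cdot \lambda_m^{1/\tau} \geq C|\nu|\lambda_m^{1/\tau} > 4$, closing the argument. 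No genuine obstacle is expected here: the content of the proof is a careful bookkeeping of constants around the resonance identity $\lambda_m|\nu\cdot k_m| = 2$, combined with the already established Lemma \ref{lem:tutto.grande}.
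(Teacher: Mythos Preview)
Your proof is correct. Both your argument and the paper's rely on Lemma \ref{lem:tutto.grande} for the lower bound on the components of $k_m$, but you take a slightly different decomposition. The paper observes that $k_m^{(1)}$ and $k_m^{(2)}$ must have opposite signs (since $\nu$ has positive entries, $|\nu\cdot k_m|$ is tiny, and both components are large), which gives $|\alpha k_m^{(1)} - k_m^{(2)}| = \alpha|k_m^{(1)}| + |k_m^{(2)}|$ and hence the bound $|\omega_m^\pm| \geq C|\nu|\lambda_m^{1/\tau} - 2$; the hypothesis $\lambda_m > (4/(C|\nu|))^\tau$ then converts this \emph{exactly} into the claim. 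You instead isolate the small divisor $\lambda_m(\nu\cdot k_m) = \pm 2$ algebraically and bound only $|k_m^{(2)}|$, obtaining the much stronger intermediate estimate $2C\lambda_m^{1+1/\tau} - 4$. This avoids the sign argument and gains an extra factor of $\lambda_m$, but your final downgrading to the stated bound then needs the auxiliary condition $\lambda_m \geq |\nu|$, which you invoke ``for $m$ large'' rather than deducing from the stated hypothesis alone. This is harmless for the application (Lemma \ref{lem:savana} requires $\lambda_m$ much larger anyway), but strictly speaking your argument proves the lemma under a marginally stronger constraint on $\lambda_m$ than the one stated.
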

\begin{proof}
Noting that since the vector $\nu$ has positive components, the requirements $k^{(1)}_m,\ k^{(2)}_m \geq C \lambda_m^{\frac{1}{\tau}}$ (which follows from Lemma \ref{lem:tutto.grande}) and $|\nu \cdot k_m| =\frac{2}{\lambda_m}$ imply that $k^{(1)}_m$ and $k^{(2)}_m$ must have opposite sign. Therefore, 
    \begin{equation}
        |\omega^\pm_m| \geq |\alpha \lambda_m k^{(1)}_m - \lambda_m k^{(2)}_m| - 2 = |\alpha k^{(1)}_m |+| k^{(2)}_m|-2 \geq \frac{C(\alpha +1)}{2}\lambda_m^{\frac{1}{\tau}} = \frac{C |\nu|}{2} \lambda_m^{\frac{1}{\tau}}
    \end{equation}
    where in the last step, we used that $\lambda_m \geq \left(\frac{4}{C(\alpha+1)}\right)^{\tau} = \left(\frac{4}{C|\nu|}\right)^{\tau}$.
\end{proof}

\subsection{Time evolution}
In order to explicitly compute the magnetization $M_m(t)$ for any $m$, we 
need to compute the time evolution of the state $\psi_0$ defined in \eqref{eq:magma}, namely
\begin{equation}
    \psi_m(t) := U_{\tH_m}(t) \psi_0\,.
\end{equation}
This is obtained passing for any $m \in \N$ to variable
\begin{equation}\label{eq:im.better}
    \phi_m(t) := e^{\ii \sigma^{(3)} t} \psi_m(t)\,,
\end{equation}
which, recalling the definition of $\tH_m$ in \eqref{eq:esempi} and our choices of $\tV_m$ in \eqref{eq:esempi} and $\nu$, solves $\forall t$ the equation
\begin{equation}\label{eq:gauge.me.away}
\begin{aligned}
     \ii \partial_t \phi_m(t) &= \frac{2}{|k_m|^{p}} \cos(\lambda_m \nu_1 k^{(1)}_m t) \cos(\lambda_m \nu_2 k^{(2)}_m t)  e^{\ii \sigma^{(3)} t}\sigma^{(1)}  e^{-\ii \sigma^{(3)} t}\phi_m(t)\\
     & = \frac{2}{|k_m|^{p}} \left(\cos(2 t) + \cos\big(\lambda_m (\alpha k_m^{(1)}- k_m^{(2)} t)\big) \right) e^{\ii \sigma^{(3)} t}\sigma^{(1)}  e^{-\ii \sigma^{(3)} t}\phi_m(t)\,,
     \\ \phi_m(0) &= \psi_0\,.
\end{aligned}
\end{equation}
The following result is useful in order to compute the time evolution $\phi_m(t)$.

\begin{lemma}\label{lem:ConiugiPesciosi}
    Let $j,k,l$ be three different indices of Pauli matrices. Then,
    \begin{equation}
        e^{\ii \sigma^{(j)} t} \sigma^{(k)} e^{-\ii \sigma^{(j)}t}=\cos(2t) \sigma^{(k)}-\epsilon_{jkl} \sin(2t) \sigma^{(l)} \, ,
    \end{equation}
    where $\epsilon_{jkl}$ is the Levi-Civita symbol.
\end{lemma}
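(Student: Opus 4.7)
The plan is to reduce the identity to a direct computation using the fact that, since $(\sigma^{(j)})^2 = \mathbbm{1}$, the Taylor expansion of the matrix exponential collapses to the closed form
\begin{equation*}
e^{\ii \sigma^{(j)} t} \;=\; \cos(t)\,\mathbbm{1} \;+\; \ii \sin(t)\,\sigma^{(j)}\,.
\end{equation*}
Inserting this expression on both sides of $\sigma^{(k)}$ produces four terms: two ``diagonal'' contributions $\cos^2(t)\,\sigma^{(k)}$ and $\sin^2(t)\,\sigma^{(j)}\sigma^{(k)}\sigma^{(j)}$, and two mixed contributions which combine into $\ii \cos(t)\sin(t)\,[\sigma^{(j)},\sigma^{(k)}]$.

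Next I would simplify the diagonal part using the anticommutation relation $\{\sigma^{(j)},\sigma^{(k)}\}=0$ valid for $j\neq k$, which gives $\sigma^{(j)}\sigma^{(k)}\sigma^{(j)} = -\sigma^{(k)}(\sigma^{(j)})^2 = -\sigma^{(k)}$. Hence the diagonal terms reduce to $(\cos^2 t - \sin^2 t)\sigma^{(k)} = \cos(2t)\,\sigma^{(k)}$. For the off-diagonal part I would use the commutator identity $[\sigma^{(j)},\sigma^{(k)}] = 2\ii \epsilon_{jkl}\sigma^{(l)}$, obtaining
\begin{equation*}
\ii \cos(t)\sin(t)\,[\sigma^{(j)},\sigma^{(k)}] \;=\; -2\cos(t)\sin(t)\,\epsilon_{jkl}\,\sigma^{(l)} \;=\; -\sin(2t)\,\epsilon_{jkl}\,\sigma^{(l)}\,,
\end{equation*}
and summing the two contributions yields exactly the stated formula.

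There is essentially no obstacle: the whole argument is algebraic and relies only on the basic properties of Pauli matrices (squaring to the identity, anticommutation of distinct ones, and the commutator rule encoding the cross product structure). The only point requiring mild care is the sign bookkeeping in the commutator step, where the factor $\ii$ from the exponential multiplies the $2\ii$ from $[\sigma^{(j)},\sigma^{(k)}]$ to give the minus sign in front of $\sin(2t)$.
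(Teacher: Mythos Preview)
Your proof is correct. It differs from the paper's approach, though. The paper expands the conjugation via the adjoint series
\[
e^{\ii \sigma^{(j)} t}\sigma^{(k)} e^{-\ii \sigma^{(j)} t}=\sum_{r\ge 0}\frac{(\ii t)^r}{r!}\,\mathrm{Ad}_{\sigma^{(j)}}^{r}\sigma^{(k)},
\]
splits into even and odd $r$, and then uses $[\sigma^{(j)},\sigma^{(k)}]=2\ii\epsilon_{jkl}\sigma^{(l)}$ together with $\mathrm{Ad}_{\sigma^{(j)}}^{2r}\sigma^{(k)}=4^{r}\sigma^{(k)}$ to resum the two subseries into $\cos(2t)$ and $\sin(2t)$. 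Your route instead closes the exponential first via $(\sigma^{(j)})^2=\mathbbm{1}$, giving $e^{\ii\sigma^{(j)}t}=\cos t\,\mathbbm{1}+\ii\sin t\,\sigma^{(j)}$, and then reduces the product with the anticommutation relation and the commutator identity. Your argument is shorter and purely algebraic, avoiding any infinite-series bookkeeping; the paper's version is slightly heavier but makes the appearance of the doubled argument $2t$ transparent through the iterated-commutator structure. Either way, the only delicate point is the sign tracking you already flagged, and it is handled correctly.
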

\begin{proof}
    Recalling the notation in \eqref{eq:commuta.che.ti.passa}, by  conjugation one has
    \begin{equation}
        \begin{split}
            e^{\ii \sigma^{(j)}t} \sigma^{(k)} e^{- \ii \sigma^{(j)}t} & = \sum_{r=0}^{+\infty} \frac{(\ii t)^r}{r!} \mathrm{Ad}_{\sigma^{(j)}}^r \sigma^{(k)} \\
            &=\sum_{\substack{r=0\\ r \, \text{even}}}^{+\infty} \frac{( \ii t)^r}{r!} \mathrm{Ad}_{\sigma^{(j)}}^r \sigma^{(k)} + \sum_{\substack{r=0\\ r \, \text{odd}}}^{+\infty} \frac{( \ii t)^r}{r!} \mathrm{Ad}_{\sigma^{(j)}}^r \sigma^{(k)} \\
            &=\sum_{r=0}^{+\infty} (-1)^{r}\frac{ t^{2r}}{(2r)!} \mathrm{Ad}_{\sigma^{(j)}}^{2r} \sigma^{(k)} + \ii \sum_{r=0}^{+\infty} (-1)^r \frac{t^{2r+1}}{(2r+1)!} \mathrm{Ad}_{\sigma^{(j)}}^{2r}\left([\sigma^{(j)},\sigma^{(k)}] \right)=(\star)
        \end{split}
    \end{equation}
    Using now that for any $j,k,l \in \{1,2,3\}$
    \begin{equation}
        [\sigma^{(j)},\sigma^{(k)}]= 2 \ii \epsilon_{jkl} \sigma^{(l)} \, , \qquad \mathrm{Ad}_{\sigma^{(j)}}^{2r} \sigma^{(k)}=4^r \sigma^{(k)}\,,
    \end{equation}
    one has
    \begin{equation}
            (\star)=\sum_{r=0}^{+\infty} (-1)^r \frac{(2t)^{2r}}{(2r)!} \sigma^{(k)}-\epsilon_{jkl} \sum_{r=0}^{+\infty} (-1)^r \frac{(2t)^{2r+1}}{(2r+1)!} \sigma^{(l)} 
    \end{equation}
    which is the thesis upon recognizing the Taylor series of sine and cosine functions.
\end{proof}

We use Lemma \ref{lem:ConiugiPesciosi} to deduce the following:
\begin{lemma}\label{lem:savana}
    For any $m \in \N$ let $\phi_m$ as in \eqref{eq:gauge.me.away}, then $\forall t \in \R$, for $\lambda>\left(\frac{300}{C| \nu| }\right)^\tau$ we have
    \begin{gather}
        \phi_m(t) = e^{-\ii \frac{t}{2 |k_{m}|^{p}} \sigma^{(1)}} \psi_0 + r_m(t)\,, \label{eq:seidicannove}\\
        \| r_m(t)\|_{\frak{h}} \leq \frac{1}{ |k_{m}|^{p}} + \frac{2|t|}{ |k_{m}|^{2p}}\,. \label{eq:seiventi}
    \end{gather}
\end{lemma}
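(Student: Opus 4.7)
My plan is to decompose the right-hand side of~\eqref{eq:gauge.me.away} into a constant \emph{resonant} piece of size $\tfrac{1}{2|k_m|^{p}}\sigma^{(1)}$ plus a purely oscillatory remainder, and then to control the resulting error via Duhamel's formula followed by one integration by parts, exploiting the fact that all the oscillation frequencies of the remainder are nonzero and bounded below.

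First, I would apply Lemma~\ref{lem:ConiugiPesciosi} with $j=3,\,k=1,\,l=2$ to rewrite $e^{\ii\sigma^{(3)}t}\sigma^{(1)}e^{-\ii\sigma^{(3)}t}=\cos(2t)\sigma^{(1)}-\sin(2t)\sigma^{(2)}$, and then use the product-to-sum identities on the right-hand side of~\eqref{eq:gauge.me.away}. Writing $\omega_m:=\lambda_m(\alpha k_m^{(1)}-k_m^{(2)})$ and $\omega_m^{\pm}:=\omega_m\pm 2$, this produces a decomposition $H_m(t)=H_{\mathrm{res}}+V(t)$ of the generator, with
\[
H_{\mathrm{res}}:=\tfrac{1}{2|k_m|^{p}}\sigma^{(1)}
\]
the time average, and $V(t)$ an explicit linear combination of $\cos(4t),\sin(4t),\cos(\omega_m^{\pm}t),\sin(\omega_m^{\pm}t)$, each multiplied by an operator of the form $\tfrac{1}{2|k_m|^{p}}\sigma^{(j)}$ with $j\in\{1,2\}$. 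Setting $\phi_m^{(0)}(t):=e^{-\ii H_{\mathrm{res}}t}\psi_0$, which is exactly the leading term in~\eqref{eq:seidicannove}, the error $r_m:=\phi_m-\phi_m^{(0)}$ vanishes at $t=0$ and satisfies $\ii\partial_t r_m=H_m(t)r_m+V(t)\phi_m^{(0)}(t)$.

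By Duhamel, $r_m(t)=-\ii\int_0^t U_{H_m}(t,s)V(s)\phi_m^{(0)}(s)\,\ud s$. I would then integrate by parts using the primitive $W(s):=\int_0^s V(u)\,\ud u$, which satisfies $W(0)=0$ and $W'=V$. Since $\partial_s U_{H_m}(t,s)=\ii U_{H_m}(t,s)H_m(s)$ and $\partial_s\phi_m^{(0)}(s)=-\ii H_{\mathrm{res}}\phi_m^{(0)}(s)$, applying the product rule to $U_{H_m}(t,s)W(s)\phi_m^{(0)}(s)$ and using $H_m(s)W(s)-W(s)H_{\mathrm{res}}=[H_{\mathrm{res}},W(s)]+V(s)W(s)$ yields
\[
r_m(t)=-\ii W(t)\phi_m^{(0)}(t)-\int_0^t U_{H_m}(t,s)\bigl([H_{\mathrm{res}},W(s)]+V(s)W(s)\bigr)\phi_m^{(0)}(s)\,\ud s.
\]

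To estimate both pieces, I would exploit that all frequencies appearing in $V$ are nonzero and uniformly bounded below: the frequency $4$ is constant, while Lemma~\ref{lem:OmeghiniDalBasso}, whose hypothesis is implied by $\lambda_m>(\tfrac{300}{C|\nu|})^{\tau}$, gives $|\omega_m^{\pm}|\geq\tfrac{C|\nu|}{2}\lambda_m^{1/\tau}\geq 150$. Explicit antidifferentiation of $V$ then produces $\|W(s)\|_{\mathrm{op}}\leq\tfrac{1}{|k_m|^{p}}$, the leading contribution coming from the constant frequency $4$ while the $\omega_m^{\pm}$-contributions are suppressed by the $150$-lower bound. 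The same bookkeeping together with $\|H_{\mathrm{res}}\|_{\mathrm{op}}=\tfrac{1}{2|k_m|^{p}}$ and $\|V(s)\|_{\mathrm{op}}\lesssim\tfrac{1}{|k_m|^{p}}$ bounds the integrand of the tail term by $\tfrac{2}{|k_m|^{2p}}$. Using unitarity of $U_{H_m}$ and $\|\phi_m^{(0)}(s)\|_{\mathfrak{h}}=1$, this yields $\|r_m(t)\|_{\mathfrak{h}}\leq\tfrac{1}{|k_m|^{p}}+\tfrac{2|t|}{|k_m|^{2p}}$, as claimed in~\eqref{eq:seiventi}. The main obstacle is the precise tracking of constants: the threshold $(\tfrac{300}{C|\nu|})^{\tau}$ must be calibrated exactly so that the antiderivative $W$ stays below $\tfrac{1}{|k_m|^{p}}$ and the commutator-plus-product remainder stays below $\tfrac{2}{|k_m|^{2p}}$, and this is the step where Lemma~\ref{lem:OmeghiniDalBasso} is used quantitatively rather than merely qualitatively.
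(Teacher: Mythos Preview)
Your approach is correct and essentially coincides with the paper's. Both arguments isolate the resonant piece $\tfrac{1}{2|k_m|^p}\sigma^{(1)}$, observe that the remaining part is a finite sum of pure oscillations with frequencies bounded below (either by $4$ or, via Lemma~\ref{lem:OmeghiniDalBasso}, by $\tfrac{C|\nu|}{2}\lambda_m^{1/\tau}$), and then perform a single integration by parts to split the error into a boundary term of size $|k_m|^{-p}$ plus a time-integral of size $|t|\,|k_m|^{-2p}$. The only cosmetic difference is that the paper first passes to the interaction picture $\xi_m(t)=e^{\ii H_{\mathrm{res}}t}\phi_m(t)$ and writes $\xi_m(t)=\psi_0+q_m(t)$ with $q_m(t)=-\ii\int_0^t Q_m(s)\xi_m(s)\,\ud s$, so its oscillatory operator $Q_m$ already carries the extra $\cos(\tfrac{t}{|k_m|^p}),\sin(\tfrac{t}{|k_m|^p})$ factors from conjugating $\sigma^{(2)}$ by $e^{\ii H_{\mathrm{res}}t}$; you instead keep the Duhamel formula for $r_m$ in the original frame and absorb those factors into the $[H_{\mathrm{res}},W]$ commutator. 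Since $r_m=e^{-\ii H_{\mathrm{res}}t}q_m$, the two error terms have identical norm and the constant tracking is the same.
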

\begin{proof}
    Using Lemma \ref{lem:ConiugiPesciosi} to compute the Hamiltonian in \eqref{eq:gauge.me.away}, and simplifying with Werner's formulas one has
    \begin{equation}
        \begin{split}
            \ii \partial_t \phi_m(t) &= \frac{1}{2 |k_m|^{p}} \Big(\sigma^{(1)}+\cos(4t) \sigma^{(1)}+\cos(\omega^+_m t) \sigma^{(1)}+\cos(\omega^-_m t) \sigma^{(1)} \\ & \qquad\qquad\qquad -\sin(4t) \sigma^{(2)}-\sin(\omega^+_m t) \sigma^{(2)}-\sin(\omega^-_m t) \sigma^{(2)}
\Big)\phi_m(t) ,
        \end{split}
    \end{equation}
    where $\omega^\pm_m := \alpha \lambda_m k^{(1)}_m - \lambda_m k^{(2)}_m \pm 2$.
    We then define $\xi_m(t):= e^{\ii \sigma^{(1)} \frac{t}{2 |k_{m}|^{p}}} \phi_m(t)$, which evolves according to
    \begin{equation}\label{eq:csi.miami}
        \ii \partial_t \xi_m(t) = Q_m(t) \xi_m(t) \, ,
    \end{equation}
    where 
    \begin{equation}\label{eq:tante.care.cose}
        \begin{split}
            Q_m(t)&:=\frac{1}{2|k_m|^p} \Big[\left(\cos(4t) +\cos(\omega^+_m t) +\cos(\omega^-_m t)\right) \sigma^{(1)}\\ & \qquad - \left(\sin(4t) \cos({\textstyle \frac{t}{|k_m|^p}}) + \sin(\omega^+_mt)\cos({\textstyle \frac{t}{|k_m|^p}}) + \sin(\omega^-_m t)\cos({\textstyle \frac{t}{|k_m|^p}}) \right)\sigma^{(2)} \\
            &\qquad + \left(\sin(4t) \sin({\textstyle \frac{t}{|k_m|^p}}) +\sin(\omega^+_m t) \sin({\textstyle \frac{t}{|k_m|^p}}) + \sin(\omega^-_mt) \sin({\textstyle \frac{t}{|k_m|^p}}) \right)\sigma^{(3)}\Big] \, .
        \end{split}
    \end{equation}
    We note that
    \begin{equation}\label{eq:StimaQGrande}
        \sup_{t \in \mathbb{R}} \Vert Q_m(t) \Vert_{\mathrm{op}} \leq \frac{9}{2 |k_m|^{p}} \, .
    \end{equation}
    The explicit solution of \eqref{eq:csi.miami} is
    \begin{equation}
        \xi_m(t)=\xi_m(0)-\ii  \int_0^s Q_m(s) \xi_m(s) \, ds = \xi_m(0)+ q_m(t)\,,
    \end{equation}
    and we now are going to show that
    \begin{equation}\label{eq:normina}
        \Vert q_m(t) \Vert_{\mathfrak{h}} \leq \frac{3}{4 |k_m|^{p}}+\frac{3 \lambda^{-\frac{1}{\tau}}}{C |\nu| | k_m |^p }+ \frac{31 |t|}{ 16 |k_m|^{2p}}+\frac{35 \lambda^{-\frac{1}{\tau}}|t|}{4 C |\nu| | k_m |^{2p} } \, .
    \end{equation}
    We only estimate the terms in the first line of \eqref{eq:tante.care.cose}, the others having analogous upper bounds.  For the first addendum, integrating by parts, one has
    \begin{equation}
         \frac{1}{2 |k_m|^{p}}\int_0^t \cos(4s) \sigma^{(1)} \xi_m(s) \, ds = \frac{1}{8 |k_m|^{p}} \sin(4t) \sigma^{(1)}\xi_m(t) +\frac{\ii}{8 |k_m|^{p}} \int_0^t \sin(4s) \sigma^{(1)} Q_m(s) \xi_m(s) \, ds \, .
    \end{equation}
    Computing the $\frak{h}$ norm, one gets the uniform bounds in time
    \begin{equation}\label{eq:molti}
        \left\Vert \frac{1}{8 |k_m|^{p}} \sin(4t) \sigma^{(1)}\xi_m(s)  \right\Vert_{\mathfrak{h}} \leq \frac{1}{8 |k_m|^{p}}
    \end{equation}
    and
    \begin{equation}\label{eq:conti}
        \left\Vert \frac{\ii}{8 |k_m|^{p}} \int_0^t \sin(4s) \sigma^{(1)} Q_m(s) \xi_m(s) \, ds \right\Vert_{\mathfrak{h}} \leq \frac{1}{8 |k_m|^{p}} \sup_{s \in \mathbb{R}} \Vert Q_m(s) \Vert_{\mathrm{op}} |t| \leq \frac{9|t|}{16 |k_m|^{2p}}\,.
    \end{equation}
    For the second and third addendum, integrating by parts, one has
    \begin{equation}
        \begin{split}
            \frac{1}{2 |k_m|^p} &\int_0^t
            \cos(\omega^\pm_m t) \sigma^{(1)} \xi_m(s) \, ds \\
            &=\frac{1}{2 \omega^\pm_m |k_m|^p} \left( \sin(\omega^\pm_m t) \sigma^{(1)} \xi_m(t) +\ii \int_0^t \sin(\omega^\pm_m s) \sigma^{(1)} Q_m(s) \xi_m(s) \right)\, ds\,.
        \end{split}
    \end{equation}
    We use now Lemma \ref{lem:OmeghiniDalBasso} to estimate $|\omega^\pm_m| \geq \frac{C|\nu|}{2} \lambda_m^{\frac{1}{\tau}}$ and arguing as above we get the estimate
    \begin{equation}\label{eq:molto}
        \left\Vert \frac{1}{2 |k_m|^p} \int_0^t
            \cos(\omega^\pm_m t) \sigma^{(1)} \xi_m(s) \, ds \right\Vert_{\mathfrak{h}} \leq \frac{ \lambda_m^{-\frac{1}{\tau}}}{C|\nu| |k_m|^p}+\frac{9|t| \lambda_m^{-\frac{1}{\tau}}}{2 C |\nu| |k_m|^{2p}}\,.
    \end{equation}
    Using analogous reasonings, one has
    \begin{equation}\label{eq:contosi}
        \begin{split}  
            \left\Vert \frac{1}{2 |k_m|^p} \int_0^t \sin(4s) \cos(\textstyle{\frac{s}{|k_m|^p}}) \sigma^{(2)} \xi_m(s) \, \ud s \right\Vert_{\mathfrak{h}} &\leq \frac{1}{4 |k_m|^p}+\frac{11 |t|}{16 |k_m|^{2p}}\,, \\
            \left\Vert \frac{1}{2 |k_m|^p  }\int_0^t \sin(\omega^\pm_m s) \cos(\textstyle{\frac{s}{|k_m|^p}}) \sigma^{(2)} \xi_m(s) \, ds \right\Vert_{\mathfrak{h}} &\leq \frac{2 \lambda_m^{-\frac{1}{\tau}}}{C|\nu| |k_m|^p}+\frac{11 |t|\lambda_m^{-\frac{1}{\tau}}}{2 C|\nu| |k_m|^{2p}}\,, \\
            \left\Vert \frac{1}{2 |k_m|^p} \int_0^t \sin(4s) \sin(\textstyle{\frac{s}{|k_m|^p}}) \sigma^{(3)} \xi_m(s) \, \ud s \right\Vert_{\mathfrak{h}} &\leq \frac{1}{4 |k_m|^p}+\frac{11 |t|}{16 |k_m|^{2p}}\,, \\
            \left\Vert \frac{1}{2 |k_m|^p  }\int_0^t \sin(\omega^\pm_m s) \sin(\textstyle{\frac{s}{|k_m|^p}}) \sigma^{(3)} \xi_m(s) \, ds \right\Vert_{\mathfrak{h}} &\leq \frac{2 \lambda^{-\frac{1}{\tau}}}{C|\nu| |k_m|^p}+\frac{11 |t|\lambda^{-\frac{1}{\tau}}}{2 C|\nu| |k_m|^{2p}} \,.
        \end{split}        
    \end{equation}
    Using now that $\lambda>\left(\frac{300}{C|\nu|} \right)^\tau$ and combining \eqref{eq:molti}, \eqref{eq:conti}, \eqref{eq:molto}, and \eqref{eq:contosi}, we get
    \begin{equation}\label{eq:SecondaNormina}
        \Vert q_m(t) \Vert_{\mathfrak{h}} \leq \frac{1}{|k_m|^p}+\frac{2|t|}{|k_m|^{2p}} \, .
    \end{equation}  
    
     We are now ready to prove the statement. Indeed, unfolding back the gauge transformations and noting that $\xi_m(0)=\phi_m(0)=\psi_m(0)$, one has
     \begin{equation}
     	\phi_m(t)=e^{-\ii \sigma^{(1)} \frac{t}{2 |k_m|^{p}}} \phi_m(0)+e^{-\ii \sigma^{(1)} \frac{t}{2 |k_m|^{p}}} q_m(t)
     \end{equation}
     Calling now $r_m(t)=e^{-\ii \sigma^{(1)} \frac{t}{2 |k_m|^{p}}} q_m(t)$, we proved \eqref{eq:seidicannove} and since $e^{-\ii \sigma^{(1)} \frac{t}{2 |k_m|^{p}}}$ is unitary, $\Vert r_m(t) \Vert_{\mathfrak{h}} = \Vert q_m(t) \Vert_{\mathfrak{h}}$ that can be bounded by \eqref{eq:SecondaNormina} and this proves \eqref{eq:seiventi}.
\end{proof}
Once we know $\phi_m(t)$, we are ready to prove Proposition \ref{prop:ricrescita}:
\begin{proof}[Proof of Proposition \ref{prop:ricrescita}]
Recalling definitions of $M_m$ and $\phi_m$ in \eqref{eq:magma} and \eqref{eq:im.better}, one has
$$
M_m(t) = \langle \sigma^{(3)} \psi_m(t), \psi_m(t) \rangle = \langle \sigma^{(3)} e^{-\ii \sigma^{(3)}t } \phi_m(t), e^{- \ii \sigma^{(3)}t } \phi_m(t) \rangle = \langle \sigma^{(3)} \phi_m(t), \phi_m(t) \rangle\,.
$$
Using Lemma \ref{lem:savana}, we compute
\begin{equation}\label{eq:m'n'm's}
    \begin{aligned}
    M_m(t) & = \langle \sigma^{(3)} e^{-\ii \frac{t}{2 |k_m|^{p}} \sigma^{(1)}} \psi_0, e^{-\ii \frac{t}{2 |k_m|^{p}} \sigma^{(1)}} \psi_0 \rangle + \rho_m(t)\,,\\
    \rho_m(t)& := 2 \Ree\,\langle \sigma^{(3)} e^{-\ii \frac{t}{2 |k_m|^{p}} \sigma^{(1)}} \psi_0, r_m(t) \rangle + \langle \sigma^{(3)} r_m(t), r_m(t) \rangle\,.
\end{aligned}
\end{equation}
Now, up to taking $|t| \leq |k_m|^{p}$ and $|k_m|$ large enough, from \eqref{eq:seiventi} one has $\|r_m(t)\|_{\mathfrak{h}} \leq \frac{3}{|k_m|^{p}} \leq 1,$ therefore
\begin{equation}\label{eq:e.piccolo}
    |\rho_m(t)| \leq 2 \|r_m(t)\|_{\mathfrak{h}} + \|r_m(t)\|_{\mathfrak{h}}^2 \leq 3 \|r_m(t)\|_{\mathfrak{h}} \leq \frac{9}{ |k_m|^{p}}\,.
\end{equation}
Moreover, using Lemma \ref{lem:ConiugiPesciosi}, one has
\begin{equation}\label{eq:oscilla}
   \begin{aligned}
    \langle \sigma^{(3)} e^{-\ii \frac{t}{|k_m|^{p}} \sigma^{(1)}} \psi_0, e^{-\ii \frac{t}{|k_m|^{p}} \sigma^{(1)}} \psi_0 \rangle & = \left \langle \cos\left(\frac{2 t}{|k_m|^{p}} \right) \sigma^{(3)} \psi_0, \psi_0 \right\rangle + \left\langle \sin \left( \frac{2t}{|k_m|^{p}}\right) \sigma^{(2)} \psi_0, \psi_0\right\rangle\\
    & =  -\cos\left(\frac{2 t}{|k_m|^{p}} \right) \,.
\end{aligned} 
\end{equation}
Then, choosing $t_m = \frac{\pi}{4} |k_m|^{p}$ and combining \eqref{eq:m'n'm's},  \eqref{eq:oscilla}, and \eqref{eq:e.piccolo}, we have $\cos({2 t_m}|k_m|^{-p} ) = 0$ and
$$
M_m(t) - M_m(0) = 1-\cos\left( \frac{2 t_m}{ |k_m|^{p}} \right) + \rho_m(t_m) \geq 1 - |\rho_m(t_m)| \geq \frac{1}{2}\,,
$$
up to increasing again the size of $|k_m|$ .
Then to prove Propostion \ref{prop:ricrescita} it remains to observe that, by estimate \eqref{eq:il.lambda.che.ci.piace}, one has
$$
t_m \in \left[ \frac{\pi}{4}\left(\frac{\gamma}{2}\right)^{\frac{p}{\tau}}\lambda_m^{\frac{p}{\tau}},\ \frac{\pi}{4} |\nu|^{\frac{p}{\tau - \epsilon}} \lambda_m^{\frac{p}{\tau - \epsilon}}\right] \subset \left[ \frac{\pi}{4}\left(\frac{\gamma}{2}\right)^{\frac{p}{\tau}}\lambda_m^{\frac{p}{\tau}},\ \frac{\pi}{4} |\nu|^{\frac{2 p}{\tau}}  \lambda_m^{\frac{p}{\tau} + \epsilon}\right]\,.
$$

\vspace{-0.4cm}
\end{proof}

\vspace{0.4cm}
\noindent
\textbf{Acknowledgments.} We acknowledge support of the Italian Group of Mathematical Physics GNFM (INdAM). B.L.~is supported by PRIN 2022 (2022HSSYPN). We thank Ricardo~Grande for pointing out stimulating comments during the final phase of the preparation of this paper.

\footnotesize

\bibliographystyle{siam}
\bibliography{DraftBib.bib}

\begin{thebibliography}{10}

\bibitem{Abanin2017}
{\sc D.~Abanin, W.~De~Roeck, W.~W. Ho, and F.~Huveneers}, {\em A rigorous
  theory of many-body prethermalization for periodically driven and closed
  quantum systems}, Communications in Mathematical Physics, 354 (2017),
  p.~809–827.

\bibitem{Bambusi2020}
{\sc D.~Bambusi, B.~Grébert, A.~Maspero, and D.~Robert}, {\em Growth of
  sobolev norms for abstract linear schr\"{o}dinger equations}, Journal of the
  European Mathematical Society, 23 (2020), p.~557–583.

\bibitem{Barbieri-Marco-Massetti}
{\sc S.~Barbieri, J.-P. Marco, and J.~E. Massetti}, {\em Analytic smoothing and
  nekhoroshev estimates for h\"{o}lder steep hamiltonians}, Communications in
  Mathematical Physics, 396 (2022), p.~349–381.

\bibitem{Beatrez2021}
{\sc W.~Beatrez, O.~Janes, A.~Akkiraju, A.~Pillai, A.~Oddo, P.~Reshetikhin,
  E.~Druga, M.~McAllister, M.~Elo, B.~Gilbert, D.~Suter, and A.~Ajoy}, {\em
  Floquet prethermalization with lifetime exceeding 90 s in a bulk
  hyperpolarized solid}, Physical Review Letters, 127 (2021).

\bibitem{Blekher1992}
{\sc P.~M. Blekher, H.~R. Jauslin, and J.~L. Lebowitz}, {\em Floquet spectrum
  for two-level systems in quasiperiodic time-dependent fields}, Journal of
  Statistical Physics, 68 (1992), p.~271–310.

\bibitem{Bukov2015}
{\sc M.~Bukov, L.~D’Alessio, and A.~Polkovnikov}, {\em Universal
  high-frequency behavior of periodically driven systems: from dynamical
  stabilization to floquet engineering}, Advances in Physics, 64 (2015),
  p.~139–226.

\bibitem{Chierchia-LN}
{\sc L.~Chierchia}, {\em Kam lectures}, in Dynamical Systems part I, N.~E. del
  libro, ed., Pubbl. Cent. Ric. Mat. Ennio Giorgi, 2003, pp.~pp. 1--55.

\bibitem{Choi2020}
{\sc J.~Choi, H.~Zhou, H.~S. Knowles, R.~Landig, S.~Choi, and M.~D. Lukin},
  {\em Robust dynamic hamiltonian engineering of many-body spin systems},
  Physical Review X, 10 (2020).

\bibitem{Choi2017}
{\sc S.~Choi, J.~Choi, R.~Landig, G.~Kucsko, H.~Zhou, J.~Isoya, F.~Jelezko,
  S.~Onoda, H.~Sumiya, V.~Khemani, C.~von Keyserlingk, N.~Y. Yao, E.~Demler,
  and M.~D. Lukin}, {\em Observation of discrete time-crystalline order in a
  disordered dipolar many-body system}, Nature, 543 (2017), p.~221–225.

\bibitem{Dumitrescu2018}
{\sc P.~T. Dumitrescu, R.~Vasseur, and A.~C. Potter}, {\em Logarithmically slow
  relaxation in quasiperiodically driven random spin chains}, Physical Review
  Letters, 120 (2018).

\bibitem{Dutta2025}
{\sc P.~Dutta, S.~Choudhury, and V.~Shukla}, {\em Prethermalization in the pxp
  model under continuous quasiperiodic driving}, Physical Review B, 111 (2025).

\bibitem{DAlessio2014}
{\sc L.~D’Alessio and M.~Rigol}, {\em Long-time behavior of isolated
  periodically driven interacting lattice systems}, Physical Review X, 4
  (2014).

\bibitem{Eckardt2005}
{\sc A.~Eckardt, C.~Weiss, and M.~Holthaus}, {\em Superfluid-insulator
  transition in a periodically driven optical lattice}, Physical Review
  Letters, 95 (2005).

\bibitem{Eisert2015}
{\sc J.~Eisert, M.~Friesdorf, and C.~Gogolin}, {\em Quantum many-body systems
  out of equilibrium}, Nature Physics, 11 (2015), p.~124–130.

\bibitem{Else2016}
{\sc D.~V. Else, B.~Bauer, and C.~Nayak}, {\em Floquet time crystals}, Physical
  Review Letters, 117 (2016).

\bibitem{Else2020}
{\sc D.~V. Else, W.~W. Ho, and P.~T. Dumitrescu}, {\em Long-lived interacting
  phases of matter protected by multiple time-translation symmetries in
  quasiperiodically driven systems}, Physical Review X, 10 (2020).

\bibitem{Erds2024}
{\sc L.~Erdős, J.~Henheik, J.~Reker, and V.~Riabov}, {\em Prethermalization
  for deformed wigner matrices}, Annales Henri Poincaré, 26 (2024),
  p.~1991–2033.

\bibitem{Fang2025}
{\sc J.~Fang, Q.~Zhou, and X.~Wen}, {\em Phase transitions in quasiperiodically
  driven quantum critical systems: Analytical results}, Physical Review B, 111
  (2025).

\bibitem{Gallone-Langella-2024}
{\sc M.~Gallone and B.~Langella}, {\em Prethermalization and conservation laws
  in quasi-periodically driven quantum systems}, Journal of Statistical
  Physics, 191 (2024).

\bibitem{Geier2021}
{\sc S.~Geier, N.~Thaicharoen, C.~Hainaut, T.~Franz, A.~Salzinger, A.~Tebben,
  D.~Grimshandl, G.~Z\"{u}rn, and M.~Weidem\"{u}ller}, {\em Floquet hamiltonian
  engineering of an isolated many-body spin system}, Science, 374 (2021),
  p.~1149–1152.

\bibitem{Harper2020}
{\sc F.~Harper, R.~Roy, M.~S. Rudner, and S.~Sondhi}, {\em Topology and broken
  symmetry in floquet systems}, Annual Review of Condensed Matter Physics, 11
  (2020), p.~345–368.

\bibitem{He2023}
{\sc G.~He, B.~Ye, R.~Gong, Z.~Liu, K.~W. Murch, N.~Y. Yao, and C.~Zu}, {\em
  Quasi-floquet prethermalization in a disordered dipolar spin ensemble in
  diamond}, Physical Review Letters, 131 (2023).

\bibitem{He2025}
{\sc G.~He, B.~Ye, R.~Gong, C.~Yao, Z.~Liu, K.~W. Murch, N.~Y. Yao, and C.~Zu},
  {\em Experimental realization of discrete time quasicrystals}, Physical
  Review X, 15 (2025).

\bibitem{Ho2023}
{\sc W.~W. Ho, T.~Mori, D.~A. Abanin, and E.~G. Dalla~Torre}, {\em Quantum and
  classical floquet prethermalization}, Annals of Physics, 454 (2023),
  p.~169297.

\bibitem{Ho2018}
{\sc W.~W. Ho, I.~Protopopov, and D.~A. Abanin}, {\em Bounds on energy
  absorption and prethermalization in quantum systems with long-range
  interactions}, Physical Review Letters, 120 (2018).

\bibitem{Holthaus2015}
{\sc M.~Holthaus}, {\em Floquet engineering with quasienergy bands of
  periodically driven optical lattices}, Journal of Physics B: Atomic,
  Molecular and Optical Physics, 49 (2015), p.~013001.

\bibitem{Kumar2024}
{\sc S.~Kumar and S.~Choudhury}, {\em Prethermalization in aperiodically driven
  classical spin systems}, Physical Review E, 110 (2024).

\bibitem{Kyprianidis2021}
{\sc A.~Kyprianidis, F.~Machado, W.~Morong, P.~Becker, K.~S. Collins, D.~V.
  Else, L.~Feng, P.~W. Hess, C.~Nayak, G.~Pagano, N.~Y. Yao, and C.~Monroe},
  {\em Observation of a prethermal discrete time crystal}, Science, 372 (2021),
  p.~1192–1196.

\bibitem{Zhuo-preprint}
{\sc Z.-Y. Li and Y.-R. Zhang}, {\em Prethermal discrete time crystals in
  one-dimensional classical floquet systems with nearest-neighbor
  interactions}, 2025.

\bibitem{QP-esperimento2}
{\sc Z.-H. Liu, Y.~Liu, G.-H. Liang, C.-L. Deng, K.~Chen, Y.-H. Shi, T.-M. Li,
  L.~Zhang, B.-J. Chen, C.-P. Fang, D.~Feng, X.-Y. Gu, Y.~He, K.~Huang, H.~Li,
  H.-T. Liu, L.~Li, Z.-Y. Mei, Z.-Y. Peng, J.-C. Song, M.-C. Wang, S.-L. Wang,
  Z.~Wang, Y.~Xiao, M.~Xu, Y.-S. Xu, Y.~Yan, Y.-H. Yu, W.-P. Yuan, J.-C. Zhang,
  J.-J. Zhao, K.~Zhao, S.-Y. Zhou, Z.-A. Wang, X.~Song, Y.~Tian, F.~Mintert,
  J.~Knolle, R.~Moessner, Y.-R. Zhang, P.~Zhang, Z.~Xiang, D.~Zheng, K.~Xu,
  H.~Zhao, and H.~Fan}, {\em Prethermalization by random multipolar driving on
  a 78-qubit superconducting processor}, 2025.

\bibitem{Long2021}
{\sc D.~M. Long, P.~J. Crowley, and A.~Chandran}, {\em Nonadiabatic topological
  energy pumps with quasiperiodic driving}, Physical Review Letters, 126
  (2021).

\bibitem{Machado2020}
{\sc F.~Machado, D.~V. Else, G.~D. Kahanamoku-Meyer, C.~Nayak, and N.~Y. Yao},
  {\em Long-range prethermal phases of nonequilibrium matter}, Physical Review
  X, 10 (2020).

\bibitem{Machado2019}
{\sc F.~Machado, G.~D. Kahanamoku-Meyer, D.~V. Else, C.~Nayak, and N.~Y. Yao},
  {\em Exponentially slow heating in short and long-range interacting floquet
  systems}, Physical Review Research, 1 (2019).

\bibitem{AltroQP}
{\sc D.~Marripour and J.~Abouie}, {\em From time crystals to time
  quasicrystals: Exploring novel phases in transverse field ising chains},
  2025.

\bibitem{Xiao-Mi-2021}
{\sc X.~Mi, M.~Ippoliti, C.~Quintana, A.~Greene, Z.~Chen, J.~Gross, F.~Arute,
  K.~Arya, J.~Atalaya, R.~Babbush, J.~C. Bardin, J.~Basso, A.~Bengtsson,
  A.~Bilmes, A.~Bourassa, L.~Brill, M.~Broughton, B.~B. Buckley, D.~A. Buell,
  B.~Burkett, N.~Bushnell, B.~Chiaro, R.~Collins, W.~Courtney, D.~Debroy,
  S.~Demura, A.~R. Derk, A.~Dunsworth, D.~Eppens, C.~Erickson, E.~Farhi, A.~G.
  Fowler, B.~Foxen, C.~Gidney, M.~Giustina, M.~P. Harrigan, S.~D. Harrington,
  J.~Hilton, A.~Ho, S.~Hong, T.~Huang, A.~Huff, W.~J. Huggins, L.~B. Ioffe,
  S.~V. Isakov, J.~Iveland, E.~Jeffrey, Z.~Jiang, C.~Jones, D.~Kafri,
  T.~Khattar, S.~Kim, A.~Kitaev, P.~V. Klimov, A.~N. Korotkov, F.~Kostritsa,
  D.~Landhuis, P.~Laptev, J.~Lee, K.~Lee, A.~Locharla, E.~Lucero, O.~Martin,
  J.~R. McClean, T.~McCourt, M.~McEwen, K.~C. Miao, M.~Mohseni, S.~Montazeri,
  W.~Mruczkiewicz, O.~Naaman, M.~Neeley, C.~Neill, M.~Newman, M.~Y. Niu, T.~E.
  O’Brien, A.~Opremcak, E.~Ostby, B.~Pato, A.~Petukhov, N.~C. Rubin, D.~Sank,
  K.~J. Satzinger, V.~Shvarts, Y.~Su, D.~Strain, M.~Szalay, M.~D. Trevithick,
  B.~Villalonga, T.~White, Z.~J. Yao, P.~Yeh, J.~Yoo, A.~Zalcman, H.~Neven,
  S.~Boixo, V.~Smelyanskiy, A.~Megrant, J.~Kelly, Y.~Chen, S.~L. Sondhi,
  R.~Moessner, K.~Kechedzhi, V.~Khemani, and P.~Roushan}, {\em Time-crystalline
  eigenstate order on a quantum processor}, Nature, 601 (2021), p.~531–536.

\bibitem{Moessner2017}
{\sc R.~Moessner and S.~L. Sondhi}, {\em Equilibration and order in quantum
  floquet matter}, Nature Physics, 13 (2017), p.~424–428.

\bibitem{Mori2021}
{\sc T.~Mori, H.~Zhao, F.~Mintert, J.~Knolle, and R.~Moessner}, {\em Rigorous
  bounds on the heating rate in thue-morse quasiperiodically and randomly
  driven quantum many-body systems}, Physical Review Letters, 127 (2021).

\bibitem{Oka2019}
{\sc T.~Oka and S.~Kitamura}, {\em Floquet engineering of quantum materials},
  Annual Review of Condensed Matter Physics, 10 (2019), p.~387–408.

\bibitem{Peng2021}
{\sc P.~Peng, C.~Yin, X.~Huang, C.~Ramanathan, and P.~Cappellaro}, {\em Floquet
  prethermalization in dipolar spin chains}, Nature Physics, 17 (2021),
  p.~444–447.

\bibitem{Potirniche2017}
{\sc I.-D. Potirniche, A.~Potter, M.~Schleier-Smith, A.~Vishwanath, and
  N.~Yao}, {\em Floquet symmetry-protected topological phases in cold-atom
  systems}, Physical Review Letters, 119 (2017).

\bibitem{Qi2021}
{\sc Z.~Qi, G.~Refael, and Y.~Peng}, {\em Universal nonadiabatic energy pumping
  in a quasiperiodically driven extended system}, Physical Review B, 104
  (2021).

\bibitem{Randall2021}
{\sc J.~Randall, C.~E. Bradley, F.~V. van~der Gronden, A.~Galicia, M.~H.
  Abobeih, M.~Markham, D.~J. Twitchen, F.~Machado, N.~Y. Yao, and T.~H.
  Taminiau}, {\em Many-body–localized discrete time crystal with a
  programmable spin-based quantum simulator}, Science, 374 (2021),
  p.~1474–1478.

\bibitem{Rechtsman2013}
{\sc M.~C. Rechtsman, J.~M. Zeuner, Y.~Plotnik, Y.~Lumer, D.~Podolsky,
  F.~Dreisow, S.~Nolte, M.~Segev, and A.~Szameit}, {\em Photonic floquet
  topological insulators}, Nature, 496 (2013), p.~196–200.

\bibitem{DeRoeck-Verreet}
{\sc W.~D. Roeck and V.~Verreet}, {\em Very slow heating for weakly driven
  quantum many-body systems}, 2019.

\bibitem{RubioAbadal2020}
{\sc A.~Rubio-Abadal, M.~Ippoliti, S.~Hollerith, D.~Wei, J.~Rui, S.~Sondhi,
  V.~Khemani, C.~Gross, and I.~Bloch}, {\em Floquet prethermalization in a
  bose-hubbard system}, Physical Review X, 10 (2020).

\bibitem{Rudner2020}
{\sc M.~S. Rudner and N.~H. Lindner}, {\em Band structure engineering and
  non-equilibrium dynamics in floquet topological insulators}, Nature Reviews
  Physics, 2 (2020), p.~229–244.

\bibitem{Salamon2004}
{\sc D.~A. Salamon}, {\em The kolmogorov-arnold-moser theorem.}, Mathematical
  Physics Electronic Journal [electronic only], 10 (2004), pp.~Paper No. 3, 37
  p.--Paper No. 3, 37 p.

\bibitem{Singh2019}
{\sc K.~Singh, C.~Fujiwara, Z.~Geiger, E.~Simmons, M.~Lipatov, A.~Cao,
  P.~Dotti, S.~Rajagopal, R.~Senaratne, T.~Shimasaki, M.~Heyl, A.~Eckardt, and
  D.~Weld}, {\em Quantifying and controlling prethermal nonergodicity in
  interacting floquet matter}, Physical Review X, 9 (2019).

\bibitem{Weidinger2017}
{\sc S.~A. Weidinger and M.~Knap}, {\em Floquet prethermalization and regimes
  of heating in a periodically driven, interacting quantum system}, Scientific
  Reports, 7 (2017).

\bibitem{Yao2017}
{\sc N.~Yao, A.~Potter, I.-D. Potirniche, and A.~Vishwanath}, {\em Discrete
  time crystals: Rigidity, criticality, and realizations}, Physical Review
  Letters, 118 (2017).

\bibitem{Ye2021}
{\sc B.~Ye, F.~Machado, and N.~Y. Yao}, {\em Floquet phases of matter via
  classical prethermalization}, Physical Review Letters, 127 (2021).

\bibitem{Zhang2017}
{\sc J.~Zhang, P.~W. Hess, A.~Kyprianidis, P.~Becker, A.~Lee, J.~Smith,
  G.~Pagano, I.-D. Potirniche, A.~C. Potter, A.~Vishwanath, N.~Y. Yao, and
  C.~Monroe}, {\em Observation of a discrete time crystal}, Nature, 543 (2017),
  p.~217–220.

\bibitem{Zhao2021}
{\sc H.~Zhao, F.~Mintert, R.~Moessner, and J.~Knolle}, {\em Random multipolar
  driving: Tunably slow heating through spectral engineering}, Physical Review
  Letters, 126 (2021).

\bibitem{Qp-experiment-2025}
{\sc D.-Y. Zhu, Z.-Y. Zhang, Q.-F. Wang, Y.~Ma, T.-Y. Han, C.~Yu, Q.-Q. Fang,
  S.-Y. Shao, Q.~Li, Y.-J. Wang, J.~Zhang, H.-C. Chen, X.~Liu, J.-D. Nan, Y.-M.
  Yin, L.-H. Zhang, G.-C. Guo, B.~Liu, D.-S. Ding, and B.-S. Shi}, {\em
  Observation of discrete time quasicrystal in rydberg atomic gases}, 2025.

\end{thebibliography}

\end{document}